\newtheorem{theorem}{Theorem}
\newtheorem{lemma}[theorem]{Lemma}
\newtheorem{proposition}[theorem]{Proposition}
\theoremstyle{definition}
\newtheorem{remark}[theorem]{Remark}
\newcommand{\switch}[2]{#2} 
\newcommand{\I}{{\cal I}}
\newcommand{\opt}{{\rm OPT}}
\newcommand{\optvc}{\tau}
\newcommand{\wst}{{\rm WST}}
\newcommand{\alg}{{\rm ALG}}
\newcommand{\app}{{\rm APPROX}}
\newcommand{\state}{{\sf state}}
\title{
Incomplete List Setting of the Hospitals/Residents Problem with Maximally Satisfying Lower Quotas
\thanks{This work was partially supported by the joint project of Kyoto University and Toyota Motor Corporation, titled ``Advanced Mathematical Science for Mobility Society''.}} 
\author{
Kazuhisa Makino\thanks{Research Institute for Mathematical Sciences, Kyoto University, Kyoto
606-8502, Japan., E-mail: {\tt makino@kurims.kyoto-u.ac.jp}, Supported by JSPS KAKENHI Grant Numbers JP20H05967, JP19K22841, and JP20H00609.}
\and
Shuichi Miyazaki\thanks{Graduate School of Information Science, University of Hyogo
8-2-1 Gakuennishi-machi Nishi-ku Kobe, Hyogo, 651-2197, Japan, E-mail: {\tt shuichi@sis.u-hyogo.ac.jp}, Supported by JSPS KAKENHI Grant Number JP20K11677.}
\and
Yu Yokoi\thanks{Principles of Informatics Research Division, National Institute of Informatics, Tokyo 101-8430, Japan, E-mail: {\tt yokoi@nii.ac.jp}, Supported by JSPS KAKENHI Grant Number JP18K18004 and JST, PRESTO Grant Number JPMJPR212B.}}
\begin{document}
\maketitle  
\begin{abstract}
To mitigate the imbalance in the number of assignees in the Hospitals/Residents problem, Goko et al.~[Goko et al., Maximally Satisfying Lower Quotas in the Hospitals/Residents Problem with Ties, Proc.~STACS 2022, pp.~31:1--31:20] studied the Hospitals/Residents problem with lower quotas whose goal is to find a stable matching that satisfies lower quotas as much as possible.
In their paper, preference lists are assumed to be complete, that is, the preference list of each resident (resp., hospital) is assumed to contain all the hospitals (resp., residents).

In this paper, we study a more general model where preference lists may be incomplete.
For four natural scenarios, we obtain maximum gaps of the best and worst solutions, approximability results, and inapproximability results.
\end{abstract}
%
%
%

\section{Introduction}\label{sec:intro}
The Hospitals/Residents problem is a many-to-one generalization of the stable marriage model of Gale and Shapley \cite{10.2307/2312726}, which is widely used as a base for many real-world assignment systems, such as assigning residents to hospitals \cite{RePEc:ucp:jpolec:v:92:y:1984:i:6:p:991-1016} and students to schools \cite{10.1257/000282805774670167, 10.1257/000282805774669637}.
In applications, one of the major drawbacks is the imbalance in the number of assignees.
For example, in the hospitals-residents assignment, hospitals in urban areas are popular and are likely to receive many residents, while those in rural areas suffer from a shortage of doctors.
To cope with this problem, several approaches 
have been proposed.
One of them is to introduce regional caps, where hospitals in the same district are put into the same set, called a region, and each region as well as each hospital has an upper quota \cite{RePEc:sip:dpaper:10-011}.
Another example is to let each hospital $h$ declare not only an upper quota $u(h)$ but also a lower quota $\ell(h)$, and require the number of assignees for each hospital to be between its upper and lower quotas \cite{DBLP:journals/tcs/BiroFIM10, DBLP:journals/algorithmica/HamadaIM16}.
In general, there does not exist a stable matching satisfying all the upper and lower quotas, so Bir{\'{o}} et al.~\cite{DBLP:journals/tcs/BiroFIM10} allows some hospitals to be closed (i.e., to accept no resident) while Hamada et al.~\cite{DBLP:journals/algorithmica/HamadaIM16} allows a matching to be unstable and tries to minimize the number of blocking pairs.
However, in reality, it is too pessimistic to regard lower quotas as hard constraints.
In some cases, fewer residents than a lower quota may still be able to keep a hospital in operation, by sacrificing service level to some degree.

Taking this observation into account, recently Goko et al.~\cite{goko_et_al:LIPIcs.STACS.2022.31} (including the authors of the present paper) investigated an optimization problem that considers lower quotas as soft constraints.
The problem is called {\em the Hospitals/Residents problem with Ties to Maximally Satisfy Lower Quotas} (HRT-MSLQ for short).
In this problem, preference lists of agents may contain ties and the {\em satisfaction ratio} (or the {\em score}) of a hospital $h$ in a stable matching $M$ is defined as $s_{M}(h) = \min\{ 1, \frac{|M(h)|}{\ell(h)} \}$, where $|M(h)|$ is the number of residents assigned to $h$ in $M$ and $s_M(h)=1$ if $\ell(h)=0$.
That is, each hospital $h$ is assumed to have $\ell(h)$ seats corresponding to the lower quota, and the satisfaction ratio reflects the fraction of occupied seats among them.
The {\em total satisfaction ratio} (also called {\em the score}) of a stable matching $M$ is the sum of the scores of all hospitals, and the goal of HRT-MSLQ is to maximize it among all the stable matchings.
(See  Section~\ref{sec:definition} for the formal definition.) 
Note that if the preference lists of agents are strict, the problem is trivial because each hospital receives the same number of residents in any stable matching due to the {\em rural hospitals theorem} \cite{DBLP:journals/dam/GaleS85,RePEc:ucp:jpolec:v:92:y:1984:i:6:p:991-1016,RePEc:ecm:emetrp:v:54:y:1986:i:2:p:425-27}, and hence all the stable matchings have the same score.
As the problem HRT-MSLQ turns out to be NP-hard in general, Goko et al. proposed an approximation algorithm called {\sc Double Proposal} and compared the worst approximation factors of this algorithm and a baseline algorithm that first breaks ties arbitrarily and then applies the ordinary Gale--Shapley algorithm.
They showed tight approximation factors of these two algorithms for the following four scenarios: (i) \emph{general model}, which consists of all problem instances, (ii) \emph{uniform model}, in which all hospitals have the same upper and lower quotas, (iii) \emph{marriage model}, in which each hospital has an upper quota of $1$ and a lower quota of either $0$ or $1$, and (iv) resident-side master list model (\emph{$R$-side ML model}), in which there is a {\em master preference list} over hospitals and a preference list of every resident coincides with it.
Their algorithm {\sc Double Proposal} solves the problem exactly for instances of \emph{$R$-side ML model} and, for each of the other three models, it attains a better approximation factor than that of the baseline algorithm. 

Throughout Goko et al.'s paper \cite{goko_et_al:LIPIcs.STACS.2022.31}, it is assumed that all preference lists are complete. That is, the preference list of each resident (resp., hospital) is assumed to contain all the hospitals (resp., residents).
Under this assumption, all stable matchings have the same cardinality, and hence the objective value is affected only by the {\em balance} of the numbers of residents assigned to hospitals.
Therefore, to obtain a better objective value, we can focus on rectifying the imbalance. 

In reality, however, preference lists of agents can be incomplete. 
In real medical residency matching schemes, there are huge numbers of residents and hospitals participating, and the length of preference lists of residents tend to be much smaller than the number of hospitals. 
This motivates us to study a more natural but challenging incomplete list setting.
When preference lists are incomplete and contain ties, the problem becomes harder since the stable matchings vary on their cardinalities and hence the objective value depends not only on balance but also on size.
In fact, a special case of our problem where $\ell(h)=u(h)=1$ for every hospital $h$ reduces to MAX-SMTI \cite{DBLP:journals/tcs/ManloveIIMM02}, a well-studied NP-hard problem that asks to find a maximum size stable matching in the stable marriage model with ties and incomplete lists.

For instances with incomplete lists, the algorithm {\sc Double Proposal} in \cite{goko_et_al:LIPIcs.STACS.2022.31} is still well-defined and finds a stable matching in linear time.
However, the analyses of the approximation factors do not work anymore.
Indeed, the algorithm fails to attain the approximation factors shown there for this generalized setting.
In this paper, we propose a new algorithm {\sc Triple Proposal}, which is based on {\sc Double Proposal} and equipped with additional operations to increase the number of matched residents, which leads to a larger objective value.
As mentioned above, our problem is a common generalization of the problem of Goko et al.~\cite{goko_et_al:LIPIcs.STACS.2022.31} and MAX-SMTI.
Our algorithm {\sc Triple Proposal} can be regarded as a natural integration of {\sc Double Proposal} and Kir\'aly's algorithm \cite{DBLP:journals/algorithms/Kiraly13} for MAX-SMTI.

Our results are summarized in Table \ref{table1}, where $n$ denotes the number of residents in an input instance.
For each of the four models mentioned above, the first row shows the maximum gap, i.e., the ratio of objective values of the best and worst stable matchings. The maximum gap can be regarded as the worst-case approximation factor of the baseline algorithm. (This fact follows from the rural hospitals theorem and is proved formally in \cite{Goko}, which is a full version of \cite{goko_et_al:LIPIcs.STACS.2022.31}.)
The second row shows lower bounds of the approximation factors of {\sc Double Proposal} in our incomplete list setting.
It can be observed that in the uniform and marriage models, {\sc Double Proposal}  performs as badly as the baseline algorithm, in contrast to the fact that it performs better than the baseline algorithm in all of the four models in the complete list setting \cite{goko_et_al:LIPIcs.STACS.2022.31}.
The third row gives the approximation factors of {\sc Triple Proposal}, which are the main results of this paper.
Here, $\phi(n)$ is a function on $n$ that is approximately $\frac{n+2}{3}$; its formal definition is given just before Theorem \ref{thm:general-approximable} in Section~\ref{subsec:general}.
Note that, in all four models, the approximation factors are improved from {\sc Double Proposal}.
Finally, the last row shows inapproximability results under the assumption that P$\neq$NP.
It is worth emphasizing that, as a byproduct of our inapproximability result, lower bounds on the approximation factor of MAX-SMTI (under P$\neq$NP) can be improved (see Remark~\ref{rem:SMTI}).

Note that, in the incomplete list setting, the $R$-side ML model means that each resident's preference list is obtained by deleting unacceptable hospitals from the master preference list.
This model shows a remarkable difference between complete list setting and incomplete list setting: the model is solved exactly by {\sc Double Proposal} in the former case \cite{goko_et_al:LIPIcs.STACS.2022.31}, while it has the same inapproximability as the general model in the latter case.

\renewcommand\arraystretch{0.9}
\begin{center}
\begin{table}[htbp]
  \centering
    \begin{tabular}{|c| c| c| c| c|} \hline
   & General  &\ Uniform \ &\  Marriage\ & \ $R$-side ML\  \\\hline\hline
   \begin{tabular}{c}~\vspace{-2mm}\\
   Maximum gap
   \vspace{-0.5mm}\\
   {\scriptsize (i.e.,  approx. factor of} \\[-.1cm]
   {\scriptsize arbitrary tie-break+GS)}
   \end{tabular}

   &\begin{tabular}{c}~\vspace{-1mm}\\
   $n+1$ \\
   \tiny(Thm.\ref{thm:maxgap_gen})
   \end{tabular}
   &\begin{tabular}{c}~\vspace{-1mm}\\
   $\theta+1$ \\
   \tiny(Thm.\ref{thm:maxgap_uni})
   \end{tabular}
   &\begin{tabular}{c}~\vspace{-1mm}\\
   $2$ \\
   \tiny(Thm.\ref{thm:maxgap_mar})
   \end{tabular}
   &\begin{tabular}{c}~\vspace{-1mm}\\
   $n+1$ \\
   \tiny(Thm.\ref{thm:maxgap_gen})
   \end{tabular}
\\ \hline 
   \begin{tabular}{c}~\vspace{-3mm}\\
   \hspace{-2mm}LB of approx. factor\hspace{-3mm}\\
   \hspace{-2mm}of {\sc Double Proposal}\hspace{-3mm}
   \end{tabular}
   &\begin{tabular}{c}~\vspace{-1mm}\\
   $\lfloor \frac{n}{2} \rfloor+1$ \\
   \tiny(Rem.\ref{rem:double_proposal}+Thm.\ref{thm:maxgap_gen_H})
   \end{tabular}
   &\begin{tabular}{c}~\vspace{-1mm}\\
   $\theta+1$ \\
   \tiny(Rem.\ref{rem:double_proposal}+Thm.\ref{thm:maxgap_uni})
   \end{tabular}
   &\begin{tabular}{c}~\vspace{-1mm}\\
   $2$ \\
   \tiny(Rem.\ref{rem:double_proposal}+Thm.\ref{thm:maxgap_mar})
   \end{tabular}
   &\begin{tabular}{c}~\vspace{-1mm}\\
   $\lfloor \frac{n}{2} \rfloor+1$ \\
   \tiny(Rem.\ref{rem:double_proposal}+Thm.\ref{thm:maxgap_gen_H})
   \end{tabular}
\\ \hline 
   \begin{tabular}{c}~\vspace{-3mm}\\
   Approx. factor of \\
   {\sc Triple Proposal}
   \end{tabular}
   &\begin{tabular}{c}~\vspace{-1mm}\\
   $\phi(n)~(\sim \frac{n+2}{3})$ \\
   \tiny(Thm.\ref{thm:general-approximable})
   \end{tabular}
   &\begin{tabular}{c}~\vspace{-1mm}\\
   $\frac{\theta}{2}+1$ \\
   \tiny(Thm.\ref{thm:IL-uniform-alg})
   \end{tabular}
   &\begin{tabular}{c}~\vspace{-1mm}\\
   $1.5$ \\
   \tiny(Thm.\ref{thm:approx_mar})
   \end{tabular}
   &\begin{tabular}{c}~\vspace{-1mm}\\
   $\phi(n)$ \\
   \tiny(Thm.\ref{thm:general-approximable})
   \end{tabular}
\\ \hline 
   \ Inapproximability
   &\begin{tabular}{c}~\vspace{-1mm}\\
   $n^{\frac{1}{4}-\epsilon}$\, \\
   \tiny(\cite{goko_et_al:LIPIcs.STACS.2022.31}~)
   \end{tabular}
   &\begin{tabular}{c}~\vspace{-1mm}\\
   \hspace{-2mm}$\frac{2-\sqrt{2}}{3}\theta+1-\epsilon \theta^2\hspace{-2mm}$\\
   \tiny(Thm.\ref{thm:inapprox-uniform})
   \end{tabular}
   &\begin{tabular}{c}~\vspace{-1mm}\\
   $\frac{5-\sqrt{2}}{3}-\epsilon$\\
   \tiny(Prop.\ref{prop:marriage-inapprox})
   \end{tabular}
   &\begin{tabular}{c}~\vspace{-1mm}\\
   $n^{\frac{1}{4}-\epsilon}$ \\
   \tiny(\cite{goko_et_al:LIPIcs.STACS.2022.31}+ Rem.\ref{rem:ML-inapprox})
   \end{tabular}
  \\ \hline
  \end{tabular}
\smallskip 
  \caption{\small The maximum gap, a lower bound of the approximation factor of {\sc Double Proposal}, the approximation factor of {\sc Triple Proposal}, and inapproximability (under P$\neq$NP) for four models of HRT-MSLQ.
  In the uniform model, $\theta = u/\ell$, where $u$ and $\ell$ are respectively an upper quota and a lower quota common to all the hospitals.}
  \label{table1}
\end{table}
\end{center}
\medskip

\section{Problem Definition}\label{sec:definition}
Let $R = \{ r_{1}, r_{2}, \ldots, r_{n} \}$ be a set of residents and $H = \{ h_{1}, h_{2}, \ldots, h_{m} \}$ be a set of hospitals.
Each hospital $h$ has a lower quota $\ell(h)$ and an upper quota $u(h)$ such that $\ell(h) \leq u(h)\leq n$.
We sometimes denote $h$'s quota pair as $[\ell(h), u(h)]$ for simplicity.
Each resident (resp., hospital) has a preference list over hospitals (resp., residents), which may be incomplete and may contain ties.
A hospital $h$ is {\em acceptable} to a resident $r$ if $h$ appears in $r$'s list.
Similarly,  $r$ is {\em acceptable} to $h$ if $r$ appears in $h$'s list.
A resident-hospital pair $(r,h)$ is called {\em acceptable} if $r$ and $h$ are acceptable to each other.
We denote by $E$ the set of all acceptable pairs $(r,h)$.
In this paper, we assume without loss of generality that acceptability is mutual, i.e., $r$ is acceptable to $h$ if and only if $h$ is acceptable to $r$.

In this paper, a preference list is denoted by one row, from left to right according to the preference order.
Two or more agents with equal preference is included in parentheses.
For example, ``$r_{1}$: \ $h_{3}$ \  ( \ $h_{1}$ \ $h_{6}$ \ ) \ $h_{4}$''
is a preference list of resident $r_1$ such that $h_{3}$ is the top choice, $h_{1}$ and $h_{6}$ are the  second choice with equal preference, and $h_{4}$ is the last choice. Hospitals not appearing in the list are unacceptable to $r_1$.

If hospitals $h_{i}$ and $h_{j}$ are both acceptable to a resident $r$ and $r$ prefers $h_{i}$ to $h_{j}$, we write $h_{i} \succ_{r} h_{j}$.
We also write $h\succ_{r} \varnothing$ for any hospital $h$ acceptable to $r$, where $\varnothing$ means $r$ being unmatched.
If $r$ is indifferent between acceptable hospitals $h_{i}$ and $h_{j}$ (including the case that $h_{i} = h_{j}$), we write $h_{i} =_{r} h_{j}$.
We write $h_{i} \succeq_{r} h_{j}$ to mean that $h_{i} \succ_{r} h_{j}$ or $h_{i} =_{r} h_{j}$ holds.
We use the same notations for the preference of each hospital.

An {\em assignment} is a subset of the set $E$ of acceptable pairs.
For an assignment $M$ and a resident $r$, let $M(r)$ be the set of hospitals $h$ such that $(r, h) \in M$.
Similarly, for a hospital $h$, let $M(h)$ be the set of residents $r$ such that $(r, h) \in M$.
An assignment $M$ is called a {\em matching} if $|M(r)| \leq 1$ for each resident $r$ and $|M(h)| \leq u(h)$ for each hospital $h$.
For a matching $M$, a resident $r$ is called {\em matched} if $|M(r)|=1$ and {\em unmatched} otherwise.
If $(r,h) \in M$, we say that $r$ is {\em assigned to} $h$ and $h$ is {\em assigned} $r$.  
For a matching $M$, we abuse the notation $M(r)$ to denote a unique hospital to which $r$ is assigned, if any.  
In addition, we denote  $M(r)=\varnothing$ to mean that $r$ is unmatched in $M$.
A hospital $h$ is called {\em deficient} if  $|M(h)| < \ell(h)$ and {\em sufficient} if $\ell(h) \leq |M(h)| \leq u(h)$.
Additionally, a hospital $h$ is called {\em full} if $|M(h)| = u(h)$ and {\em undersubscribed} otherwise.

An acceptable pair $(r,h)\in E$ is called a {\em blocking pair} for a matching $M$ (or we say that $(r, h)$ {\em blocks} $M$) if (i) $r$ is either unmatched in $M$ or prefers $h$ to $M(r)$ and (ii) $h$ is either undersubscribed in $M$ or prefers $r$ to at least one resident in $M(h)$.
A matching is called {\em stable} if it admits no blocking pair. 

Recall from Section \ref{sec:intro} that the {\em satisfaction ratio} (also called {\em the score}) of a hospital $h$ in a matching $M$ is defined by $s_{M}(h) = \min\{ 1, \frac{|M(h)|}{\ell(h)} \}$, where we define $s_M(h)=1$ if $\ell(h)=0$. 
The {\em total satisfaction ratio} (also called {\em the score}) of a matching $M$ is the sum of the scores of all hospitals, that is, 
$s(M) = \sum_{h \in H} s_{M}(h)$.
The Hospitals/Residents problem with Ties to Maximally Satisfy Lower Quotas, denoted by {\em HRT-MSLQ}, is to find a stable matching $M$ that maximizes the score $s(M)$.
(In Goko et al.\cite{goko_et_al:LIPIcs.STACS.2022.31}, HRT-MSLQ is formulated for the complete list setting. We use the same name HRT-MSLQ to mean the generalized setting with incomplete lists.)


\section{Algorithm}\label{sec:algorithm}
In this section, we present our algorithm {\sc Triple Proposal} 
for HRT-MSLQ along with a few of its basic properties.
{\sc Triple Proposal} is regarded as a generalization of the algorithm {\sc Double Proposal},
which was designed for a special case of HRT-MSLQ with complete preference lists \cite{goko_et_al:LIPIcs.STACS.2022.31}.

We first briefly explain {\sc Double Proposal}.
It is based on the resident-oriented Gale--Shapley algorithm
but allows each resident $r$ to make proposals twice to each hospital. 
{\sc Double Proposal} starts with an empty matching $M\coloneqq \emptyset$ and repeatedly updates $M$ by a proposal-acceptance/rejection process.
In each iteration, the algorithm takes a currently unassigned resident $r$ and lets her propose to the hospital at the top of her current list.
If the head of $r$'s preference list is a tie when $r$ makes a proposal, 
then the hospitals to which $r$ has not proposed yet are prioritized.
If there are more than one hospital with the same priority, one with the smallest $\ell(h)$ is prioritized.
For the acceptance/rejection decision of $h$, we use $\ell(h)$ as a dummy upper quota. 
Whenever $|M(h)|<\ell(h)$, a hospital $h$ accepts any proposal. 
If $h$ receives a new proposal from $r$ when $|M(h)|\geq \ell(h)$, then $h$ checks whether there is a resident in $M(h)\cup\{r\}$ who has not been rejected by $h$ so far. 
If such a resident exists, $h$ rejects that resident regardless of the preference of $h$ (at this point, the rejected resident does not delete $h$ from her list).
If there is no such resident, we apply the usual acceptance/rejection operation, i.e., $h$ accepts $r$ if $|M(h)|<u(h)$ and otherwise $h$ replaces the worst resident $r'$ in $M(h)$ with $r$ (at this point, $r'$ deletes $h$ from her list). 
Roughly speaking, the first proposals are used to prioritize deficient hospitals and
the second proposals are used to guarantee the stability.

In the setting in Goko et al.~\cite{goko_et_al:LIPIcs.STACS.2022.31}, it is assumed that the preference lists of all agents are complete and $|R|\leq \sum_{h\in H}u(h)$ holds.
With these assumptions, all residents are matched in $M$ at the end of {\sc Double Proposal}.
In out setting, however, some residents are left unmatched at the end, after proposing twice to each hospital.
{\sc Triple Proposal} gives the second round of proposals for those unmatched residents, in which a resident has the {\em third} chance for the proposal to the same hospital.
This idea originates from the algorithm of Kir\'aly \cite{DBLP:journals/algorithms/Kiraly13} for another NP-hard problem MAX-SMTI.
In {\sc Triple Proposal}, each resident $r$ is associated with a value $\state(r)\in \{0,1,2\}$, which is initialized as $\state(r)=0$ at the beginning.
Each $r$ behaves as {\sc Double Proposal} until her list becomes empty for the first time (i.e., until $r$ is rejected twice by all hospitals on her list).
When her list becomes empty, $\state(r)$ turns from $0$ to $1$ and her list is recovered.
Note that at this point, every hospital $h$ in $r$'s list is full with residents that are no worse than $r$ and have been rejected by $h$ at least once, 
because $r$ was rejected by $h$ at the second proposal.
When $r$ with $\state(r)=1$ proposes to a hospital $h$ and $M(h)$ contains a resident $r'$ such that $r=_{h} r'$ and $\state(r')=0$, 
then $h$ replaces $r'$ with $r$. 
If $r$ exhausts her list for the second time, then $\state(r)$ turns from $1$ to $2$, and $r$ becomes inactive.

Formally, our algorithm {\sc Triple Proposal} is described in Algorithm~\ref{alg2}. 
For convenience, in the preference list, a hospital $h$ not included in a tie is regarded as a tie consisting of $h$ only.

\begin{algorithm}[htb]
	\caption{\sc ~Triple Proposal}
\label{alg2}
\begin{algorithmic}[1]
	\REQUIRE An instance $I$ of HRT-MSLQ where each $h\in H$ has quotas $[\ell(h),u(h)]$.
	\ENSURE A stable matching $M$.
	\vspace{1mm}
	\STATE $M\coloneqq \emptyset$. 
	\WHILE{there is an unmatched resident $r$ with nonempty list and  $\state(r)\leq 1$}\label{main_while}
	\STATE Let $r$ be such a resident and $T$ be the top tie of $r$'s list.\label{unmatched}
	\IF{$T$ contains a hospital to which $r$ has not proposed yet}\label{propose}
	\STATE Let $h$ be such a hospital with minimum $\ell(h)$. \label{proposal1}
	\ELSE
	\STATE Let $h$ be a hospital with minimum $\ell(h)$ in $T$. \label{proposal2}
	\ENDIF
	\IF{$|M(h)|<\ell(h)$}\label{XXX1}
	\STATE Let $M\coloneqq M\cup\{(r,h)\}$.\label{update1}
	\ELSIF{there is a resident in $M(h)\cup\{r\}$ who has not been rejected by $h$}\label{reject}
	\STATE Let $r'$ be such a resident (possibly $r'=r$).\label{chosen1}
	\STATE Let $M\coloneqq (M\cup \{(r,h)\})\setminus\{(r',h)\}$.\label{reject-end}
	\ELSIF{$|M(h)|<u(h)$}\label{XXX2}
	\STATE $M\coloneqq M\cup\{(r,h)\}$.\label{update3}
	\ELSE[i.e., when $|M(h)|=u(h)$ and all residents in $M(h)\cup\{r\}$ have been rejected by $h$ at least once]\label{u-full}
	\STATE Let $r'$ be any resident worst in $M(h)\cup\{r\}$ for $h$ (possibly $r'=r$).
	If there are multiple worst residents, take anyone with the minimum $\state$ value.\label{chosen2}
	\STATE Let $M\coloneqq (M\cup \{(r,h)\})\setminus\{(r',h)\}$.\label{update4}
	\STATE Delete $h$ from $r'$'s list.
	\ENDIF
	\IF{$r'$'s list becomes empty}
	\STATE Increment $\state(r)$ by $1$ and recover the original $r'$'s list.
	\ENDIF
	\ENDWHILE
	\STATE Output $M$ and halt.
\end{algorithmic}
\end{algorithm}

We say that a resident is {\em rejected} by a hospital $h$ if she is chosen as $r'$ 
in Lines~\ref{chosen1} or \ref{chosen2}.
Note that the conditions in Lines~\ref{XXX1}, \ref{reject}, and \ref{XXX2} never hold for $r$ with $\state(r)=1$ because such $r$ has already been rejected by all hospitals in her list twice. 
Therefore, whenever $r$ with $\state(r)=1$ is rejected by $h$, it must be at line  \ref{chosen2}, so $r$ deletes $h$.
Hence, each resident proposes to each hospital at most three times (at most twice with $\state(r)=0$ and at most once with $\state(r)=1$).

We first provide two basic properties of {\sc Triple Proposal}, which are shown for {\sc Double Proposal} in \cite[Lemmas 1, 2]{goko_et_al:LIPIcs.STACS.2022.31}.
The first one states that the algorithm finds a stable matching in linear time.
\begin{lemma}\label{lem:stability}
Algorithm {\sc Triple Proposal} runs in linear time and outputs a stable matching.
\end{lemma}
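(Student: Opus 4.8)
The plan is to prove the two assertions separately, treating the running time as a quick consequence of the proposal bound already noted in the excerpt and reserving the real work for stability. For the running time, I would use the established fact that each resident proposes to each hospital at most three times, so the number of iterations of the main while loop is at most $3|E|$; it then suffices to implement one iteration in amortized constant time. I would maintain, for each hospital $h$, the set $M(h)$ bucketed by $h$'s preference rank together with a pointer to the current worst bucket, a counter $|M(h)|$, and a Boolean flag recording for each resident whether she has already been rejected by $h$; each resident keeps a pointer into her list and a record of whether she is making her first, second, or third proposal. With these structures the hospital's decision in Lines~\ref{XXX1}--\ref{update4} (including locating a not-yet-rejected resident and a worst resident of minimum $\state$) takes amortized $O(1)$ time, and each list recovery upon a $\state$ increment is charged to the deletions that emptied the list. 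Summing over all proposals gives total time $O(|E|+n+m)$, i.e.\ linear in the input length, and termination is immediate since the proposal count is bounded.

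For stability I would first record two monotonicity invariants. First, $|M(h)|$ never decreases: every update either adds a pair (Lines~\ref{update1}, \ref{update3}) or swaps one resident for another (Lines~\ref{reject-end}, \ref{update4}), so once $h$ becomes full it stays full. Second, call $h$ \emph{saturated} once it first executes Line~\ref{chosen2}, i.e.\ once it first reaches $|M(h)|=u(h)$ with every resident of $M(h)$ already rejected by $h$ at least once. I claim that from the moment $h$ becomes saturated, the $h$-worst resident of $M(h)$ is non-decreasing in $\succeq_h$. Indeed, while $h$ is saturated the only resident that can be unrejected is a fresh proposer $r$, so the soft step (Lines~\ref{reject}--\ref{reject-end}) can only bounce $r$ back without altering $M(h)$; every genuine replacement occurs at Line~\ref{chosen2}, which evicts a worst resident of $M(h)\cup\{r\}$ and therefore leaves a set whose worst element is weakly better than before. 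The tie-breaking by minimum $\state$ is harmless here, since it only ever swaps two residents that $h$ regards as equal.

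With these invariants in hand, suppose for contradiction that $(r,h)$ blocks the output $M$. Condition~(i) forces $r$ to have been \emph{hard-rejected} by $h$, i.e.\ chosen as $r'$ at Line~\ref{chosen2} so that $h$ was deleted from her list, at some time $t$: if $r$ is unmatched then $\state(r)=2$, which is reached only after $r$ has emptied her recovered list and hence been hard-rejected by every hospital she lists, including $h$; if instead $h\succ_r M(r)$, then at the moment of $r$'s final acceptance by $M(r)$ the hospital $h$ must already have been deleted from her current list (otherwise she would propose within $h$'s strictly higher tie first), again a hard rejection. At time $t$ the hospital $h$ is saturated with $|M(h)|=u(h)$, and as $r$ was an evicted worst resident, every resident remaining in $M(h)$ satisfies $r''\succeq_h r$. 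By the two invariants, $h$ stays full and its worst resident only improves, so at termination $h$ is full and $r''\succeq_h r$ for every $r''\in M(h)$. This contradicts condition~(ii), which would require a strict $r\succ_h r''$ for some $r''\in M(h)$, a full $h$ being unable to be undersubscribed; hence $M$ admits no blocking pair.

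I expect the main obstacle to be the second step of the stability argument, certifying that a blocking resident was genuinely hard-rejected by $h$, since this is where the three-round structure and the list recovery upon each $\state$ increment interact: one must track the working list carefully to rule out the possibility that $r$ passes $h$ only through soft rejections, which never delete $h$, and to pin down the precise time $t$ at which $h$ is deleted. Once $t$ is isolated, the two monotonicity invariants finish the proof mechanically, and the ties-plus-$\state$ bookkeeping is benign because blocking requires a \emph{strict} preference of $h$ for $r$, whereas the invariants only ever guarantee the weak inequality $r''\succeq_h r$.
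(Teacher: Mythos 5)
Your proof is correct and follows essentially the same route as the paper's: bound the work by the at-most-three proposals per acceptable pair, and for stability observe that a blocking resident must eventually be evicted at Line~\ref{chosen2} at a moment when $h$ is full of once-rejected residents no worse than her, a configuration that persists by monotonicity of $M(h)$. The only implementation detail the paper adds is a bucket-sort preprocessing of each tie by $\ell$-value so that Lines~\ref{proposal1} and~\ref{proposal2} run in constant time, which your amortized-$O(1)$ claim implicitly needs.
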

\begin{proof}
Clearly, the size of the input is $O(|R||H|)$.
As each resident proposes to each hospital at most three times, 
the while loop is iterated at most $3|R||H|$ times.
As residents prioritize hospitals with smaller $\ell(h)$ at Lines~\ref{proposal1} and \ref{proposal2}, we need to sort hospitals in each tie in an increasing order of the values of $\ell$. Since $\ell$ has only $|R|+1$ possible values ($0,1,2,\dots,n$),
the required sorting can be done in $O(|R||H|)$ time as a preprocessing step using a method like bucket sort. Thus, the running time of {\sc Triple Proposal} is $O(|R||H|)$.

Suppose, to the contrary, that $M$ is not stable, i.e., there is a pair $(r,h)\in E$ such that (i) $r$ prefers $h$ to $M(r)$ (which may be $\varnothing$) and (ii) $h$ is either undersubscribed or prefers $r$ to at least one resident in $M(h)$.
By the algorithm, (i) implies that $r$ is rejected by $h$ at least twice.
Just after the second rejection, $h$ is full, and all residents in $M(h)$ have once been rejected by $h$ and are no worse than $r$ for $h$. 
Since $M(h)$ is monotonically improving for $h$, at the end of the algorithm $h$ is still full and no resident in $M(h)$ is worse than $r$, which contradicts (ii).
\switch{\qed}{}
\end{proof}

In addition to the stability, the output of {\sc Triple Proposal} satisfies the following property, 
which is used in the approximation factor analysis in Section~\ref{sec:approximation}.
\begin{lemma}\label{lem:property}
Let $M$ be the output of {\sc Triple Proposal}, $r$ be a resident, and $h$ and $h'$ be hospitals such that $h=_{r} h'$ and $M(r)=h$.
Then, we have the following:
\begin{enumerate}
\item[\rm (i)] If $\ell(h)> \ell(h')$, then $|M(h')|\geq \ell(h')$.
\item[\rm (ii)] If $|M(h)|> \ell(h)$, then $|M(h')|\geq \ell(h')$.
\end{enumerate}
\end{lemma}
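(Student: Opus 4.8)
The plan is to reduce both statements to a single combinatorial fact about the run of {\sc Triple Proposal}: \emph{$r$ proposes to $h'$ at some point}. First I would record the obvious monotonicity: every update to $M(h')$ is either an insertion (Lines~\ref{update1},~\ref{update3}) or a one-for-one swap (Lines~\ref{reject-end},~\ref{update4}), so $|M(h')|$ is non-decreasing throughout. Hence if $|M(h')|<\ell(h')$ in the output, then $|M(h')|<\ell(h')$ held at \emph{every} step, so $h'$ was always deficient; consequently $h'$ accepted every proposal via Line~\ref{update1} and never rejected anyone, and any resident who proposed to $h'$ would stay matched to $h'$ until termination. Since $M(r)=h$, establishing that $r$ proposed to $h'$ therefore forces $|M(h')|\ge\ell(h')$. (The degenerate cases $h=h'$ and $\ell(h')=0$ are immediate, so I assume $h\neq h'$ and $\ell(h')\ge 1$.) I would also dispose at once of the case where $h'$ is ever deleted from $r$'s list: deletion occurs only upon a rejection at Line~\ref{chosen2}, which is preceded by a proposal, so there $r$ has already proposed to $h'$. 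Thus I may assume $h'$ is never deleted from $r$'s list; since $M(r)=h$, the hospital $h$ is never deleted either, so $h$ and $h'$ stay in the same tie, and $h'$ lies in $r$'s top tie $T$ whenever $r$ proposes to $h$. (In particular $r$ never empties her list, so she remains in state~$0$, which is what makes the ``not-yet-proposed'' bookkeeping behave monotonically.)

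For part (i) I would examine $r$'s \emph{first} proposal to $h$. Because $h$ has not been proposed to before this moment, $r$ selects it via Line~\ref{proposal1}, i.e.\ $h$ is a hospital of minimum $\ell$ among the not-yet-proposed hospitals of $T$. As $h'\in T$ and $\ell(h')<\ell(h)$, the hospital $h'$ cannot also be not-yet-proposed, for its strictly smaller lower quota would make the algorithm choose $h'$ over $h$. Hence $r$ has already proposed to $h'$, and the reduction finishes this case.

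Part (ii) is the crux and the place I expect the real work, since the priority argument alone fails when $\ell(h')\ge\ell(h)$. The key auxiliary claim I would prove is: \emph{if $|M(h)|>\ell(h)$ in the output, then every resident of $M(h)$, in particular $r$, has been rejected by $h$ at least once.} Indeed, $|M(h)|$ can exceed $\ell(h)$ only through an acceptance at Line~\ref{update3} (Line~\ref{update1} caps the count at $\ell(h)$), and the guard of that branch forces every occupant then in $M(h)$ to have already been rejected. I would then show this ``all-rejected'' state is preserved: once $|M(h)|\ge\ell(h)$ with every occupant rejected, a fresh not-yet-rejected proposer is the unique un-rejected candidate and is therefore the one expelled at Line~\ref{chosen1}, while an already-rejected proposer is merely accepted or swapped in; so no un-rejected resident can ever (re)enter $M(h)$, and the final $M(h)$ consists entirely of residents rejected by $h$.

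With the claim in hand the rest is short. Since $r$ is rejected by $h$ yet $M(r)=h$ at the end, that rejection cannot be the deleting one at Line~\ref{chosen2} (which would remove $h$ from $r$'s list forever); it is the non-deleting rejection at Line~\ref{chosen1}. Consequently $r$ must propose to $h$ \emph{again} after being expelled in order to end up matched to $h$. But at this later proposal $h$ is already proposed to, so $r$ cannot reach it through Line~\ref{proposal1}; she must select it through Line~\ref{proposal2}, whose branch is entered only when \emph{every} hospital of $T$ has been proposed to. As $h'\in T$, this shows $r$ proposed to $h'$, and the reduction completes the proof. The main obstacle, as indicated, is the invariance argument behind the auxiliary claim: getting the ``rejected versus not-yet-rejected occupant'' bookkeeping right across Lines~\ref{reject}--\ref{update4}.
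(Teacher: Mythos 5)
Your proof is correct and follows essentially the same route as the paper's: part (i) uses the minimum-$\ell$ priority at the first proposal to $h$ to force a prior proposal to $h'$, and part (ii) uses the fact that $|M(h)|>\ell(h)$ forces every occupant of $M(h)$ to have been rejected once, hence $r$ made a repeat proposal to $h$, which can only happen after every hospital in the top tie (including $h'$) has been proposed to. Your final step is merely the contrapositive of the paper's observation that a hospital only rejects (equivalently, fails to retain) a proposer once it has at least $\ell(h')$ assignees, so the two arguments coincide.
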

\begin{proof}
(i) Since $h=_{r} h'$, $\ell(h)>\ell(h')$, and $r$ is assigned to $h$ in $M$, 
the definition of the algorithm (Lines~\ref{propose}, \ref{proposal1}, and \ref{proposal2})
implies that $r$ proposed to $h'$ and was rejected by $h'$ before she proposes to $h$.
Just after this rejection occurred, $|M(h')|\geq \ell(h')$ holds. 
Since $|M(h')|$ is monotonically increasing, we also have $|M(h')|\geq \ell(h')$ at the end.

(ii) Since $|M(h)|>\ell(h)$, the value of $|M(h)|$ changes from $\ell(h)$ to $\ell(h)+1$ at some moment of the algorithm.
By Line~\ref{reject},  
at any point after this, $M(h)$ consists only of residents who have once been rejected by $h$.
Since $M(r)=h$ for the output $M$, at some moment $r$ must have made the second proposal to $h$. 
By Line~\ref{propose} of the algorithm, $h=_r h'$ implies that $r$ has been rejected by $h'$ at least once, which implies that $|M(h')|\geq \ell(h')$ at this moment and also at the end.
\switch{\qed}{}
\end{proof}

Since the above two properties hold also for {\sc Double Proposal} \cite{goko_et_al:LIPIcs.STACS.2022.31}, they are insufficient to obtain the approximation factors shown in the third row of Table~\ref{table1}.
The main advantage of {\sc Triple Proposal} is that we can prohibit length-3 augmenting paths, which are defined as follows.

For a stable matching $M$ in an HRT-MSLQ instance, a {\em length-3 $M$-augmenting path} is a sequence $(r_1, h_1, r_2, h_2)$ 
of residents $r_1$, $r_2$ and hospitals $h_1$, $h_2$ satisfying the following conditions:
\begin{itemize}
\item  $M(r_1)=\varnothing$,
\item $(r_1,h_1), (r_2, h_2)\in E\setminus M$,~ $(r_2,h_1) \in M$, and
\item $|M(h_2)|< \ell(h_2)$.
\end{itemize}
The output of {\sc Triple Proposal} satisfies the following property,
which can be regarded as a generalization of the property of the output of Kir\'aly's algorithm~\cite{DBLP:journals/algorithms/Kiraly13} for MAX-SMTI.
\begin{lemma}\label{lem:property2}
Let $M$ be an output of {\sc Triple Proposal} and $N$ be an arbitrary stable matching.
Then, there is no length-3 $M$-augmenting path $(r_1, h_1, r_2, h_2)$ such that $(r_1,h_1), (r_2, h_2)\in N$.
\end{lemma}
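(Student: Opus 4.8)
The plan is to assume such a path $(r_1,h_1,r_2,h_2)$ exists with $(r_1,h_1),(r_2,h_2)\in N$ and derive a contradiction, using only the stability of $M$ and $N$ together with the fine structure of {\sc Triple Proposal}; I expect not to need Lemma~\ref{lem:property}. First I would record the easy consequences of stability. Since $(r_1,h_1)\in E$ and $r_1$ is unmatched, stability of $M$ (Lemma~\ref{lem:stability}) forces $h_1$ to be full with every resident of $M(h_1)$ at least as good as $r_1$, so in particular $r_2\succeq_{h_1}r_1$; and since $(r_2,h_2)\in E$ with $M(r_2)=h_1$ while $h_2$ is undersubscribed, stability gives $h_1\succeq_{r_2}h_2$. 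Next I would exploit the deficiency of $h_2$ at the algorithmic level: as $|M(h_2)|$ only increases and $h_2$ ends deficient, $h_2$ is deficient throughout and therefore never rejects anyone. Two consequences follow: $r_2$ never proposes to $h_2$ (otherwise she would be accepted and retained, contradicting $M(r_2)=h_1$), and $h_2$ is never deleted from $r_2$'s list, so $r_2$'s list never empties and hence $\state(r_2)=0$ throughout. Finally, because $r_1$ is unmatched she ends with $\state(r_1)=2$, so she was rejected by $h_1$ while carrying $\state=1$; such a rejection can only happen at Line~\ref{chosen2}. Let $t$ denote this moment: at $t$, $h_1$ is full and \emph{every} resident of $M(h_1)\cup\{r_1\}$ has already been rejected by $h_1$.

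The structural heart is to analyze how the occupants of the (henceforth full) hospital $h_1$ can change after time $t$. I would prove, as a maintained invariant, that after $t$ the set $M(h_1)$ consists solely of residents who have been rejected by $h_1$ at least once, and moreover that every occupant tied with $r_1$ (i.e.\ $=_{h_1}r_1$) has $\state\ge 1$. The proof is a line-by-line bookkeeping: once all incumbents are rejected, a fresh proposal cannot enter through Line~\ref{reject} (the unique unrejected candidate is the proposer herself, so she is the one rejected), so new occupants arrive only as replacements at Line~\ref{chosen2}, where the entrant has necessarily already been rejected; and if such an entrant were tied with $r_1$ with $\state=0$ she would be the minimum-$\state$ worst resident and hence be the one rejected rather than admitted. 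This step, distinguishing the ``push'' rejections of Line~\ref{reject} from the deleting replacements of Line~\ref{chosen2} and tracking the minimum-$\state$ tie-break, is where the third-proposal machinery does its work and is the part I expect to be the most delicate.

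With these pieces in hand the contradiction comes from a clean dichotomy. Combining the two stability relations, I claim $r_2=_{h_1}r_1$ or $h_1=_{r_2}h_2$: if both were strict, then $r_2$ would strictly prefer $h_1$ to $N(r_2)=h_2$ while $h_1$ would strictly prefer $r_2$ to $r_1\in N(h_1)$, so $(r_2,h_1)$ would block $N$, contradicting its stability. I then rule out each alternative. If $r_2=_{h_1}r_1$: since $r_2\in M(h_1)$ at the end, the invariant forces $\state(r_2)\ge 1$, contradicting $\state(r_2)=0$. If $h_1=_{r_2}h_2$: the invariant shows that the occupant $r_2$ of $h_1$ has been rejected by $h_1$ at least once; this rejection is not the deleting one of Line~\ref{chosen2} (that would remove $h_1$ from her list forever, precluding $M(r_2)=h_1$), so it is a non-deleting rejection at Line~\ref{reject}, after which $r_2$ re-proposes. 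But $h_2$ is a tie-mate of $h_1$ to which $r_2$ never proposes, so Line~\ref{propose} keeps steering her to an as-yet-unproposed tie-mate (such as $h_2$) rather than back to the already-proposed $h_1$; hence $r_2$ cannot be matched to $h_1$ at the end, a contradiction. Either way the assumed path cannot exist.
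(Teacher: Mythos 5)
Your proof is correct and follows essentially the same route as the paper's: both establish that after $r_1$'s state-1 rejection $h_1$ stays full with occupants no worse than $r_1$ who have all been rejected at least once (tied ones having state $\geq 1$), then split into the cases $r_2\succ_{h_1}r_1$ (where stability of $N$ forces $h_1=_{r_2}h_2$ and the tie-priority rule forces a proposal to $h_2$, contradicting its deficiency) and $r_2=_{h_1}r_1$ (where the double-rejection/state property contradicts $h_2$'s deficiency). Your version merely repackages the case split via the blocking-pair dichotomy and spells out the invariant maintenance that the paper compresses into ``the assignment of $h_1$ is monotonically improving.''
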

\begin{proof}
Suppose, to the contrary, that there is such an augmenting path.
Then, $(r_1,h_1), (r_2, h_2)\in N\setminus M$ and $(r_2, h_1)\in M\setminus N$.
Because $M(r_1)=\varnothing$ while $(r_1, h_1)\in E$, resident $r_1$ was rejected by $h_1$ three times in the algorithm.
Just after the third rejection, $h_1$ is full with residents each of whom is either 
(i) better than $r_1$ and rejected by $h_1$ at least once 
or (ii) tied  with $r_1$ in $h_1$'s list and rejected by $h_1$ twice.
Since the assignment of $h_1$ is monotonically improving, this also holds for the output $M$.

In case $r_2\in M(h_1)$ satisfies (i), the stability of $N$ implies $h_2\succeq_{r_2} h_1$.
Because the stability of $M$ with $|M(h_2)|<\ell(h_2)$ implies $h_1\succeq_{r_2} h_2$, we have $h_2=_{r_2} h_1$.
Since $r_2$ is assigned to $h_1$ at her second or third proposal, $h_2=_{r_2} h_1$ implies 
that $r_2$ proposed to and was rejected by $h_2$ at least once. This implies $|M(h_2)|\geq \ell(h_2)$, 
which contradicts our assumption $|M(h_2)|<\ell(h_2)$.

In case $r_2$ satisfies (ii), $r_2$ was rejected at least twice by all hospitals in her list.
This implies that $h_2$ is full in $M$, which contradicts $|M(h_2)|<\ell(h_2)$.
\switch{\qed}{}\end{proof}

\begin{remark}\label{rem:3.1}
It is shown in Goko et al.~\cite{goko_et_al:LIPIcs.STACS.2022.31} that {\sc Double Proposal} is strategy-proof for residents.
In case where (i) the preference lists are all complete or (ii) the preference lists of hospitals are strict,
our algorithm {\sc Triple Proposal} is also strategy-proof, because its output coincides with that of {\sc Double Proposal} (where arbitrariness in the algorithm is removed using some pre-specified order over agents).
However, in case neither (i) nor (ii) holds, {\sc Triple Proposal} is not strategy-proof anymore.
Actually, in such a setting, it seems difficult to attain approximation factors better than that of the arbitrary tie-breaking Gale--Shapley algorithm, while preserving strategy-proofness.
This is true at least for the marriage model because no strategy-proof algorithm can attain an approximation factor better than $2$, which easily follows from Hamada et al.~\cite{DBLP:conf/isaac/HamadaMY19}.
\end{remark}

\section{Maximum Gaps and Approximation Factors}\label{sec:approximation}
In this section, we analyze the approximation factors of {\sc Triple Proposal}, 
together with the maximum gaps for the four  models mentioned in Section~\ref{sec:intro}. 

For an instance $I$ of HRT-MSLQ, 
let $\opt(I)$ and $\wst(I)$ respectively denote the maximum and minimum scores over all 
stable matchings of $I$, and let $\alg(I)$  be the score of the output of {\sc Triple Proposal}. 
For a model $\cal I$ (i.e., subfamily of problem instances of HRT-MSLQ), let 
\[
\Lambda(\I)=\max_{I \in \I}\frac{\opt(I)}{\wst(I)} \ \  \mbox{ and } \ \ \app(\I)=\max_{I \in \I}\frac{\opt(I)}{\alg(I)}. 
\]

The maximum gap $\Lambda(\I)$ coincides with the worst approximation factor of 
a naive algorithm that first breaks ties arbitrarily and then applies the resident-oriented Gale--Shapley algorithm.
This equivalence is obtained by combining the following two facts: (1) for any stable matching $M$ of $I$, there exists an instance $I'$ with strict preferences such that $I'$ is obtained from $I$ via some tie-breaking and $M$ is a stable matching of $I'$, (2) for any instance $I'$ with strict preferences, the number of residents assigned to each hospital is invariant over all stable matchings. (For a more precise proof, see \cite[Proposition 20]{Goko}.)

In subsequent subsections, we tightly prove the values of the maximum gap $\Lambda(\I)$ and the approximation factor $\app(\I)$ of {\sc Triple Proposal}
for the general, uniform, and marriage models. The results for the resident-side master list model follow from those for the general model.

\begin{remark}\label{rem:double_proposal}
Let $\I$ be a model of HRT-MSLQ and $\I'$ be the subfamily of $\I$ consisting of all the instances in which preference lists of residents are strict.
The algorithm {\sc Double Proposal} applied to an instance $I'\in \I'$ has arbitrariness in the choice of a resident to be rejected at each rejection step because hospitals' lists can contain ties. 
Any choice in this arbitrariness can be viewed as a tie-breaking of hospitals' preference lists; more precisely, for any way of breaking the ties of hospitals' preference lists in $I'$, which results in an instance $I$ without ties, there is an execution of {\sc Double Proposal} for $I'$ in such a way that the output of {\sc Double Proposal} for $I'$ coincides with the output of the Gale-Shapley algorithm for $I$.
Therefore, any output of the arbitrary tie-breaking Gale--Shapley algorithm for an instance in $\I'$ can be an output of some execution of {\sc Double Proposal} for $\I'$.
This implies that the maximum gap for $\I'$ gives a lower bound on the approximation factor of {\sc Double Proposal} for $\I$.
The results in the second row of Table \ref{table1} are obtained in this manner.
\end{remark}

\subsection{Marriage Model}
We start with the model easiest to analyze.
Let ${\cal I}_{\rm Marriage}$ denote the family of instances of HRT-MSLQ, in which each hospital has an upper quota of $1$.
We call ${\cal I}_{\rm Marriage}$ the {\em marriage model}. 
By definition, $[\ell(h), u(h)]$ in this model is either $[0,1]$ or $[1,1]$ for each $h\in H$.
For this simple one-to-one matching case,
we can show the maximum gap and the approximation factor of {\sc Triple Proposal} using standard techniques. 

\begin{theorem}\label{thm:maxgap_mar}
The maximum gap for the marriage model satisfies $\Lambda(\I_{\rm Marriage})=2$.
This holds even if preference lists of residents are strict.
\end{theorem}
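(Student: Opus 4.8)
The plan is to prove the two inequalities $\Lambda(\I_{\rm Marriage})\le 2$ and $\Lambda(\I_{\rm Marriage})\ge 2$ separately, realizing the lower bound by an explicit instance with strict resident lists. Throughout I write $H_0=\{h:\ell(h)=0\}$ and $H_1=\{h:\ell(h)=1\}$; since every hospital has $u(h)=1$, the score of a matching $M$ is $s(M)=|H_0|+|\{h\in H_1:|M(h)|=1\}|$, because each $h\in H_0$ always scores $1$ while each $h\in H_1$ scores $1$ exactly when matched. Thus only the matched/unmatched status of $H_1$-hospitals matters.

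For the upper bound I would take two arbitrary stable matchings $M$ and $N$ and analyze the symmetric difference $M\triangle N$. As every agent has capacity $1$, each vertex meets at most one $M$-edge and one $N$-edge, so $M\triangle N$ decomposes into alternating paths and even cycles. On cycles, and on all vertices matched identically by both, $s_M$ and $s_N$ agree, so it suffices to prove $s_N(P)\le 2\,s_M(P)$ for each path component $P$ and then sum. The key per-path estimate I would establish is: $s_N(P)\le s_M(P)+1$, and whenever $s_N(P)>s_M(P)$ one has $s_M(P)\ge 1$. Both follow from a short case analysis according to whether $P$ has equally many $M$- and $N$-edges or one more of either kind (these numbers differ by at most one), tracking which hospital endpoint of $P$ is matched in which matching and whether it lies in $H_0$ or $H_1$; the score can grow by at most $1$, and only via a hospital endpoint that is matched in $N$ and belongs to $H_1$.

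The one configuration that threatens this estimate — and the step I expect to be the main obstacle — is a path that is a single $N$-edge $(r^\ast,h^\ast)$ with both endpoints unmatched in $M$ and $h^\ast\in H_1$: there $s_M(P)=0$ while $s_N(P)=1$, which would break the factor $2$. This case is exactly where the stability of $M$ enters: an acceptable pair with both endpoints unmatched in $M$ is a blocking pair, contradicting stability, so no such component exists. In every other increasing path, either a hospital endpoint is already matched by $M$ (short paths) or an interior hospital is matched by $M$ (longer $M$-augmenting paths), giving $s_M(P)\ge 1$ and hence $s_N(P)\le 2\,s_M(P)$. Summing over all components yields $s(N)\le 2\,s(M)$, i.e.\ $\opt(I)\le 2\,\wst(I)$ for every instance $I$, so $\Lambda(\I_{\rm Marriage})\le 2$; note that this direction needs no assumption on where ties occur.

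For the matching lower bound I would exhibit a small instance with strict resident preferences on which the ratio equals $2$: take $R=\{r_1,r_2\}$ and $H=\{h_1,h_2\}$ with $\ell(h_1)=\ell(h_2)=1$, resident lists $r_1\colon h_1$ and $r_2\colon h_1\ \succ\ h_2$, and hospital lists $h_2\colon r_2$ and a single tie $h_1\colon (r_1\ r_2)$. Then $\{(r_2,h_1)\}$ is stable with score $1$ — the pair $(r_1,h_1)$ does not block because $h_1$ is indifferent between $r_1$ and $r_2$, and $(r_2,h_2)$ does not block because $r_2$ prefers its partner $h_1$ — while $\{(r_1,h_1),(r_2,h_2)\}$ is stable with score $2$. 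Since the resident lists are strict, this shows $\Lambda(\I_{\rm Marriage})\ge 2$ already within the strict-resident subfamily, and together with the upper bound it gives $\Lambda(\I_{\rm Marriage})=2$, even under that restriction.
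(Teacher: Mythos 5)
Your proof is correct and follows essentially the same route as the paper: decompose $M\cup N$ (equivalently $M\triangle N$ plus the common part) into alternating paths and cycles, observe that at most one hospital per path can gain score in $N$ over $M$ and that the only bad configuration --- a single $N$-edge with both ends unmatched in $M$ and lower quota $1$ --- is excluded because it would be a blocking pair, and your tight example is the very same two-resident, two-hospital instance used in the paper. No gaps.
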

\begin{proof}
For any $I\in \I_{\rm Marriage}$, let $N$ and $M$ be stable matchings with $s(N)=\opt(I)$ and $s(M)=\wst(I)$.
Consider a bipartite graph $G=(R,H:N\cup M)$, where an edge used in both $N$ and $M$ is regarded as a length-two cycle in $G$.
Then, each connected component $C$ of $G$ is either an alternating cycle or an alternating path (possibly of length $0$).
Let $s_N(C)$ and $s_M(C)$, respectively, be the scores of $N$ and $M$ on the component $C$.
Then it suffices to show that $\frac{s_N(C)}{s_M(C)} \leq 2$ holds for any $C$.
If $C$ is an alternating cycle, then $s_N(h)=s_M(h)=1$ for every hospital $h$ on $C$, so $s_N(C)=s_M(C)$.
Suppose then that $C$ is an alternating path.
Note that a hospital $h$ on $C$ satisfies $s_N(h)=1$ and $s_M(h)=0$ only if $\ell(h)=1$ and $h$ is matched only in $N$ (and hence $h$ is an end terminal of a path).
At most one hospital in $C$ can satisfy this condition.
Note also that other hospitals $h$ on $C$ satisfy $s_M(h)=1$.
Now, let $k$ be the number of hospitals in $C$.
By the above argument, we have $\frac{s_N(C)}{s_M(C)} \leq \frac{k}{k-1}$, so if $k \geq 2$, we are done.
When $k=1$, there are three cases. 
If $C$ is a length-two path with a hospital $h$ being a midpoint, then $h$ is matched in both $N$ and $M$, so $s_N(C)=s_M(C)=1$.
If $C$ is an isolated vertex, then clearly $s_N(C)=s_M(C)=0$.
If $C$ is a length-one path $(r, h)$ of an edge of $N$ (resp. $M$), then $(r, h)$ blocks $M$ (resp. $N$), a contradiction.
Hence, $\Lambda(\I_{\rm Marriage}) \leq 2$.

We next provide an instance $I\in\I_{\rm Marriage}$ such that $\frac{\opt(I)}{\wst(I)} \geq  2$
and the preference lists of residents in $I$ are strict.
Let $I$ be  an instance defined as follows:
\begin{center}
\renewcommand\arraystretch{1.2}
\begin{tabular}{llllllllllllllllllllllllll}
$r_{1}$: & $h_{1}$ &  & & \hspace{15mm} & $h_{1}$ $[1, 1]$: &  (~$r_{1}$ &  $r_{2}$~) \\
$r_{2}$: & $h_{1}$ & $h_2$ & & \hspace{15mm} & $h_{2}$ $[1, 1]$: &   ~~$r_{2}$ &  &\\
\end{tabular}
\end{center}
Then, both $N=\{(r_1,h_1), (r_2,h_2)\}$ and $M=\{(r_2,h_1)\}$ are stable matchings and their scores are $s(N)=2$ and $s(M)=1$. 
\switch{\qed}{}\end{proof}

\begin{theorem}\label{thm:approx_mar}
The approximation factor of {\sc Triple Proposal} for the marriage model satisfies $\app(\I_{\rm Marriage})=1.5$.
\end{theorem}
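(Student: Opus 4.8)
The plan is to establish the two bounds $\app(\I_{\rm Marriage})\le 1.5$ and $\app(\I_{\rm Marriage})\ge 1.5$ separately. For the upper bound I would reuse the decomposition from the proof of Theorem~\ref{thm:maxgap_mar}: with $N$ an optimal stable matching and $M$ the output of {\sc Triple Proposal}, I view $G=(R,H:M\cup N)$ and bound $s_N(C)/s_M(C)$ on each connected component $C$, which is an alternating path or cycle. Cycles again give $s_N(C)=s_M(C)$, so only paths matter. On a path with $k$ hospitals, let $|A|$ and $|B|$ denote the numbers of hospitals with $\ell=1$ that are matched only in $N$ and only in $M$, respectively; since every $\ell=0$ hospital and every hospital matched in both contributes $1$ to both scores, one gets $s_N(C)=k-|B|$ and $s_M(C)=k-|A|$. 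A type-$A$ hospital has degree $1$, so it is a path endpoint whose edge lies in $N$; a parity check (two hospital endpoints force an even number of edges, whereas two end-edges in $N$ force an odd number) shows $|A|\le 1$. Hence the ratio exceeds $1$ only when $|A|=1$ and $|B|=0$, where it equals $k/(k-1)\le 3/2$ as soon as $k\ge 3$.

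The heart of the argument is to rule out $k\in\{1,2\}$ in this bad case, which are exactly the configurations that produce the ratios $\infty$ and $2$ responsible for the maximum gap of Theorem~\ref{thm:maxgap_mar}. When $k=1$ the component is a single $N$-edge with both endpoints unmatched in $M$, i.e.\ a blocking pair of $M$, contradicting stability (Lemma~\ref{lem:stability}). For $k=2$ only two shapes occur: $h_A\overset{N}{-}r\overset{M}{-}h'$ with $\ell(h')=0$, and $h_A\overset{N}{-}r_1\overset{M}{-}h_2\overset{N}{-}r_2$, where $h_A$ is the type-$A$ endpoint. In the first, stability of both $M$ and $N$ forces $h'=_r h_A$, and since $M(r)=h'$ with $|M(h')|>\ell(h')=0$, Lemma~\ref{lem:property}(ii) yields $|M(h_A)|\ge\ell(h_A)=1$, contradicting that $h_A$ is unmatched. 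The second shape is precisely a length-$3$ $M$-augmenting path $(r_2,h_2,r_1,h_A)$ with $(r_2,h_2),(r_1,h_A)\in N$, which is excluded by Lemma~\ref{lem:property2}. Therefore every path satisfies $s_N(C)\le\frac32 s_M(C)$ (components with $s_M(C)=0$ also have $s_N(C)=0$), and summing over components gives $\opt(I)/\alg(I)=s(N)/s(M)\le 3/2$.

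For the matching lower bound I would exhibit one small instance on which every execution of {\sc Triple Proposal} is a factor $3/2$ from optimal, the loss being a length-$5$ augmenting path the algorithm cannot shortcut. Take residents $r_1,r_2$ and hospitals $h_A[1,1],\,h_1[1,1],\,h'[0,1]$ with lists $r_1\colon h_1\ h_A$ (strict), $r_2\colon (h_1\ h')$, $h_A\colon r_1$, $h_1\colon (r_1\ r_2)$, and $h'\colon r_2$. Then $N=\{(r_1,h_A),(r_2,h_1)\}$ is stable with $s(N)=3=\opt(I)$, while $M=\{(r_1,h_1),(r_2,h')\}$ is stable with $s(M)=2$. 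Because $\ell(h')=0$ is smallest in $r_2$'s tie, $r_2$ first proposes to (and is bounced from) $h'$, then proposes to $h_1$; the ``reject an as-yet-unrejected resident'' step together with $r_1$'s persistence at its top choice $h_1$ drives $r_2$ back out of $h_1$, after which $r_2$ settles at $h'$ on its second proposal there. I would argue that $r_2$ never re-enters $h_1$ (so $r_1$ is never forced off $h_1$ onto the empty hospital $h_A$) under any allowed choices, so $\alg(I)=2$ in every execution and $\app(\I_{\rm Marriage})\ge 3/2$.

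The step I expect to be the main obstacle is the small-path analysis in the upper bound: checking that the two ratio-$2$ shapes for $k=2$ are the only possibilities and that Lemmas~\ref{lem:property}(ii) and~\ref{lem:property2} eliminate exactly them. A close second is the stuckness verification for the lower-bound instance, where the interaction between the smallest-$\ell$ tie-breaking rule and the bouncing step must be tracked carefully to show no execution ever fills $h_A$.
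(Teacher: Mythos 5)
Your proof is correct and follows essentially the same route as the paper's: the same component-wise decomposition of $M\cup N$ into alternating paths and cycles, with the only dangerous configuration (a path with $k=2$ hospitals, one of them an $\ell=1$ hospital matched only in $N$) split into the same two shapes and eliminated by Lemma~\ref{lem:property}(ii) and Lemma~\ref{lem:property2} exactly as in the paper. The only difference is the tightness instance: you place the tie on $r_2$'s preference list (so that every execution of {\sc Triple Proposal} outputs a score-$2$ matching, which I verified), whereas the paper places the tie on $h_1$'s list and fixes an adversarial tie-breaking; both witnesses are valid.
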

\begin{proof}
For any $I\in \I_{\rm Marriage}$, let $M$ be the output of {\sc Triple Proposal} and $N$ be an optimal stable matching.
By arguments in the proof of Theorem~\ref{thm:maxgap_mar}, it suffices to show that there is no component of $G=(R,H; N\cup M)$ that forms an alternating path containing exactly two hospitals $h_1$ and $h_2$ with $s_N(h_1)=1$, $s_M(h_1)=0$, $s_N(h_2)=1$, and $s_M(h_2)=1$.
Suppose on the contrary that there is such a path. Then, $\ell(h_1)=1$, $h_1$ is matched only in $N$, and there exists a resident $r_1$ such that $N(r_1)=h_1$ and $M(r_1)=h_2$.
Since the path consists of only two hospitals, there are two cases: (i) $h_2$ is unmatched in $N$ and $\ell(h_2)=0$ and (ii) there exists a resident $r_2$ such that $N(r_2)=h_2$ and $M(r_2)=\varnothing$.
However, (ii) is impossible since otherwise $(r_2, h_2, r_1, h_1)$ is a length-3 $M$-augmenting path, which contradicts Lemma~\ref{lem:property2}.
Hence, assume (i). As $h_1$ is assigned no resident in $M$, we have $h_2=M(r_1)\succeq_{r_1} h_1$ by the stability of $M$.
Similarly, as $h_2$ is unmatched in $N$, the stability of $N$ implies $h_1=N(r_1)\succeq_{r_1} h_2$, and hence $h_1=_{r_1} h_2$.
Since $|M(h_2)|=1>\ell(h_2)$, Lemma~\ref{lem:property}(ii) implies $|M(h_1)|\geq \ell(h_1)=1$, 
which contradicts $|M(h_1)|=0$.
Thus we obtain $\frac{s(N)}{s(M)}\leq 1.5$ and hence $\app(\I_{\rm Marriage}) \leq 1.5$.

We next provide an instance $I\in \I_{\rm Marriage}$ such that $\frac{\opt(I)}{\alg(I)} \geq  1.5$.
Let $I$ be an instance defined as follows:
\begin{center}
\renewcommand\arraystretch{1.2}
\begin{tabular}{llllllllllllllllllllllllll}
$r_{1}$: & $h_{1}$ & $h_{2}$ & & \hspace{15mm} & $h_{1}$ $[1, 1]$: &  (~$r_{1}$ &  $r_{2}$~) \\
$r_{2}$: & $h_{1}$ & $h_{3}$ & & \hspace{15mm} & $h_{2}$ $[1, 1]$: &   ~~$r_{1}$ &  &\\
 & & & &  & $h_{3}$ $[0, 1]$: &  ~~$r_{2}$ &    \\
\end{tabular}
\end{center}
Execute {\sc Triple Proposal} for $I$ by prioritizing  $r_1$ over $r_2$ when an arbitrariness arises. 
That is, if $r_1$ and $r_2$ both satisfy the condition to be selected as $r'$ at Line~\ref{chosen2}, we choose $r_2$.  We then observe that the output of the algorithm is $M=\{(r_1, h_1), (r_2, h_3)\}$, which satisfies $s(M)=2$.
We see that the matching $N=\{(r_1, h_2), (r_2, h_1)\}$ stable and satisfies $s(N)=3$.
\switch{\qed}{}\end{proof}

\subsection{Uniform Model}
Let ${\cal I}_{\rm Uniform}$ denote the family of uniform problem instances of HRT-MSLQ, where an instance is called {\em uniform} if
if all hospitals have the same upper quotas and the same lower quotas.
In the rest of this subsection, we assume that $\ell$ and $u$ are nonnegative integers to represent 
the common lower and upper quotas, respectively, and let $\theta\coloneqq \frac{u}{\ell}~(\geq 1)$.
We call ${\cal I}_{\rm Uniform}$ the {\em uniform model}. 

Analysing this model is not so easy as the marriage model. Observe that, in the marriage model, our objective can be regarded as a weight maximization where the weight of each edge in $E$ is $1$ if it is incident to a hospital with quotas $[1,1]$ and $0$ otherwise. In the uniform model, however, such an interpretation is impossible and our objective function $\sum_{h\in H} \min\{ 1, \frac{|M(h)|}{\ell} \}$ is non-linear, which makes the analysis harder. We analyze this model by extending the techniques proposed in Goko et al.\cite{goko_et_al:LIPIcs.STACS.2022.31}.

We first provide the maximum gap.
The proof is presented in the Appendix.
\begin{theorem}\label{thm:maxgap_uni}
The maximum gap for the uniform model satisfies $\Lambda(\I_{\rm Uniform}) = \theta+1$.
This holds even if preference lists of residents are strict.
\end{theorem}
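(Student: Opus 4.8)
The plan is to establish the two inequalities $\Lambda(\I_{\rm Uniform}) \le \theta+1$ and $\Lambda(\I_{\rm Uniform}) \ge \theta+1$ separately, with the lower bound witnessed by an explicit instance whose resident lists are strict. Throughout I assume $\ell \ge 1$ (if $\ell = 0$ every hospital scores $1$, so the gap is trivially $1$ and $\theta$ is undefined). The first move is to linearize the objective: writing $n_M(h) = \min\{|M(h)|,\ell\}$ gives $\ell\, s(M) = \sum_{h} n_M(h)$, and likewise $\ell\, s(N) = \sum_h n_N(h)$, so I may argue with integer \emph{occupancy counts} rather than with the nonlinear scores. It then suffices to show $\ell\,s(N) \le (\theta+1)\,\ell\,s(M)$ for any two stable matchings $N$ (optimal) and $M$ (worst).

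For the upper bound I would use a charging argument. For each hospital $h$ pick an arbitrary subset $N'(h) \subseteq N(h)$ of size $n_N(h) \le \ell$ and set $R' = \bigcup_h N'(h)$, so that $|R'| = \ell\,s(N)$; every member of $R'$ is matched in $N$. Charge each $r \in R'$ to a single hospital: if $r$ is matched in $M$, charge it to $M(r)$; if $r$ is unmatched in $M$, charge it to $N(r)$. The structural input is stability of $M$: a resident matched in $N$ but unmatched in $M$ prefers $N(r)$ to being unmatched, so $N(r)$ must be full in $M$ (else $(r,N(r))$ blocks), whence $n_M(N(r)) = \ell$. Now bound the charge received by a fixed hospital $h'$. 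Through the first rule it receives at most $|M(h')| \le \theta\, n_M(h')$ residents, using $|M(h')| \le u = \theta\ell$ together with $n_M(h') = \ell$ whenever $|M(h')| > \ell$. Through the second rule it receives at most $n_N(h') \le \ell$ residents, and only when $h'$ is full in $M$, i.e.\ when $n_M(h') = \ell$; so this contribution is at most $n_M(h')$. Summing, $h'$ absorbs at most $(\theta+1)\,n_M(h')$, and therefore $\ell\,s(N) = |R'| \le (\theta+1)\sum_{h'} n_M(h') = (\theta+1)\,\ell\,s(M)$.

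For tightness I would construct, for each $(\ell,u)$ with $\theta = u/\ell$, an instance attaining the ratio. Take a hospital $h_0$ and backups $h_1,\dots,h_k$ with $k = \lceil\theta\rceil$, all of quota $[\ell,u]$. Introduce $u$ \emph{local} residents $a_1,\dots,a_u$ whose lists consist only of $h_0$, and $u$ \emph{mobile} residents partitioned into groups of size $\ell$ (with one smaller remainder group), where the $g$-th group lists $h_0$ first and its backup $h_g$ second. Let each $h_g$ rank only the residents of its own group (ties allowed) and let $h_0$ be indifferent among all residents. The worst matching $M$ places every mobile resident on $h_0$, filling it to $u$ (score $1$), and leaves all $a_i$ unmatched and all backups empty, so $s(M)=1$; it is stable because the $a_i$ accept only the full, indifferent $h_0$, and each mobile resident already holds its top choice. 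The optimal matching $N$ fills $h_0$ with the local residents (full, score $1$) and routes each mobile group to its backup, yielding $s(N) = 1 + \theta$; here no mobile resident can block the now-full $h_0$ precisely because $h_0$ is indifferent. Hence $\opt(I)/\wst(I) = \theta+1$ with strict resident lists.

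I expect the main obstacle to be obtaining the exact coefficient $\theta+1$ in the upper bound rather than the naive $2\theta$ one gets by bounding both resident classes by $u$ per hospital. The decisive point is to count the unmatched-in-$M$ residents only through the capped set $R'$, so that at most $n_N(h') \le \ell$ of them are charged to any full hospital $h'$, exactly matching its full occupancy $n_M(h')=\ell$; pairing this with the $\theta\,n_M(h')$ bound on the matched-in-$M$ residents is what makes the estimate tight. A secondary point requiring care is the stability of the worst matching in the lower-bound instance: leaving the backups empty is safe only because every mobile resident strictly prefers $h_0$, which in turn forces $h_0$ to be full in $N$ to block the resulting threats — a requirement I meet by padding $h_0$ with the local residents and using ties on the hospital side, thereby keeping the resident lists strict.
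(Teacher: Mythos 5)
Your proof is correct and follows essentially the same route as the paper's: the upper bound rests on the same two facts (residents unmatched in the worst matching $M$ can only be assigned by $N$ to hospitals that are full in $M$, contributing at most $\ell$ per such hospital, while residents matched in $M$ contribute at most $u=\theta\ell$ per hospital), merely organized as a per-hospital charging scheme instead of the paper's global partition into matched-to-full, matched-to-undersubscribed, and unmatched residents. Your tight instance is likewise a minor repackaging of the paper's (a hub hospital with a tie over all residents, plus backup hospitals reachable only by designated residents), grouping the backups into blocks of size $\ell$ rather than using one backup per resident.
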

\begin{toappendix}
\begin{proof}[Proof of Theorem~\ref{thm:maxgap_uni}]
Here we only show that $\frac{\opt(I)}{\wst(I)}\leq \theta+1$ for any instance $I\in \I_{\rm Uniform}$. The other inequality is shown in Proposition~\ref{prop:uniform_gap_tight}.

Let $N$ and $M$ be stable matchings of maximum and minimum scores, respectively.
Define $v(N)=\ell \cdot s(N)$ and $v(M)=\ell\cdot s(M)$. Then, $\frac{v(N)}{v(M)}=\frac{s(N)}{s(M)}=\frac{\opt(I)}{\wst(I)}$. 
For residents assigned to each $h\in H$, we set $p_{M}(r)=1$ for $\min \{ \ell(h), |M(h)| \}$ residents $r$, and $p_{M}(r)=0$ for the remaining $|M(h)|-\min \{ \ell(h), |M(h)| \}$ residents $r$. Then, we have $v(M)=\sum_{r\in R}p_M(r)$.

Let $H^{*}$ be the set of hospitals that are full in $M$.
Partition the residents as $R=R_{1} \cup R_{2} \cup R_{3}$, where $R_{1} = \set{ r|M(r) \in H^{*} }$, $R_{2} = \set{ r | M(r) \in H \setminus H^{*} }$, and $R_{3} = \set{ r |M(r)=\varnothing}$.
Let $x$ and $y$ be the numbers of residents $r$ in $R_{2}$ such that $p_{M}(r)=1$ and $p_{M}(r)=0$, respectively.
Then the total number of residents is $n=|H^{*}| u+x+y+|R_{3}|$.
By definition, we have that $v(M)=|H^{*}| \ell+ x$.
We now bound the value of $v(N)$.
Note that the list of any $r\in R_{3}$ contains only hospitals in $H^{*}$, since otherwise a hospital in $H\setminus H^*$ appearing on the list forms a blocking pair with $r$. Hence, at most $|H^{*}| \ell$ residents $r$ in $R_{3}$ can have $p_{N}(r)=1$.
Then, it follows that $v(N) \leq |H^{*}| u+x+y+|H^{*}| \ell$.

If $M(h)$ includes a resident $r$ such that $p_{M}(r)=0$, then $M(h)$ includes $\ell$ residents $r'$ such that $p_{M}(r')=1$.
Thus, we have $\frac{y}{x} \leq \frac{u-\ell}{\ell} = \theta -1$, which implies $\frac{x+y}{x} \leq \theta$.
Therefore
\begin{align*}
\frac{v(N)}{v(M)} &\leq \frac{|H^{*}|(u+\ell) +(x+y)}{|H^{*}|\ell + x} \\
&\leq \frac{|H^{*}|\ell(1+\theta) +x\theta}{|H^{*}|\ell + x}\\
&= \frac{|H^{*}|\ell(1+\theta) +x(1+\theta)-x}{|H^{*}|\ell + x}
\leq 1+\theta.
\end{align*}

\switch{\qed}{}\end{proof}

\begin{proposition}\label{prop:uniform_gap_tight}
There is an instance $I\in \I_{\rm Uniform}$ such that $\frac{\opt(I)}{\wst(I)} \geq  \theta+1$ holds.
This holds even if preference lists of residents are strict.
\end{proposition}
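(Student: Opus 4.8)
The plan is to prove the lower bound by exhibiting a single uniform instance $I$ together with two stable matchings $N$ and $M$ for which $s(N)=\theta+1$ and $s(M)=1$; since $\opt(I)\ge s(N)$ and $\wst(I)\le s(M)$, this yields $\frac{\opt(I)}{\wst(I)}\ge \theta+1$. The construction generalizes the marriage-model gadget used in the proof of Theorem~\ref{thm:maxgap_mar}, whose essential feature is a hospital that is \emph{full in both} matchings but whose occupants differ, thereby freeing a group of residents to generate extra score elsewhere in the optimal matching.

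Concretely, I would take two disjoint groups of $u$ residents each, $A=\{a_1,\dots,a_u\}$ and $B=\{b_1,\dots,b_u\}$, a distinguished hospital $h^\ast$, and $k=\lceil u/\ell\rceil$ further hospitals $g_1,\dots,g_k$, all with the common quota pair $[\ell,u]$. Partition $B$ into blocks $B_1,\dots,B_k$ with $|B_j|=\ell$ for $j<k$ and $|B_k|=u-(k-1)\ell$. Each $a_i$ has the singleton list $h^\ast$; each resident of $B_j$ has the strict list ``$h^\ast\ g_j$''; hospital $g_j$ lists exactly the residents of $B_j$; and $h^\ast$ places all of $A\cup B$ into a single tie. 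Define $M$ by assigning $B$ to $h^\ast$ (which is then full) and leaving $A$ unmatched, and define $N$ by assigning $A$ to $h^\ast$ and each block $B_j$ to $g_j$. Then $s(M)=1$, while in $N$ the hospital $h^\ast$ is again full and the $g_j$'s contribute $\sum_{j}\min\{1,|B_j|/\ell\}=u/\ell=\theta$, so $s(N)=\theta+1$. Note all residents' lists are strict, as required.

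The remaining work is to verify that both $M$ and $N$ are stable, which is where the argument must be checked with care. For $M$: each unmatched $a_i$ lists only $h^\ast$, which is full and ranks $a_i$ no better than any occupant (all tied), so $a_i$ is in no blocking pair; each $b\in B_j$ is assigned its top choice $h^\ast$, and although $g_j$ is empty, $b$ strictly prefers $h^\ast$ to $g_j$, so $(b,g_j)$ does not block. For $N$: a resident $b\in B_j$ would prefer $h^\ast$ to its assignment $g_j$, but $h^\ast$ is full and, by the tie, does not prefer $b$ to any member of $A$, so no blocking pair arises; moreover the only residents listing $g_j$ are those of $B_j$, all already assigned there, so the undersubscription of the $g_j$'s is harmless.

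The main obstacle is precisely this stability bookkeeping, because an undersubscribed hospital carrying score $1$ (i.e.\ holding between $\ell$ and $u$ residents) is a natural magnet for blocking pairs; a naive attempt to spread $B$ over empty hospitals while leaving $h^\ast$ below capacity in $N$ fails immediately. The two devices that resolve this are (a) filling $h^\ast$ to its full capacity $u$ in $N$ via the auxiliary group $A$, so that $B$ can never improve onto $h^\ast$, and (b) the single tie at $h^\ast$, which both lets $A$ retain $h^\ast$ against the equally ranked $B$ in $N$ and prevents the unmatched $A$ from blocking $M$. Finally, the remainder block (present only when $\ell\nmid u$) produces a partially filled $g_k$ of score $(u\bmod\ell)/\ell$, which is exactly the fractional contribution needed so that the $g_j$'s total score equals $\theta$ for every rational $\theta=u/\ell$.
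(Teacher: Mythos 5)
Your construction is correct and is essentially the paper's own: a central hospital with a single tie over all residents that is full in both matchings, a group of auxiliary residents who list only that hospital (and fill it in the optimal matching), and peripheral hospitals that collect a total score of $\theta$ in the optimal matching while sitting empty in the bad one. The only (immaterial) difference is that the paper uses $u$ peripheral hospitals each receiving one resident (score $1/\ell$ each) whereas you use $\lceil u/\ell\rceil$ hospitals each receiving a block of $\ell$ residents (score $1$ each, plus a fractional remainder); both sum to $\theta$ and both stability verifications go through.
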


\begin{proof}
Consider an instance $I$ with residents $\{r_1, r_2, \dots, r_{u}\} \cup \{r'_1, r'_2, \dots, r'_{u} \}$ and hospitals $\{h_1, h_2, \dots, h_{u}, x\}$. 
Preference lists and quotas are as follows, where ``(~~~$R$~~~)'' means a tie consisting of all residents.
\begin{center}
\renewcommand\arraystretch{1.2}
\begin{tabular}{llllllllllllllllll}
$r_{i}$: & $x$ & $h_{i}$ & &\hspace{15mm} & $h_i$ $[\ell, u]$: &  $r_{i}$ \hspace{15mm}\\
$r'_{i}$: & $x$ & & &\hspace{15mm} & $x$ $[\ell, u]$: &  (~~~$R$~~~)\\
\end{tabular}
\end{center}
Note that ties appear only in hospitals' preference lists.

Define matchings $N=\set{(r_i, h_i) \mid i=1, 2, \dots, u} \cup \set{(r'_i, x) \mid i=1, 2, \dots, u}$ and $M=\set{(r_i, x) \mid i=1, 2, \dots, u}$.
Then, $s(N)=1+\frac{u}{\ell} = \theta+1$ while $s(M)=1$.
Thus we obtain $\frac{\opt(I)}{\wst(I)} \geq \theta+1$.
\switch{\qed}{}\end{proof}
\end{toappendix}

We next show the approximation factor of our algorithm. 
\begin{theorem}\label{thm:IL-uniform-alg}
The approximation factor of {\sc Triple Proposal} for the uniform  model satisfies $\app(\I_{\rm uniform})=\frac{\theta}{2}+1$.
\end{theorem}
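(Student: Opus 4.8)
The plan is to establish the upper bound $\app(\I_{\rm Uniform}) \le \frac{\theta}{2}+1$ and then exhibit a matching tight instance. For the upper bound, I would fix an arbitrary uniform instance $I$, let $M$ be the output of {\sc Triple Proposal} and $N$ an optimal stable matching, and reuse the accounting scheme from the proof of Theorem~\ref{thm:maxgap_uni}. That is, I would scale scores by $\ell$ via $v(M)=\ell\cdot s(M)$ and $v(N)=\ell\cdot s(N)$, assign potentials $p_M(r)\in\{0,1\}$ to residents so that $v(M)=\sum_r p_M(r)$, and partition $R=R_1\cup R_2\cup R_3$ according to whether $M(r)$ is full, non-full-but-matched, or unmatched. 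The key difference from the maximum-gap proof is that the residents in $R_3$ (those unmatched by $M$) must now be controlled using the stronger structural guarantees of {\sc Triple Proposal}, namely the absence of length-3 augmenting paths (Lemma~\ref{lem:property2}) and the monotonicity/priority properties (Lemma~\ref{lem:property}).

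First I would bound $v(N)$ in terms of the same quantities $|H^*|$, $x$, $y$ appearing in the maximum-gap analysis, but with a sharper estimate on how many residents of $R_3$ can contribute $p_N(r)=1$. In the maximum-gap proof, every $r\in R_3$ could be absorbed by the $|H^*|\ell$ lower-quota seats of full hospitals, giving the extra additive $|H^*|\ell$ term that produces the factor $\theta+1$. With {\sc Triple Proposal}, the point is that such an $r\in R_3$ with $N(r)=h$ and $h$ deficient in $M$ would, together with the structure around $h$, tend to create a length-3 $M$-augmenting path $(r,h,r',h')$ contradicting Lemma~\ref{lem:property2}, unless $h$'s seat in $N$ is already ``paid for'' by a resident counted in $v(M)$. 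The hard part will be making this charging rigorous: I must show that the deficient-hospital seats that $N$ fills using $R_3$-residents can be charged, via the augmenting-path obstruction, against seats that $M$ already fills with $p_M=1$ residents, so that the $|H^*|\ell$ slack term is roughly halved. I expect the correct statement to be that the contribution of $R_3$ to $v(N)$ is at most about $\frac{1}{2}|H^*|\ell$ (or is traded off against a corresponding reduction elsewhere), which is exactly what converts the bound $\frac{v(N)}{v(M)}\le 1+\theta$ into $\frac{v(N)}{v(M)}\le 1+\frac{\theta}{2}$.

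Concretely, after obtaining an inequality of the shape $v(N)\le |H^*|u + x + y + (\text{controlled }R_3\text{ term})$ together with $v(M)=|H^*|\ell + x$ and the ratio bound $\frac{x+y}{x}\le\theta$ from the uniform structure, I would optimize the resulting fraction over the free parameters exactly as in Theorem~\ref{thm:maxgap_uni}, now carrying the extra factor $\tfrac12$ on the slack term through the algebra to reach $\frac{\theta}{2}+1$. Finally, for tightness I would construct a uniform instance on which {\sc Triple Proposal} is forced, by the prioritization rules in Lines~\ref{proposal1}--\ref{chosen2}, into a stable matching leaving deficient hospitals that an optimal $N$ could fill; the instance should mimic the gap-witnessing construction of Proposition~\ref{prop:uniform_gap_tight} but arranged so that the third round of proposals cannot repair the deficiency, yielding $\frac{\opt(I)}{\alg(I)}\ge\frac{\theta}{2}+1$. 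The main obstacle throughout is the charging argument in the upper bound: correctly identifying which $R_3$-contributions of $N$ are blocked by Lemma~\ref{lem:property2} and quantifying the resulting two-to-one savings is where the improvement over the maximum gap genuinely lives.
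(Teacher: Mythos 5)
You correctly identify Lemma~\ref{lem:property2} as the new structural ingredient, but the proposal has a genuine gap: the accounting scheme you want to reuse from Theorem~\ref{thm:maxgap_uni} cannot deliver the factor $\frac{\theta}{2}+1$ even if your charging argument succeeded, and that charging argument --- which you defer as ``the hard part'' --- is where the entire proof lives. Concretely, with $v(M)=|H^*|\ell+x$, $\frac{x+y}{x}\leq\theta$, and your hoped-for bound $v(N)\leq |H^*|u+x+y+\frac{1}{2}|H^*|\ell$, the ratio becomes $\frac{|H^*|\ell(\theta+\frac{1}{2})+x\theta}{|H^*|\ell+x}$, and requiring this to be at most $1+\frac{\theta}{2}$ amounts to $|H^*|\ell(\frac{\theta}{2}-\frac{1}{2})+x(\frac{\theta}{2}-1)\leq 0$, which already fails when $x=0$ and $\theta>1$. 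The tight instance (Proposition~\ref{prop:uniform-approx-tight}) makes the defect visible: there $x=y=0$, $|H^*|=2$, and the $R_3$-contribution to $v(N)$ is already only $\frac{1}{2}|H^*|\ell$, yet your inequality evaluates to $\frac{2u+\ell}{2\ell}=\theta+\frac{1}{2}$ while the true ratio is only $\frac{\theta}{2}+1$; the remaining slack sits in the term $|H^*|u$, i.e., in assuming that every resident whom $M$ sends to a full hospital contributes a full unit to $v(N)$. So halving the $R_3$ term is not where the improvement comes from. A second, smaller inaccuracy: an unmatched resident $r$ with $N(r)$ deficient in $M$ is excluded by stability alone (it would be a blocking pair), not by Lemma~\ref{lem:property2}; the lemma acts one layer further out, forcing residents whom $N$ assigns to hospitals that absorb (under $M$) the residents sent by $N$ to deficient hospitals to be matched in $M$.

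The paper's proof takes a different route. It decomposes $(R,H;M\cup N)$ into connected components and, inside each, builds BFS layers $H_0$ (hospitals with $s_N(h)>s_M(h)$), $R_0=N^{-1}(H_0)$, $H_1=M(R_0)\setminus H_0$, $R_1=N^{-1}(H_1)$, and the remainders $H_2,R_2$. Lemma~\ref{lem:property2} enters exactly once, to show that every resident of $R_0\cup R_1$ is matched in $M$ (Claim~\ref{claim:R0R1}); the factor $\frac{\theta}{2}$ then emerges from a case analysis on $\beta=v_N(H_2)$ versus $\alpha/\theta$, where $\alpha=v_N(H_0\cup H_1)-v_M(H_0\cup H_1)$, combined with two independent estimates of $|R_0|+|R_1|$ that require a further split of $H_1$ into $H_1^{\succ}$ and $H_1^{=}$ and an appeal to Lemma~\ref{lem:property}. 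None of this machinery is present in your outline, and the tightness construction is likewise only gestured at (the right instance needs a third layer of residents so that the second round of proposals saturates the wrong hospitals), so the proposal should be regarded as an incomplete plan rather than a proof.
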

We present a full proof of Theorem~\ref{thm:IL-uniform-alg} in the Appendix. 
The framework of the proof is similar to that of Theorem~9 of Goko et al.~\cite{Goko}. 
Here we explain an outline of the proof of $\app(\I_{\rm uniform})\leq \frac{\theta}{2}+1$ putting emphasis on the difference from~\cite{Goko}.

Let $M$ be the output of the algorithm and $N$ be an optimal stable matching. Consider a bipartite (multi-)graph $(R, H; M\cup N)$.
To complete the proof, it suffices to show that 
the approximation factor is attained in each connected component. Take any connected component and let $H_0$ be the set of all hospitals $h$ with $s_N(h)>s_M(h)$ in the component. That is, hospitals in $H_0$ get larger scores in $N$ than in $M$, which implies that those hospitals are deficient in $M$. We then categorize hospitals and residents using a breadth-ﬁrst search as follows (see Figure~\ref{fig:graph2}):
consider $NM$-alternating paths that start at $H_0$ with edges in $N$. Let $R_0$ and $R_1$ be the sets of residents reachable from $H_0$ via those paths of length $1$ and $3$, respectively and let $H_1$ be the set of hospitals reachable from $H_0$ via those paths of length $2$. The set of remaining residents (resp., hospitals) in the component is denoted as $R_2$ (resp., $H_2$).
\begin{figure}[t]
	\begin{center}
		\includegraphics[width=0.37\hsize]{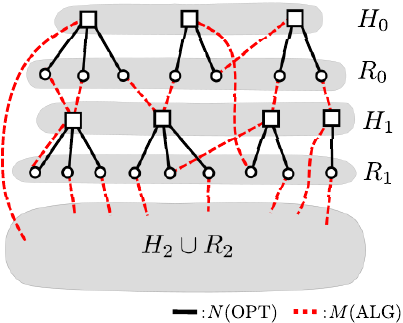}
	\end{center}
	\caption{\small An example of a connected component in $N\cup M$ for the case $[\ell, u]=[2,3]$.
	Hospitals and residents are represented by squares and circles, respectively.
	The matchings $N$ and $M$ are represented by solid (black) lines and dashed (red) lines, 
	respectively. }
	\label{fig:graph2}
\end{figure}

In Goko et al.~\cite{Goko}, the authors upper bound the ratio of the score of $N$ to that of $M$ by a kind of counting argument on the number of residents in the component: they estimate this number in two ways using the stability of $N$ and $M$ and the property in Lemma~\ref{lem:property}. In the estimations, they use the fact that all residents in the component are matched in $M$, which is not true for the current incomplete list setting. However, thanks to the nonexistence of length-3 $M$-augmenting paths shown in Lemma~\ref{lem:property2}, we can guarantee that all residents in $R_0 \cup R_1$ are matched in $M$ (residents in $R_0$ are matched by the stability of $M$). Utilizing this fact, we can obtain the required upper bound.

\begin{toappendix}
\begin{proof}[Proof of Theorem~\ref{thm:IL-uniform-alg}]
Here we only show $\app(\I_{\rm Uniform})\leq \frac{\theta}{2}+1$,
since this together with Proposition~\ref{prop:uniform-approx-tight} shown later
implies the required equality.

Let $M$ be the output of the algorithm and let $N$ be an optimal stable matching.
Consider a bipartite graph $(R, H; M\cup N)$, which may have multiple edges.
To complete the proof, it is sufficient to show that 
the approximation factor is attained in each component of the graph.
Take any connected component and let $R^*$ and $H^*$ respectively denote the set of residents and hospitals in the component.
We define a partition $\{H_0, H_1,H_2\}$ of $H^*$ and
a partition $\{R_0, R_1,R_2\}$ of $R^*$ as follows (See Figure~\ref{fig:graph2-second}, which is taken from \cite{goko_et_al:LIPIcs.STACS.2022.31}). 
First, we set
\begin{align*}
&H_0\coloneqq \set{h\in H^*|s_N(h)>s_M(h)} \mbox{ and}\\
&R_{0}\coloneqq \set{r\in R^*|N(r)\in H_0}.
\end{align*}
That is, $H_0$ is the set of all hospitals in the component for which the optimal stable matching
$N$ gets scores larger than $M$. The set $R_0$ consists of residents assigned to $H_0$ in the optimal matching $N$.
We then define
\begin{align*}
&H_{1}\coloneqq \set{h\in H^*\setminus H_0|\exists r\in R_{0}:M(r)=h},\\
&R_{1}\coloneqq \set{r\in R^*|N(r)\in H_1},\\
&H_2\coloneqq H^*\setminus(H_0\cup H_1), \text{~~and~~}\\  
&R_2\coloneqq R^*\setminus(R_0\cup R_1).
\end{align*}
\begin{figure}[t]
	\begin{center}
		\includegraphics[width=0.37\hsize]{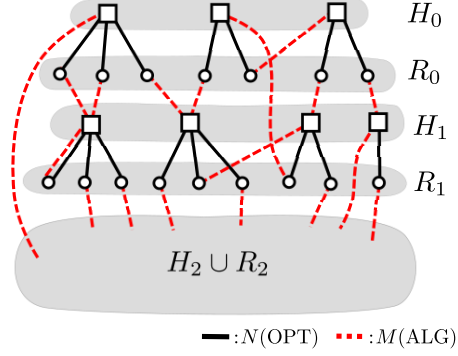}
	\end{center}
	\caption{\small An example of a connected component in $N\cup M$ for the case $[\ell, u]=[2,3]$.
	Hospitals and residents are represented by squares and circles, respectively.
	The matchings $N$ and $M$ are represented by solid (black) lines and dashed (red) lines, 
	respectively. }
	\label{fig:graph2-second}
\end{figure}
For convenience, we use a scaled score $v_M(h)\coloneqq \ell\cdot s_M(h)=\min\{ \ell, |M(h)|\}$ for each $h\in H$ and write $v_M(H')\coloneqq \sum_{h \in H'} v_M(h)$ for any $H'\subseteq H$. 
We define $v_N\coloneqq \ell\cdot s_N(h)=\min\{ \ell, |N(h)|\}$ similarly.
We now show the following inequality, which completes the proof:
\begin{equation}
\frac{v_N(H_0\cup H_1\cup H_2)}{v_M(H_0\cup H_1\cup H_2)}\leq \frac{\theta}{2}+1.
\label{eq:goal}
\end{equation}
Suppose $v_N(H^*)>v_M(H^*)$ since otherwise the claim is trivial. By the definition of $H_0$, we have
\begin{equation}
v_N(H_0)>v_M(H_0), \quad v_N(H_1)\leq v_M(H_1),\quad v_N(H_2)\leq v_M(H_2).
\label{eq:ineqs}
\end{equation}

Let $\alpha\coloneqq v_N(H_0\cup H_1)-v_M(H_0\cup H_1)$, which is positive by
$v_N(H_0\cup H_1)-v_M(H_0\cup H_1)=v_N(H^*)-v_M(H^*)+\{v_M(H_2)-v_N(H_2)\}>0$.
In addition, by $v_N(H_1)\leq v_M(H_1)$, we obtain  $\alpha\leq v_N(H_0)+\{v_N(H_1)-v_M(H_1)\}\leq  v_N(H_0)=\sum_{h \in H_0} \min\{\ell, |N(h)|\}\leq \sum_{h \in H_0} |N(h)|=|R_0|$.
Because $M$ assigns each resident $r\in R_0$ to some hospital in $H_0\cup H_1$ by the definition of $H_1$,
we have $\sum_{h \in H_0\cup H_1} |M(h)|\geq |R_{0}|$.
Combining these implies $\sum_{h \in H_0\cup H_1} |M(h)|\geq\alpha$. 
Since $v_M(h)=\min\{\ell, |M(h)|\}\geq \frac{1}{\theta}|M(h)| $ for any $h\in H_0\cup H_1$, we obtain
\begin{equation}
v_M(H_0\cup H_1)\geq \frac{\alpha}{\theta}.
\label{eq:alpha}
\end{equation}
Let $\beta\coloneqq v_N(H_2)$. Then, we have
\begin{equation}
	v_N(H_0\cup H_1\cup H_2)=\alpha+\beta+v_M(H_0\cup H_1).
	\label{eq:beta}
\end{equation}
We separately consider two cases: (i) $\beta\geq \frac{\alpha}{\theta}$ and (ii) $\beta\leq \frac{\alpha}{\theta}$.

First, consider the case (i).
By \eqref{eq:ineqs}, we have 
$v_M(H_0\cup H_1\cup H_2)\geq v_M(H_0\cup H_1)+v_N(H_2)=\beta+v_M(H_0\cup H_1)$.
Combining this with the equation \eqref{eq:beta}, we obtain \eqref{eq:goal} in this case as follows. 
\begin{eqnarray*}
	\frac{v_N(H_0\cup H_1\cup H_2)}{v_M(H_0\cup H_1\cup H_2)} &\leq & 
	\frac{\alpha+\beta+v_M(H_0 \cup H_1)}{\beta+v_M(H_0 \cup H_1)} \\
	& = & 1+\frac{\alpha}{\beta+v_M(H_0 \cup H_1)} \\
	& \leq & 1 + \frac{\alpha}{\frac{\alpha}{\theta}+\frac{\alpha}{\theta}} \\
	& = & 1+\frac{\theta}{2}.
\end{eqnarray*}
Here the second inequality follows from the inequality \eqref{eq:alpha} and the condition (i). 

We next consider the case (ii) $\beta\leq \frac{\alpha}{\theta}$.
We prepare the following claim.
\begin{\switch{myclaim}{claim}}\label{claim:R0R1}
Any $r\in R_0$ is assigned to $H_0\cup H_1$ in $M$ and satisfies $M(r)\succeq _r N(r)$.
Any $r\in R_1$ is assigned to some hospital in $M$.
\end{\switch{myclaim}{claim}}
\begin{proof}
By the definition of $H_0$, any hospital is deficient in $M$.
Also, by the definition of $R_0$, any $r\in R_0$ is assigned to some $h\in H_0$ in $N$.
As $h$ is deficient in $M$, the stability of $M$ implies that $r$ is matched in $M$ and $M(r)\succeq _r N(r)$.
In addition, the definitions of $H_1$ implies $M(r)\in H_0\cup H_1$.
As for a resident $r'\in R_1$, the definitions of $R_0$ and $H_1$ imply that
there exist edges $(r,h),(r',h')\in N\setminus M$, $(r,h')\in M$ with $h'\in H_1$, $r\in R_0$, and $h\in H_0$.
Thus, Lemma~\ref{lem:property2} implies that $r'$ is matched in $M$. 
\switch{\qed}{}\end{proof}
The first statement of Claim~\ref{claim:R0R1} implies that, for any $h\in H_1$, 
all residents $r$ in $M(h)\cap R_0$ satisfy $M(r)\succeq _r N(r)$.
Therefore, the following $\{H_1^{\succ}, H_1^{=}\}$ defines a bipartition of $H_1$:
\begin{align*}
&H_1^{\succ}\coloneqq \set{h\in H_1|\exists r\in M(h)\cap R_0: h\succ_r N(r)},\\
&H_1^{=}\coloneqq \set{h\in H_1|\forall r\in M(h)\cap R_0: h=_r N(r)}.
\end{align*}
We now intend to show the following inequality by estimating $|R_0|+|R_1|$ in two ways.
\begin{equation}
v_M(H_2) \geq \frac{\alpha}{\theta},  \label{eq:R-size3}
\end{equation}
For the first estimation, we further partition the set $R_1$  into 
$R_1^{\succ}\coloneqq \set{r\in R_1|N(r)\in H_1^{\succ}}$ and  $R_1^{=}\coloneqq \set{r\in R_1| N(r)\in H_1^{=}}$.
By the stability of $N$, 
 each $h\in H_1^{\succ}$ is full in $N$, 
since there exists a resident $r\in R_0$ with $h\succ_r N(r)$, implying that  $|N(h)|=u$ and  $v_N(h)=\ell$. 
Thus we have 
$|R_1^{\succ}|=u\cdot |H_1^{\succ}| = \frac{u}{\ell}\cdot v_N(H_1^{\succ})=\theta \cdot v_N(H_1^{\succ})$.
Additionally, since each $h\in H_1^{\succ}$ satisfies $v_M(h)\geq v_N(h)$ by $h\not\in H_0$,  we have $v_M(h)=v_N(h)=\ell$, which implies  $v_M(H_1^{\succ})=v_N(H_1^{\succ})$. 
Furthermore, by definition,  $|R_0\cup R_1^{=}|\geq v_N(H_0\cup H_1^{=})$.    
Combining them, we obtain 
\begin{equation}
|R_0|+|R_1|\geq v_N(H_0\cup H_1^{=})+\theta\cdot v_N(H_1^{\succ})=v_N(H_0\cup H_1)+(\theta-1)\cdot v_M(H_1^{\succ}).\label{eq:R-size}
\end{equation}

For the second estimation of $|R_0|+|R_1|$, we define another partition $\{S_1, S_2\}$ of $R_0\cup R_1$ depending on the matching $M$: 
\begin{align*}
&S_{1}\coloneqq \set{r\in R_0\cup R_1|M(r)\in H_0\cup H_1^{=}},\\
&S_{2}\coloneqq (R_0\cup R_1)\setminus S_1.
\end{align*}
We show that $|S_1|\leq v_M(H_0\cup H_1^{=})$  and $|S_{2}|\leq \theta \cdot v_M(H_1^{\succ}\cup H_2)$. 
Since any $h\in H_0$ satisfies $\ell\geq v_N(h)>v_M(h)$, we have  $v_M(h)=|M(h)|$.
For each $h\in H_1^{=}$, there exists $r\in R_0$ with $M(r)=h$ and $M(r)=_r N(r)$.
Since for any $r\in R_0$, the hospital $h'\coloneqq N(r)$ belongs to $H_0$,  we have $|M(h')|<\ell$. 
From this together with  Lemma~\ref{lem:property}, we have $|M(h)|\leq \ell$, which shows that  $v_M(h)=|M(h)|$ for each $h\in H_1^{=}$.
Thus, we obtain $|S_1|\leq \sum_{h\in H_0\cup H_1^{=}} |M(h)|=v_M(H_0\cup H_1^{=})$.
The equality $|S_{2}|\leq \theta \cdot v_M(H_1^{\succ}\cup H_2)$ follows from the fact that all residents in $S_2$ are assigned to
$H_1^{\succ}\cup H_2$, because Claim~\ref{claim:R0R1} implies that all residents in $R_0\cup R_1$ are matched in $M$. 

By these equalities,  we have 
\begin{equation*}
|R_0|+|R_1| = |S_1| + |S_2| \leq  v_M(H_0\cup H_1) + (\theta-1) \cdot v_M(H_1^{\succ})+\theta \cdot v_M(H_2),
\end{equation*}
which together with \eqref{eq:R-size} provides the inequality \eqref{eq:R-size3}. 

By using \eqref{eq:beta} and (\ref{eq:R-size3}), we obtain the required inequality \eqref{eq:goal} also for the case (ii) as follows:
\begin{eqnarray*}
\frac{v_N(H_0\cup H_1\cup H_2)}{v_M(H_0\cup H_1\cup H_2)} &=& 
\frac{\alpha+\beta+v_M(H_0 \cup H_1)}{v_M(H_2)+v_M(H_0 \cup H_1)} \\
& \leq & \frac{\alpha+\frac{\alpha}{\theta}+v_M(H_0 \cup H_1)}{\frac{\alpha}{\theta}+v_M(H_0 \cup H_1)} \\
&=&1+\frac{\alpha}{\frac{\alpha}{\theta}+v_M(H_0 \cup H_1)}\\
&\leq &1+\frac{\alpha}{\frac{\alpha}{\theta}+\frac{\alpha}{\theta}}\\
&=&1+\frac{\theta}{2}.\\
\end{eqnarray*}
Here, the first inequality follows from the condition (ii) and \eqref{eq:R-size3},
and the second inequality follows from \eqref{eq:alpha}.
\switch{\qed}{}\end{proof}

\begin{proposition}\label{prop:uniform-approx-tight}
There is an instance $I\in \I_{\rm Uniform}$ such that $\frac{\opt(I)}{\alg(I)} \geq  \frac{\theta}{2}+1$ holds for any values of $u$ and $\ell$.
\end{proposition}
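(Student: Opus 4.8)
The plan is to construct, for arbitrary integers $1\le \ell\le u$, a single instance $I\in\I_{\rm Uniform}$ on which some execution of {\sc Triple Proposal} outputs a stable matching $M$, while an optimal stable matching $N$ satisfies $s(N)/s(M)=\tfrac{\theta}{2}+1$. Since Theorem~\ref{thm:IL-uniform-alg} already gives $\app(\I_{\rm Uniform})\le\tfrac{\theta}{2}+1$, exhibiting such an $I$ (with $\alg(I)\le s(M)$ for the chosen run) immediately yields $\opt(I)/\alg(I)\ge\tfrac{\theta}{2}+1$, which is the required matching lower bound. It is convenient to aim for the normalized target $s(M)=2$ and $s(N)=2+\theta$, where the additive gap $\theta=u/\ell$ is produced by $u$ residents that $M$ piles on top of already-satisfied hospitals (so they add nothing beyond the lower quota) but that $N$ spreads, $\ell$ at a time and without exceeding any lower quota, over backup hospitals that $M$ leaves empty. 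For $\ell=u=1$ (so $\theta=1$) this must specialize to an instance witnessing the $3/2$ tightness of Kir\'aly-type algorithms for MAX-SMTI, consistent with Theorem~\ref{thm:approx_mar}.

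The instance I would build starts from the worst-gap configuration behind Proposition~\ref{prop:uniform_gap_tight} — a contested hospital surrounded by backups, with ties placed on the hospitals' side so that, for a hospital, the good residents are interchangeable — and then \emph{damages it minimally} so that {\sc Triple Proposal} can no longer recover the lost score. The roles to realize are exactly the equality conditions in the proof of Theorem~\ref{thm:IL-uniform-alg}: contested hospitals that are full in both $M$ and $N$; a set $H_0$ of backups that are deficient in $M$ but satisfied in $N$; and a set $H_2$ of ``sink'' hospitals equally satisfied in $M$ and $N$, with $v_M(H_2)=v_N(H_2)=\alpha/\theta$, where $\alpha$ is the score that $N$ gains over $M$ on the contested-plus-$H_0$ part. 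The good residents' lists should make them \emph{strictly} prefer their contested hospital, so that in the algorithm they abandon their backups, while the residents filling the contested hospitals in $N$ should find those hospitals full of equally ranked residents in $M$ and be absorbable by the algorithm only into $H_2$. Intuitively, the algorithm is forced to recover exactly one half of the potential gain — this ``half-repair'' is precisely what degrades the worst gap $\theta+1$ down to $\tfrac{\theta}{2}+1$.

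The verification splits into: (a) checking $N$ stable and $s(N)=2+\theta$, routine once the ties are fixed; (b) checking $M$ stable with $s(M)=2$; and (c) exhibiting an execution of {\sc Triple Proposal} outputting $M$. Part (c) is the crux and the main obstacle. Because Lemma~\ref{lem:property2} forbids length-$3$ $M$-augmenting paths (and Lemma~\ref{lem:property} forbids over-filling next to a deficient tied hospital), the lost score cannot sit at a deficient hospital reachable from an unmatched resident in three steps; the instance must therefore hide the unfilled backups at the far end of a \emph{longer} alternating structure, so that the third proposals of the promoted ($\state=1$) residents trigger a cascade of displacements that dead-ends — at a hospital where the bumped resident is strictly worse than the incumbent — before it can reach a deficient hospital and enlarge the score. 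Guaranteeing that this cascade dead-ends for every promoted resident, and that the arbitrary tie-breaks and the processing order can be chosen so as to realize $M$, is where careful bookkeeping of the $\state$-values and of which residents have been rejected by which hospitals is needed; for $\theta=1$ this reduces to the standard argument that Kir\'aly's algorithm cannot beat $3/2$. Finally, I would choose the resident and hospital counts as multiples of $\ell$ (using a single partially filled backup when $\ell\nmid u$) so that $s(N)=2+\theta$ holds exactly for all $u,\ell$, which together with Theorem~\ref{thm:IL-uniform-alg} establishes $\app(\I_{\rm Uniform})=\tfrac{\theta}{2}+1$.
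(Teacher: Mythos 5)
Your proposal is a roadmap rather than a proof: the entire content of this proposition is the exhibition of a concrete instance together with a verified run of {\sc Triple Proposal}, and neither is actually present. You describe the target scores ($s(M)=2$, $s(N)=\theta+2$), the structural roles the hospitals must play, and the reason the construction is delicate (Lemma~\ref{lem:property2} forces the deficient hospitals to sit at distance more than $3$ from any unmatched resident in $M$) --- all of which is consistent with what the paper does --- but you then explicitly flag part (c), ``exhibiting an execution of {\sc Triple Proposal} outputting $M$,'' as ``the crux and the main obstacle'' and leave it unresolved. Without the instance and that verification, the lower bound is not established.

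For comparison, the paper's construction is considerably simpler than the ``cascade of displacements'' you anticipate: $3u$ residents $a_i,b_i,c_i$ and $u+2$ hospitals $h_1,\dots,h_u,x,y$, all with quotas $[\ell,u]$, arranged in a chain $h_i \,\text{---}\, a_i \,\text{---}\, x \,\text{---}\, b_i \,\text{---}\, y \,\text{---}\, c_i$ with a tie only in $x$'s list over the $a$'s and $b$'s. Prioritizing the $a$'s at $x$ yields $M=\{(a_i,x)\}\cup\{(b_i,y)\}$ with $s(M)=2$; the unmatched $c_i$'s see only $y$, which is full of $b$'s that strictly prefer $y$'s neighborhood to nothing and whose only escape $x$ is already full, so no length-$3$ augmenting path exists and no cascade is needed. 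The optimal $N=\{(a_i,h_i)\}\cup\{(b_i,x)\}\cup\{(c_i,y)\}$ gives each $h_i$ one resident and hence score $1/\ell$, so $s(N)=u/\ell+2=\theta+2$ with no divisibility condition on $u$ and $\ell$ (your worry about $\ell\nmid u$ is unnecessary because the backups each receive a single resident rather than $\ell$ of them). You would need to supply something of this concreteness, plus the stability checks and the trace of the algorithm, for the argument to go through.
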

\begin{proof}
Consider an instance $I$ that consists of $3u$ residents $\{a_1, a_2,\dots,a_u\}\cup \{b_1, b_2,\dots,b_u\}\cup \{c_1, c_2,\dots,c_u\}$ and $u+2$ hospitals
$\{h_1,h_2,\dots,h_u\}\cup \{x\}\cup \{y\}$.
The preference lists are given as follows.
\begin{center}
\renewcommand\arraystretch{1.2}
\begin{tabular}{llllllllllllllllll}
$a_{i}$: & $x$ & $h_i$   &\hspace{15mm} & $h_i$ $[\ell, u]$: &  $a_i$ \hspace{15mm}\\
$b_{i}$: & $x$ & $y$  &\hspace{15mm} & $x$ $[\ell, u]$: &  (~$a_{1}~a_{2}\cdots a_u~b_1~b_2\cdots b_{u}$~) \\
$c_{i}$: & $y$  &  &\hspace{15mm}& $y$ $[\ell, u]$: &  $b_{1}~b_{2}\cdots b_u~c_1~c_2\cdots c_{u}$  \\
\end{tabular}
\end{center}
Execute {\sc Triple Proposal} for $I$ by prioritizing residents in $A$ over those in $B$ whenever an arbitrariness arises. That is, if a resident in $A$ and that in $B$ both satisfy the condition to be selected as $r'$ at Line~\ref{chosen2}, we choose a resident in $B$.
We then observe that the output of the algorithm is
$M=\set{(a_{i},x)|1\leq i\leq u}\cup \set{(b_{i},y)|1\leq i\leq u}$, and hence $s(M)=2$.
On the other hand, a matching  
$N=\set{(a_{i},h_i)|1\leq i\leq u}\cup \set{(b_{i},x)|1\leq i\leq u}\cup \set{(c_{i},y)|1\leq i\leq u}$ is stable and satisfies $s(N)=\frac{u}{\ell}+2=\theta+2$.
Therefore, $\frac{\opt(I)}{\alg(I)} \geq \frac{s(N)}{s(M)}\geq  \frac{\theta}{2}+1$.
\switch{\qed}{}\end{proof}
\end{toappendix}

\subsection{General Model and $R$-side ML Model}\label{subsec:general}
Let ${\cal I}_{\rm Gen}$ denote the family of all instances of HRT-MSLQ, which we call the {\em general model}. 
In addition, let $\I_{\rm R\mathchar`-ML}$ denote the subfamily of ${\cal I}_{\rm Gen}$ representing the {\em resident-side master list model}.
That is, in an instance in $\I_{\rm R\mathchar`-ML}$, there exists a master preference list from which each resident's list is obtained by deleting unacceptable hospitals. 
In this section, we mainly work on the general model and obtain results on the resident-side master list model as  consequences.

As in the case of the uniform model, the non-linearlity of the objective function, i.e., the total satisfaction ratio $\sum_{h\in H} \min\{ 1, \frac{|M(h)|}{\ell(h)} \}$ prevents us from applying the standard techniques used for the marriage model. Furthermore, due to the non-uniformness of the quotas, we cannot apply the counting argument used for the uniform model. 
It appears, however, that the techniques used in Goko et al.~\cite{Goko} can apply even for the current incomplete list setting.

We first provide two theorems on the maximum gap.
\begin{theorem}\label{thm:maxgap_gen}
The maximum gaps for the general and resident-side master list models are $n+1$, i.e., $\Lambda(\I_{\rm Gen})=\Lambda(\I_{\rm R\mathchar`-ML})=n+1$.
\end{theorem}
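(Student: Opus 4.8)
The plan is to prove both equalities at once by exploiting the inclusion $\I_{\rm R\mathchar`-ML}\subseteq \I_{\rm Gen}$, which already gives $\Lambda(\I_{\rm R\mathchar`-ML})\le \Lambda(\I_{\rm Gen})$. Hence it suffices to establish the upper bound $\Lambda(\I_{\rm Gen})\le n+1$ for the larger (general) model, and separately to exhibit a single family of instances lying in the smaller ($R$-side master list) model that attains the ratio $n+1$. These two facts together pin both quantities to exactly $n+1$.

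For the upper bound, I would fix an instance $I$ and let $N,M$ be stable matchings with $s(N)=\opt(I)$ and $s(M)=\wst(I)$. The first step is to peel off the hospitals with $\ell(h)=0$: each such hospital has $s_N(h)=s_M(h)=1$, so it contributes the same integer $k$ to both $\opt(I)$ and $\wst(I)$. Writing $\opt(I)=k+A$ and $\wst(I)=k+B$, where $A$ and $B$ collect the contributions of the hospitals with $\ell(h)\ge 1$, the elementary bound $A\le n$ holds because at most $n$ hospitals can receive a resident in $N$ and each contributes a score at most $1$. The argument then splits on the size of $\wst(I)$: if $k\ge 1$ then $\opt(I)/\wst(I)\le (k+n)/k=1+n/k\le n+1$, and if $k=0$ but $B\ge 1$ then $\opt(I)/\wst(I)=A/B\le n$.

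The genuine obstacle is the remaining regime $k=0$ and $B<1$, i.e. $\wst(I)<1$; here the naive estimate $A\le n$ alone is far too weak, so this low-score case must be handled via the matching's structure. In this regime every hospital has $\ell(h)\ge 1$ and satisfaction ratio strictly below $1$, hence is deficient and therefore undersubscribed in $M$. Stability of $M$ then forces every resident with a nonempty list to be matched (otherwise it would block with any acceptable hospital), and moreover to a top choice. Letting $n'$ denote the number of such residents, I would use $\ell(h)\le n$ to get $\wst(I)=\sum_{h}|M(h)|/\ell(h)\ge n'/n$, while only these $n'$ residents can ever be matched in any stable matching, so at most $n'$ hospitals are nonempty in $N$ and thus $\opt(I)\le n'$. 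Combining the two gives $\opt(I)/\wst(I)\le n$, completing the case analysis and the bound $n+1$.

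Finally, for tightness I would place an instance in $\I_{\rm R\mathchar`-ML}$ that uses ties; ties are essential here, since by the rural hospitals theorem strict lists would force the ratio to be $1$ (which is why this theorem, unlike those for the marriage and uniform models, makes no strict-list claim). Take residents $r_1,\dots,r_n$, one hospital $g$ with quota $[0,n]$, and hospitals $h_1,\dots,h_n$ with quota $[1,1]$, where each $r_i$ finds only $g$ and $h_i$ acceptable and is indifferent between them, so the master list is a single tie and $r_i$'s list is $(g\ h_i)$, while $h_i$ accepts only $r_i$. Then $N=\{(r_i,h_i)\}$ is stable with $s(N)=n+1$ (the $n$ hospitals $h_i$ each scoring $1$, plus the free score of $g$), whereas $M=\{(r_i,g)\}$ is stable with $s(M)=1$. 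No blocking pair arises in either matching precisely because $g$ and $h_i$ are tied for $r_i$, so neither $(r_i,h_i)$ nor $(r_i,g)$ is ever strictly preferred. This yields $\opt(I)/\wst(I)=n+1$, hence $\Lambda(\I_{\rm R\mathchar`-ML})\ge n+1$, which closes both equalities.
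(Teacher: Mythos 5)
Your proof is correct and follows essentially the same route as the paper's: the upper bound peels off the $\ell(h)=0$ hospitals and cases on whether there are any, and the tight example (a $[0,n]$ hospital tied with $n$ hospitals of quota $[1,1]$ under a single-tie master list) is the paper's construction with unacceptable hospitals deleted rather than retained. Your additional sub-case $\wst(I)<1$, where stability forces every resident with a nonempty list to be matched in $M$ so that $\wst(I)\ge n'/n$ while $\opt(I)\le n'$, is a more careful treatment of a step the paper dispatches with the bare assertion $\sum_{h}\min\{1,|M(h)|/n\}\ge 1$.
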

\begin{toappendix}
\begin{proof}[Proof of Theorem~\ref{thm:maxgap_gen}]
We first show $\frac{\opt(I)}{\wst(I)}\leq n+1$ for any $I\in \I_{\rm Gen}$.
Let $N$ and $M$ be stable matchings with $s(N)=\opt(I)$ and $s(M)=\wst(I)$, respectively.
Recall that $\ell(h)\leq n$ is assumed for any hospital $h$.
Let $H_0\subseteq H$ be the set of hospitals $h$ with $\ell(h)=0$.
Then
\begin{align*}
&\textstyle s(N)=|H_0|+\sum_{h\in H\setminus H_0}\min\{1,\frac{|N(h)|}{\ell(h)}\}\leq\switch{}{
|H_0|+\sum_{h\in H\setminus H_0}\min\{1,\frac{|N(h)|}{1}\}\leq} |H_0|+n,\\
&\textstyle s(M)= 
|H_0|+\sum_{h\in H\setminus H_0}\min\{1,\frac{|M(h)|}{\ell(h)}\}\geq |H_0|+\sum_{h\in H\setminus H_0}\min\{1,\frac{|M(h)|}{n}\}.
\end{align*}
In case $|H_0|=0$, we have
$\sum_{h\in H\setminus H_0}\min\{1,\frac{|M(h)|}{n}\}=\sum_{h\in H}\min\{1,\frac{|M(h)|}{n}\}\geq 1$,
and hence $\frac{s(N)}{s(M)}\leq \frac{n}{1}=n$.
In case $|H_0|\geq 1$, we have $s(M)\geq|H_0|$, and $\frac{s(N)}{s(M)}\leq \frac{|H_0|+n}{|H_0|}=1+\frac{n}{|H_0|}\leq 1+n$.
Thus, $\frac{\opt(I)}{\wst(I)}\leq n+1$ for any instance $I$.

We next show that there is an instance $I\in \I_{\rm R\mathchar`-ML}$ with $\frac{\opt(I)}{\wst(I)}\geq n+1$.
Let $I$ be an instance consisting of $n$ residents $\{r_1,r_2,\dots,r_n\}$ and
$n+1$ hospitals $\{h_1,h_2,\dots,h_{n+1}\}$ 
such that 
\begin{itemize}
\item  \switch{every resident's preference list}{the preference list of every resident} consists of a single tie containing all hospitals,
\item \switch{every hospital's preference list}{the preference list of every hospital} is an arbitrary complete list without ties, and
\item $[\ell(h_i), u(h_i)]=[1,1]$ for $i=1,2,\dots,n$ and $[\ell(h_{n+1}), u(h_{n+1})]=[0, n]$.
\end{itemize}
Since any resident is indifferent among all hospitals, a matching
is stable whenever all residents are assigned.
Let $N=\set{(r_i,h_i)|i=1,2,\dots,n}$ and $M=\set{(r_i,h_{n+1})|i=1,2,\dots,n}$.
Then, $s(N)=n+1$ while $s(M)=1$. Thus we obtain $\frac{\opt(I)}{\wst(I)}\geq n+1$.
\switch{\qed}{}\end{proof}
\end{toappendix}
\begin{theorem}\label{thm:maxgap_gen_H}
If the preference lists of residents are strict,
the maximum gaps for the general and resident-side master list models are $\lfloor \frac{n}{2} \rfloor+1$. 
\end{theorem}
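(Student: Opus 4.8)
The plan is to establish the two inequalities $\Lambda(\I_{\rm Gen})\le \lfloor n/2\rfloor+1$ and $\Lambda(\I_{\rm R\mathchar`-ML})\ge \lfloor n/2\rfloor+1$ separately. Since $\I_{\rm R\mathchar`-ML}\subseteq \I_{\rm Gen}$, the upper bound needs to be proved only for the general model (with strict resident lists), while the lower bound should be witnessed by a single instance that already lies in $\I_{\rm R\mathchar`-ML}$, so that both models are pinned to the same value at once. The upper bound refines the crude estimate behind Theorem~\ref{thm:maxgap_gen}; the role of strictness is precisely to save the factor of (about) two, so unlike the proof of the $n+1$ bound, I would organize the argument around the hospitals whose score strictly improves rather than around the $\ell(h)=0$ hospitals. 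Note that Lemma~\ref{lem:property2} is \emph{not} available here, because $M$ is an arbitrary worst stable matching, not the output of {\sc Triple Proposal}; only the stability of $N$ and $M$ may be used.

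For the upper bound, fix an optimal stable matching $N$ and a worst stable matching $M$ (assuming $\wst(I)=s(M)>0$, as otherwise acceptability is empty and the ratio is $1$), and set $H^{+}=\{h\in H: s_N(h)>s_M(h)\}$, so every $h\in H^{+}$ is deficient in $M$ and has $\ell(h)\ge 1$. Off $H^{+}$ we have $s_N(h)\le s_M(h)$, hence $s(N)\le s(M)+\sum_{h\in H^{+}}s_N(h)$, and it suffices to show $\sum_{h\in H^{+}}s_N(h)\le \lfloor n/2\rfloor\cdot s(M)$. Writing $R^{+}=\bigcup_{h\in H^{+}}N(h)$ and using $s_N(h)\le |N(h)|$ (valid since $\ell(h)\ge 1$ on $H^{+}$), this reduces to proving $|R^{+}|\le \lfloor n/2\rfloor\cdot s(M)$. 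The two stabilities then locate $R^{+}$ inside $M$: each $r\in R^{+}$ has $N(r)\in H^{+}$ deficient in $M$, so stability of $M$ forces $r$ to be matched with $M(r)\succeq_r N(r)$, and strictness upgrades this to $M(r)\succ_r N(r)$ since $N(r)$ does not contain $r$ in $M$; consequently $r$ prefers its $M$-host $M(r)$ to its $N$-host $N(r)$, so stability of $N$ forces $M(r)$ to be full in $N$. Hence every host $h'=M(r)$ satisfies $s_N(h')=1$, and is either satisfied in $M$ (then, being full in both, $h'\notin H^{+}$ and its $N$-residents avoid $R^{+}$) or deficient in $M$ (then $h'\in H^{+}$).

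What remains is a resident-counting argument: the $R^{+}$-residents sitting on satisfied hosts inject into those hosts' $N$-slots, which are disjoint from $R^{+}$, so they are paid for against the complementary residents and produce the factor $\tfrac12$; the residents that stay inside $H^{+}$ in both matchings must instead be charged to the fractional scores $s_M(h)=|M(h)|/\ell(h)$ of the deficient hosts, and this is exactly where $s(M)$ (rather than a bare $\lfloor n/2\rfloor$) has to appear. I expect this bookkeeping to be the main obstacle — specifically, handling hosts that are simultaneously in $H^{+}$ and occupied by surplus residents, and preventing the possibly large upper quotas from inflating $|R^{+}|$ past $\lfloor n/2\rfloor\cdot s(M)$ — and it is the only place that genuinely exploits strictness rather than stability alone. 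For the matching lower bound, I would exhibit an explicit instance in $\I_{\rm R\mathchar`-ML}$. For even $n=2p$, take a hospital $x$ with quotas $[1,p]$ and hospitals $h_1,\dots,h_p$ with $[1,1]$, residents $b_1,\dots,b_p$ with strict lists ``$x\ h_i$'' and $c_1,\dots,c_p$ with lists ``$x$'', let $x$'s list be the single tie of all residents, and let $h_i$'s list be $b_i$; the master list $x,h_1,\dots,h_p$ certifies $I\in\I_{\rm R\mathchar`-ML}$. Then $M=\{(b_i,x)\}_{i}$ (all $c_i$ unmatched, all $h_i$ empty) is stable with $s(M)=1$, while $N=\{(c_i,x)\}_i\cup\{(b_i,h_i)\}_i$ is stable with $s(N)=p+1$, giving the ratio $p+1=\lfloor n/2\rfloor+1$. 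For odd $n=2p+1$, adding one further resident $c_{p+1}$ with list ``$x$'' (left unmatched in both matchings because $u(x)=p$) leaves both scores and hence the ratio unchanged. Verifying stability is routine: whenever $x$ is full of one group, the tie at $x$ prevents every member of the other group from blocking, and the $b_i$'s never block $h_i$ since they strictly prefer $x$.
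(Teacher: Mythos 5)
Your lower-bound instance is correct and is essentially the paper's witness (Proposition~\ref{prop:below} in the appendix, with $x$'s quotas $[\lceil n/2\rceil,\lceil n/2\rceil]$ instead of $[1,p]$; both work). The problem is the upper bound. You reduce it to the claim $|R^{+}|\leq \lfloor n/2\rfloor\cdot s(M)$ where $R^{+}=\bigcup_{h\in H^{+}}N(h)$, defer the proof of that claim as ``bookkeeping,'' and that claim is in fact \emph{false}. Take $n=3$ with residents $a,b,c$ and hospitals $h_1$ with quotas $[2,2]$ and $g$ with quotas $[1,1]$; preference lists: $a$: $h_1$;\ $b$: $g$~$h_1$;\ $c$: $g$;\ $h_1$: $a$~$b$;\ $g$: $(\,b$~$c\,)$. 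Then $M=\{(a,h_1),(b,g)\}$ and $N=\{(a,h_1),(b,h_1),(c,g)\}$ are both stable (the tie at $g$ kills the only candidate blocking pairs), $s(M)=\tfrac12+1=\tfrac32$, $s(N)=2$, $H^{+}=\{h_1\}$, and $|R^{+}|=|N(h_1)|=2>\lfloor 3/2\rfloor\cdot s(M)=\tfrac32$. The step that loses too much is replacing $s_N(h)=\min\{1,|N(h)|/\ell(h)\}$ by $|N(h)|$: when $\ell(h)\geq 2$ a single unit of $N$-score can carry up to $\ell(h)$ residents of $R^{+}$, so counting residents cannot be traded against $\lfloor n/2\rfloor\cdot s(M)$. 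The same example also refutes your intermediate assertion that strictness forces $M(r)\succ_r N(r)$ for every $r\in R^{+}$: here $M(a)=N(a)=h_1\in H^{+}$, which is perfectly consistent with $h_1$ being deficient in $M$. Since the rest of your argument (hosts full in $N$, injection into their $N$-slots) hinges on that strict preference, the chain breaks exactly at the residents sitting in $M(h)\cap N(h)$ for deficient $h$, and no amount of bookkeeping will rescue the reduction as stated.

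For comparison, the paper avoids this trap by not routing the argument through $H^{+}$ at all. It introduces $H_f$, the set of hospitals that are the first choice of at least $u(h)$ residents (strictness of residents' lists is what makes ``first choice'' well defined and forces every $h\in H_f$ to be full in every stable matching), derives $\opt(I)\leq n+|H_0\setminus H_f|$ and $\wst(I)\geq |H_0\cup H_f|$ where $H_0=\{h:\ell(h)=0\}$, and then splits on the size of $\wst(I)$: the cases $\wst(I)\geq 3$ and $2\leq\wst(I)<3$ follow from these bounds by arithmetic, and the delicate case $\wst(I)<2$ (where $H_f$ is a single hospital $h^{*}$) is handled by showing that every resident matched in $M$ outside $h^{*}$ sits at her first choice among $H\setminus\{h^{*}\}$ and injecting those residents into $N$'s assignment, so that the surplus of $N$ over $M$ is at most $|R_2\cap S|\leq\lfloor n/2\rfloor$. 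If you want to salvage your framework, you would at minimum have to work with $\sum_{h\in H^{+}}s_N(h)$ itself rather than $|R^{+}|$, at which point the non-linearity of the score reappears and you are back to something like the paper's case analysis.
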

\begin{toappendix}
\begin{proof}[Proof of Theorem~\ref{thm:maxgap_gen_H}]
Here we only show that $\frac{\opt(I)}{\wst(I)}\leq \lfloor \frac{n}{2} \rfloor+1$ holds for any instance $I\in \I_{\rm Gen}$ in which residents have strict preferences. The other inequality is shown in Proposition~\ref{prop:below} with instances in $\I_{\rm R\mathchar`-ML}$.

In case $n=1$, a stable matching is unique, which assigns the unique resident to her best choice hospital.
Hence the claim clearly holds.

We will assume $n\geq 2$.
Let $H_{0}$ be the set of hospitals $h$ such that $\ell(h)=0$, and $H_{f}$ be the set of hospitals $h$ that is the first choice of at least $u(h)$ residents.

\begin{\switch{myclaim}{claim}}\label{claim:Hf}
For any instance $I$, the following holds: 
(1) In any stable matching, every hospital in $H_{f}$ is full.
(2) $\opt(I) \leq n+|H_{0} \setminus H_{f}|$.
(3) $\wst(I) \geq |H_{0} \cup H_{f}|$.
\end{\switch{myclaim}{claim}}

\begin{proof}
(1) If $h \in H_{f}$ is not full in a matching $M$, there is a resident $r \not\in M(h)$ whose first choice is $h$.
Then $(r, h)$ blocks $M$ so $M$ is unstable.
(2) Consider an optimal stable matching.
A nonempty hospital can gain a score at most 1 and there are at most $n$ such hospitals. 
An empty hospital can gain a score of 1 only if it is in $H_{0}$, but by (1) of this proposition, hospitals in $H_{f}$ are nonempty and are excluded.
(3) Consider an arbitrary stable matching $M$.
If $h \in H_{0}$, $s_{M}(h)=1$ by definition.
If $h \in H_{f}$, $s_{M}(h)=1$ since $h$ is full by (1) of this proposition.
\switch{\qed}{}\end{proof}
We assume $|H_f|\geq 1$ since otherwise all residents are assigned to their first choices in any stable matching, which implies $\wst(I)=\opt(I)$.
We consider the following three cases: $\wst(I)\geq 3$, $2\leq \wst(I)<3$, and  $\wst(I)<2$.

In case $\wst(I)\geq 3$, by Claims~\ref{claim:Hf}(2) and \ref{claim:Hf}(3), we have
$\frac{\opt(I)}{\wst(I)} \leq \frac{n+|H_{0}|}{\wst(I)} \leq \frac{n}{\wst(I)} + \frac{|H_{0}|}{\wst(I)} \leq \frac{n}{3} + 1 \leq 1+\lfloor \frac{n}{2} \rfloor$.

In case $2\leq \wst(I)<3$, Claim~\ref{claim:Hf}(3) and $|H_f|\geq 1$ imply $|H_0\setminus H_f|\leq 1$. 
Then, Claim~\ref{claim:Hf}(2) implies $\frac{\opt(I)}{\wst(I)} \leq \frac{n+1}{2} \leq 1+\lfloor \frac{n}{2} \rfloor$.

Finally, we consider the case $\wst(I) < 2$. Since $1\leq |H_f|\leq \wst(I)<2$, $H_f$ consists of a single hospital.
Let $H_{f}=\{ h^{*} \}$.
Also, let $N$ and $M$ be stable matchings of maximum and minimum scores, respectively.
Partition the residents as $R=R_{1} \cup R_{2} \cup R_{3}$ where $R_{1} = \set{ r\in R | N(r)= h^{*}}$, $R_{2} = \set{ r\in R | N(r)\in H\setminus \{h^*\}}$, and $R_{3} = \set{ r\in R| N(r)=\varnothing}$, and define $S = \set{ r\in R| M(r)= h^{*}}$.
By Claim \ref{claim:Hf}(1), $h^{*}$ is full in both $N$ and $M$, so $|R_{1}|=|S|$.

For each resident $r \in R_{2} \setminus S$, let $b(r)$ be $r$'s first choice hospital excluding $h^{*}$.
Note that $b(r)$ exists for any $r$ since residents in $R_{2}$ are matched to a hospital other than $h^{*}$ in $M$.
We show that $M(r)=b(r)$ for any $r \in R_{2} \setminus S$.
Suppose not.
Then $b(r) \succ_{r} M(r)$ or $M(r)=\varnothing$, so for $M$ to be stable, $b(r)$ must be full in $M$.
But then $s_{M}(b(r))=1$, which, together with $s_{M}(h^{*})=1$, implies $\wst(I) \geq 2$, a contradiction.

Let $H'=\set{h\in H|M(h)\cap(R_{2} \setminus S)\neq \emptyset}$. Note that $h^*\not\in H'$ by definition, and hence any $h\in H'$ satisfies $s_M(h)<1$.
We claim that any $h\in H'$ satisfies $|M(h)\cap(R_{2} \setminus S)|\leq |N(h)|$. 
Suppose not. Then, $|N(h)|$ is undersubscribed and there is a resident $r\in R_{2} \setminus S$ such that $M(r)=h\neq N(r)$. As we have $M(r)=b(r)$, this $r$ satisfies $h \succ_{r} N(r)$ or $N(r)=\varnothing$ while $h$ is undersubscribed. Then, $(r, h)$ blocks $N$, a contradiction.

For each $h\in H'$, consider any injection from $M(h)\cap (R_{2} \setminus S)$ to $N(h)$.
Let $T$ be the union of the ranges of such injections over all $h\in H'$.
The direct sum of the injections defines a bijection between $R_2\setminus S$ and $T$.
Since any $h\in H'$ satisfies $s_M(h)<1$, we have $s_M(h)=\frac{|M(h)|}{\ell(h)}$, i.e., each resident $r\in M(h)\cap(R_2\setminus S)$ contributes $\frac{1}{\ell(h)}$ to the score $s(M)$. 
Let $\alpha$ be the total contribution of residents in $R_2\setminus S$ to $s(M)$.
Then, we have $s(M)\geq s_M(h^*)+\alpha=1+\alpha$.
Since $T \subseteq R_{2}$, the bijection between $R_2\setminus S$ and $T$ implies that the total contribution of residents in $R_2$ to $s(N)$ is at most $\alpha+|R_2\setminus T|$. Then, $s(N)\leq s_N(h^*)+\alpha+|R_2\setminus T|=1+\alpha+|R_2\setminus T|$.

Note that $T \subseteq R_{2}$ and $|T|=|R_{2} \setminus S|$ imply $|R_{2} \setminus T|=|R_{2} \cap S|$.
Also, by $|R_{2} \cap S| \leq |S| = |R_{1}|$ and $|R_{1}|+|R_{2} \cap S| \leq n$, we have $|R_{2} \cap S| \leq \lfloor \frac{n}{2} \rfloor$. Then, $|R_{2} \setminus T|\leq \lfloor \frac{n}{2} \rfloor$ follows, and we have
\[
\frac{\opt(I)}{\wst(I)} \leq \frac{1+\alpha+|R_2\setminus T|}{1+\alpha} \leq 1+\frac{\lfloor \frac{n}{2} \rfloor}{1+\alpha}  \leq 1+\left\lfloor \frac{n}{2} \right\rfloor.
\]
\switch{\qed}{}\end{proof}

\begin{proposition}\label{prop:below}
For any natural number $n$, there is an instance $I \in \I_{\rm R\mathchar`-ML}$ with $n$ residents such that preference lists of residents are strict and $\frac{\opt(I)}{\wst(I)} \geq 1+\lfloor \frac{n}{2} \rfloor$.
\end{proposition}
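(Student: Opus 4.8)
The plan is to exhibit a single explicit family of instances, one for each $n$, and read off the gap from two concrete stable matchings rather than from any structural argument. Set $k=\lfloor n/2\rfloor$ and assume $n\ge 2$ (the case $n=1$ is trivial, since then both sides equal $1$ and any one-resident, one-hospital instance works). Take hospitals $x,h_1,\dots,h_k$ with master list $x\succ h_1\succ\cdots\succ h_k$, quotas $[\ell(x),u(x)]=[1,k]$ and $[\ell(h_i),u(h_i)]=[1,1]$, and residents $a_1,\dots,a_k,b_1,\dots,b_{n-k}$. Each $a_i$ accepts exactly $x$ and $h_i$ (in that master order) and each $b_j$ accepts only $x$; since every resident's list is the master list restricted to its acceptable set, the instance lies in $\I_{\rm R\mathchar`-ML}$ and all resident lists are strict. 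The only ties I use are in the preference of $x$, which I make a single tie over all residents; this is permitted because the statement forbids ties only in residents' lists.

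Next I would name the two stable matchings realizing the gap. Let $N$ assign $a_i\to h_i$ for every $i$ and $b_1,\dots,b_k\to x$ (leaving $b_{k+1},\dots,b_{n-k}$, if any, unmatched), so that $x$ is full and every $h_i$ is full, giving $s(N)=k+1$. Let $M$ assign $a_1,\dots,a_k\to x$, leaving every $b_j$ unmatched and every $h_i$ empty, giving $s(M)=s_M(x)=1$. Once stability of both is confirmed, $\opt(I)\ge s(N)=k+1$ and $\wst(I)\le s(M)=1$, whence $\frac{\opt(I)}{\wst(I)}\ge k+1=\lfloor n/2\rfloor+1$, which is the claimed bound and matches the upper bound of Theorem~\ref{thm:maxgap_gen_H}.

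The crux, and the only genuinely delicate step, is verifying that both $N$ and $M$ are stable even though every resident strictly prefers $x$ to her $h_i$. In $M$ the hospital $x$ is full; an unmatched $b_j$ cannot block because its only acceptable hospital is the full $x$, which is indifferent among all residents, and an $a_i$ cannot block $h_i$ since it is matched to its strictly preferred $x$; as each $h_i$ is acceptable only to $a_i$, no other pair can block. In $N$ the hospital $x$ is again full, so the residents $a_i$ (and any leftover $b_j$) cannot block $x$: because $x$'s list is one big tie, $x$ does not strictly prefer any of them to its current occupants, and each full $h_i$ admits no blocking pair since only $a_i$ lists it. The design choices that make this reconciliation possible are exactly the cap $u(x)=k$ (so $x$ is full in both matchings and the preferred $a_i$ cannot be bumped in during $N$) and the tie in $x$'s list (so the displaced residents have no strict improvement); together with the $k$ filler residents $b_j$ needed to saturate $x$ in $N$, this forces $2k\le n$ and pins the gap at $\lfloor n/2\rfloor+1$. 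I expect the stability check to be where care is required, since the construction deliberately pushes the two matchings to disagree on every $h_i$ while keeping the popular hospital $x$ saturated in both.
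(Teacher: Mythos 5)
Your proposal is correct and is essentially the paper's own construction: a master-list-first hospital $x$ whose preference is one big tie, $\lfloor n/2\rfloor$ private $[1,1]$ hospitals each wanted by one dedicated resident, and $\lceil n/2\rceil$ filler residents accepting only $x$, with the good matching routing the dedicated residents to their private hospitals and the bad one stuffing them into $x$. The only difference is your choice of quotas $[1,\lfloor n/2\rfloor]$ for $x$ instead of the paper's $[\lceil n/2\rceil,\lceil n/2\rceil]$, which spares you the paper's parity adjustment for odd $n$ but changes nothing substantive.
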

\begin{proof}
Consider an instance $I$ with residents $\{r_1, r_2, \dots, r_{\lfloor \frac{n}{2} \rfloor}\} \cup \{r'_1, r'_2, \dots, r'_{\lceil \frac{n}{2} \rceil}\}$ and hospitals $\{h_1, h_2, \dots, h_{\lfloor \frac{n}{2} \rfloor}, x\}$. 
Preference lists and quotas are as follows, where ``(~~~$R$~~~)'' means a tie consisting of all residents.

\begin{center}
\renewcommand\arraystretch{1.2}
\begin{tabular}{llllllllllllllllll}
$r_{i}$: & $x$ & $h_{i}$ & &\hspace{15mm} & $h_i$ $[1, 1]$: &  $r_{i}$ \hspace{15mm}\\
$r'_{i}$: & $x$ & & &\hspace{15mm} & $x$ $[\lceil\frac{n}{2}\rceil, \lceil\frac{n}{2}\rceil]$: &  (~~~$R$~~~)\\
\end{tabular}
\end{center}

This instance satisfies the conditions in the statement, where any strict order on $H$ with $x$ at the top can be a master list.

Let $N=\set{(r_i, h_i) \mid i=1, 2, \dots, \lfloor \frac{n}{2} \rfloor} \cup \set{(r'_i, x) \mid i=1, 2, \dots, \lceil \frac{n}{2} \rceil}$.
Define $M=\set{(r_i, x) \mid i=1, 2, \dots, \lfloor \frac{n}{2} \rfloor}$ and let $\tilde{M}=M$ if $n$ is even and $\tilde{M}=M \cup \set{(r'_{1}, x)}$ if $n$ is odd.
Then, $s(N)=1+\lfloor \frac{n}{2} \rfloor$ while $s(\tilde{M})=1$. Thus we obtain $\frac{\opt(I)}{\wst(I)}\geq 1+\lfloor \frac{n}{2} \rfloor$.
\switch{\qed}{}\end{proof}
\end{toappendix}

We next show that the approximation factor of our algorithm is $\phi(n)$,
where $\phi$ is a function of $n=|R|$ defined by
\begin{equation*}\label{eq:phi}
\phi(n)=
\begin{cases}
1&n=1,\\
\frac{3}{2}&n=2,\\
\frac{n(1+\lfloor\frac{n}{2}\rfloor)}{n+\lfloor \frac{n}{2}\rfloor}&n\geq 3.
\end{cases}
\end{equation*}
\begin{theorem}\label{thm:general-approximable}
The approximation factor of {\sc Triple Proposal} for the general model satisfies $\app(\I_{\rm Gen})=\phi(n)$.
The same statement holds for the resident-side master list model, i.e., $\app(\I_{\rm R\mathchar`-ML})=\phi(n)$.
\end{theorem}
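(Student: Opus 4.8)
The plan is to prove the two inequalities $\app(\I_{\rm Gen})\le \phi(n)$ and $\app(\I_{\rm Gen})\ge \phi(n)$ separately and then obtain the master-list statement essentially for free. Since $\I_{\rm R\mathchar`-ML}\subseteq \I_{\rm Gen}$, the upper bound $\app(\I_{\rm R\mathchar`-ML})\le \app(\I_{\rm Gen})\le \phi(n)$ is automatic, so it suffices to exhibit the matching lower-bound instance \emph{already inside} $\I_{\rm R\mathchar`-ML}$; this yields $\phi(n)\le \app(\I_{\rm R\mathchar`-ML})\le \app(\I_{\rm Gen})\le \phi(n)$ and hence equality for both models at once.

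For the upper bound I would fix the algorithm's output $M$ and an optimal stable matching $N$, and analyze the bipartite multigraph $(R,H;M\cup N)$ following the template of Theorem~\ref{thm:IL-uniform-alg}. In each connected component set $H_0=\{h:s_N(h)>s_M(h)\}$ (the hospitals on which $N$ strictly beats $M$, necessarily deficient in $M$) and perform the same breadth-first layering into $R_0,H_1,R_1,H_2,R_2$. The decisive input is Lemma~\ref{lem:property2}: together with the stability of $M$ it forces every resident of $R_0\cup R_1$ to be matched in $M$, exactly as in the uniform case. The genuinely new difficulty is that $u(h)/\ell(h)$ is no longer a single constant $\theta$ but varies over hospitals and is only bounded by $n$, so the clean per-component constant $\tfrac{\theta}{2}+1$ must be replaced by an $n$-dependent quantity. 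I would therefore graft the layering onto the \emph{global} counting of Theorem~\ref{thm:maxgap_gen_H}: charge $s(N)$ to the residents $N$ matches, use Lemma~\ref{lem:property} to bound how many residents a deficient hospital can shed to tied hospitals, and use the fact that $R_0\cup R_1$ is matched in $M$ to strengthen the key disjointness inequality of the max-gap proof — morally, the factor $2$ in the bound $2\,|R_2\cap S|\le n$ there is improved to an effective factor $3$ by the presence of these additionally matched residents. Optimizing the resulting inequality over the admissible sizes of the matched and unmatched resident sets, subject to their total being at most $n$, should produce exactly $\tfrac{n(1+\lfloor n/2\rfloor)}{n+\lfloor n/2\rfloor}$, with the two degenerate regimes separated out to give $\phi(1)=1$ and $\phi(2)=\tfrac32$.

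For the lower bound I would build, for each $n$, an instance of the resident-side master-list model together with one admissible execution of {\sc Triple Proposal} whose output $M$ satisfies $\opt/s(M)\ge \phi(n)$. The building block is the length-$5$ augmenting configuration already visible in Proposition~\ref{prop:uniform-approx-tight}: a group of residents that the algorithm can be driven to pile onto a shared bottleneck hospital while their private $[1,1]$ hospitals stay empty, a second group that fills that bottleneck in the optimal matching, and a terminal group whose presence keeps the bottleneck full in $M$ so that the short augmenting paths otherwise ruled out by Lemma~\ref{lem:property2} are blocked. The master list is obtained by placing the bottleneck hospitals at the top in a fixed order. I then verify that the claimed $N$ is stable and optimal, that the described run is legal and produces $M$ with the asserted (small) score, and that $M$ has no length-$3$ augmenting path with respect to $N$, so the run is consistent with Lemma~\ref{lem:property2}.

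I expect the lower-bound gadget to be the main obstacle. A single bottleneck hospital only gives a ratio of roughly $n/4$, because stability forces the optimal matching to refill the bottleneck, so the stability-overhead residents cost as much as the residents that are genuinely spread out. Reaching the full $\phi(n)\sim \tfrac{n+2}{3}$ requires an arrangement in which this overhead is itself reused productively, while simultaneously respecting the tie-breaking rule behind Lemma~\ref{lem:property} (which pushes a resident off a high-$\ell$ hospital back onto any tied low-$\ell$ hospital, thereby refusing to leave the private hospitals empty) and the prohibition of length-$3$ augmenting paths from Lemma~\ref{lem:property2}. Pinning down the extremal balance of the group sizes and matching it exactly to the floor terms in $\phi(n)$ is where the real work lies; by contrast the upper-bound counting, though technical, is a fairly direct hybrid of the arguments already used for Theorems~\ref{thm:IL-uniform-alg} and~\ref{thm:maxgap_gen_H}.
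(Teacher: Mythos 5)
Your overall architecture (prove the upper bound for $\I_{\rm Gen}$, exhibit the tight instance inside $\I_{\rm R\mathchar`-ML}$, and get both equalities at once) matches the paper, and your lower-bound sketch — a bottleneck hospital with tight quotas, private $[1,1]$ hospitals left empty by the algorithm, and a terminal high-$\ell$ hospital absorbing the second group so that Lemma~\ref{lem:property2} is not violated — is essentially the instance of Proposition~\ref{prop:ML-incomplete-tight} (residents $r'_i\colon x\ h_i$ and $r''_i\colon x\ y$ with $x$ of quotas $[\lceil n/2\rceil,\lceil n/2\rceil]$, $y$ of quotas $[n,n]$, and $h_i$ of quotas $[1,1]$). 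You also correctly identify that Lemma~\ref{lem:property2} is the decisive new ingredient over the complete-list analysis.

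The genuine gap is the upper bound. Your plan is to reuse the per-component layering $H_0,R_0,H_1,R_1,H_2,R_2$ of Theorem~\ref{thm:IL-uniform-alg} and splice it into the counting of Theorem~\ref{thm:maxgap_gen_H}, asserting that an ``effective factor $3$'' emerges and that optimizing ``should produce exactly'' $\frac{n(1+\lfloor n/2\rfloor)}{n+\lfloor n/2\rfloor}$. This is not carried out, and there is a structural reason to doubt it can be: the layering argument produces a bound \emph{per connected component} expressed in terms of $\theta=u/\ell$, which in the general model varies over hospitals and is only bounded by $n$, whereas $\phi(n)$ is a single global quantity in $n=|R|$; a component-local estimate of the uniform type cannot see the global budget of $n$ residents, and the paper explicitly notes that the non-uniformity of quotas breaks the counting argument. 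The paper's actual proof abandons the component decomposition entirely: it distributes scores to residents via vectors $p_M,p_N$ with $s(M)=\sum_r p_M(r)$ and $s(N)=\sum_r p_N(r)$, builds the auxiliary bipartite graph on $R$ and a copy $R'$ with edges $\{(r,r') : p_M(r)\ge p_N(r')\}$, and proves (Lemma~\ref{lem:more-than-half}, via injections $\xi_+,\xi_-$ obtained from the stability of $M$ and $N$ and from Lemma~\ref{lem:property}) that this graph has a matching of size at least $\lceil n/2\rceil$; the bound $\phi(n)$ then falls out of a two-case computation ($s(M)\ge 2$ versus $s(M)<2$), and it is only in the second case that Lemma~\ref{lem:property2} is invoked — to show that every resident of $M(h^*)\setminus N(h^*)$ for the unique sufficient hospital $h^*$ is unmatched in $N$, so that $p_N$ contributes at most $1$ on $N(h^*)\cup M(h^*)$ while $p_M$ contributes at least $1$. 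None of this machinery, nor any workable substitute for it, is present in your proposal, so the inequality $\app(\I_{\rm Gen})\le\phi(n)$ remains unproved.
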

We present a full proof of Theorem~\ref{thm:general-approximable} in the Appendix.
It is in large part the same as that of Theorem~7 of Goko et al.~\cite{Goko}, which shows that the approximation factor of {\sc Double Proposal} is $\phi(n)$ in the complete list setting. As shown in the second row of Table~\ref{table1}, {\sc Double Proposal} fails to attain this factor in the current incomplete list setting. To show the approximation factor $\phi(n)$ of {\sc Triple Proposal}, we need to use Lemma~\ref{lem:property2}, i.e., the nonexistence of length-3 augmenting path. Here, we present a proof outline of $\app(\I_{\rm Gen})\leq \phi(n)$. 

Let $M$ be the output of the algorithm and $N$ be an optimal stable matching.
We define vectors $p_M$ and $p_N$ on $R$, which distribute the scores to residents.
For each $h\in H$, among residents in $M(h)$, 
we set $p_M(r)=\frac{1}{\ell(h)}$ for 
$\min\{\ell(h),|M(h)|\}$ residents and $p_M(r)=0$ for the remaining ones.
We also set $p_M(r)=0$ for residents $r$ unmatched in $M$.
Similarly, we define $p_N$ from $N$.
Then, we have $s(M)=\sum_{r\in R} p_M(r)$ and $s(N)=\sum_{r\in R}p_N(r)$.  

In the proof of Theorem 7 in \cite{Goko}, the authors upper bound $\frac{s(N)}{s(M)}$ by comparing the vectors $p_M$ and $p_N$.
Their argument still works for the current incomplete list setting if $s(M)\geq 2$ or there is no resident matched only in $N$.

We need a separate analysis for the remaining case, i.e., the case when \mbox{$s(M)<2$} holds and there exists a resident $r^*$ with $N(r^*)\neq\varnothing= M(r^*)$. Denote $h^*\coloneqq N(r^*)$. 
Since $(r^*, h^*)$ belongs to $E$ and $r^*$ is unmatched in $M$, the stability of $M$ implies that the hospital $h^*$ is full in $M$. By the same argument, for any resident $r$ with $N(r)\neq \varnothing= M(r)$, the hospital $N(r)$ is full in $M$,
which implies $N(r)=h^*$ because there is at most one sufficient hospital in $M$ by $s(M)<2$.
Therefore, any resident $r$ with $N(r)\neq\varnothing= M(r)$ satisfies $r\in N(h^*)$.
Since $h^*$ is the unique sufficient hospital, any resident $r$ with $M(r)\neq \varnothing$ and $p_M(r)=0$ satisfies $r\in M(h^*)$.
Thus, every resident $r$ with $p_M(r)=0$ belongs to $N(h^*)\cup M(h^*)$ unless $r\in S\coloneqq\set{r'\in R|M(r')=N(r')=\varnothing}$. In other words, all residents in $R\setminus (N(h^*)\cup M(h^*)\cup S)$ have positive $p_M$-values.

Since all hospitals other than $h^*$ are deficient in $M$, 
any \mbox{$r\in M(h^*)\setminus N(h^*)$} satisfies $N(r)=\varnothing$, 
because otherwise $(r^*,h^*,r, N(r))$ forms a length-3 $M$-augmenting path, which contradicts Lemma~\ref{lem:property2}.
Therefore, all residents $r$ in \mbox{$M(h^*)\setminus N(h^*)$}, as well as all residents in $S$, satisfy $p_N(r)=0$.
Hence, the summation of $p_N$-values over $N(h^*)\cup M(h^*)\cup S$ is at most $1$, while that of $p_M$-values over $N(h^*)\cup M(h^*)\cup S$ is at least $1$ because $h^*$ is full in $M$.

Combining these observations with others, we upper bound \mbox{$\frac{\sum_{r\in R}p_N(r)}{\sum_{r\in R}p_M(r)}=\frac{s(N)}{s(M)}$}.
\vspace{-1mm}
\begin{toappendix}
\begin{proof}[Proof of Theorem~\ref{thm:general-approximable}]
Here we only show that $\frac{\opt(I)}{\wst(I)}\leq \phi(n)$ holds for any instance $I\in \I_{\rm Gen}$. The other inequality is shown in Proposition~\ref{prop:ML-incomplete-tight} with instances in $\I_{\rm R\mathchar`-ML}$.

Let $M$ be the output of the algorithm and let $N$ be an optimal stable matching.
We define vectors $p_M$ and $p_N$ on $R$, which are distributions of scores of $M$ and $N$ to residents, respectively.
For each hospital $h\in H$, its scores in $M$ and $N$ are 
$s_M(h)=\min\{1,\frac{|M(h)|}{\ell(h)}\}$ and $s_N(h)=\min\{1,\frac{|M(h)|}{\ell(h)}\}$, respectively.
We set $\{p_M(r)\}_{r\in M(h)}$ and $\{p_N(r)\}_{r\in N(h)}$ as follows.
Among residents in $M(h)\cap N(h)$, take $\min\{\ell(h), |M(h)\cap N(h)|\}$ members arbitrarily and set $p_M(r)=p_N(r)=\frac{1}{\ell(h)}$ for them. 
If $|M(h)\cap N(h)|>\ell(h)$, set $p_M(r)=p_N(r)=0$ for the remaining residents in $M(h)\cap N(h)$.
If $|M(h)\cap N(h)|<\ell(h)$, then among $M(h)\setminus N(h)$, 
take $\min\{\ell(h)-|M(h)\cap N(h)|, |M(h)\setminus N(h)|\}$ residents arbitrarily 
and set $p_M(r)=\frac{1}{\ell(h)}$ for them. 
If there still remains a resident $r$ in $M(h)\setminus N(h)$ with undefined $p_M(r)$, 
set $p_M(r)=0$ for all such residents. 
Similarly, define $p_N(r)$ for residents in $N(h)\setminus M(h)$.
We set $p_M(r)=0$ for residents $r$ unmatched in $M$
and set $p_N(r)=0$ for residents $r$ unmatched in $N$.

By definition, for each $h\in H$, we have 
$p_M(M(h))=s_M(h)$ and $p_N(N(h))=s_N(h)$,
where the notation $p_M(A)$ is defined as $p_M(A)=\sum_{r\in A}p_M(r)$ for any $A\subseteq R$ and $p_N(A)$ is defined similarly. 
Since each of $\{M(h)\}_{h\in H}$ and $\{N(h)\}_{h\in H}$ is a subpartition of $R$, we have
\[s(M)=p_M(R),\quad s(N)=p_N(R).\]
Thus, what we have to prove is $\frac{p_N(R)}{p_M(R)}\leq \phi(n)$,
where $n=|R|$.

Note that, for any resident $r\in R$, the condition $M(r)=N(r)$ 
means that $r\in M(h)\cap N(h)$ for some $h\in H$. 
Then, the above construction of $p_M$ and $p_N$ implies the following condition for any $r\in R$,
which will be used in the last part of the proof (in the proof of Claim~\ref{claim:less-than-2}).
\begin{equation}
M(r)=N(r) \implies p_M(r)=p_N(r).\label{eq:implication}
\end{equation}

For the convenience of the analysis below, 
let $R'=\{r'_1,r'_2,\dots,r'_n\}$ be the copy of $R$ and identify $p_N$ as a vector on $R'$. 
Consider a bipartite graph $G=(R,R';F)$, where the edge set $F$ is defined by 
$F=\set{(r_i,r'_j)\in R\times R'|p_M(r_i)\geq p_N(r'_j)}$.
For a matching $X\subseteq F$ 
we denote by $\partial(X)\subseteq R\cup R'$ the set of vertices covered by $X$.
Then, we have $|R\cap\partial(X)|=|R'\cap\partial(X)|=|X|$.
\begin{lemma}\label{lem:more-than-half}
$G=(R,R';F)$ admits a matching $X$ such that $|X|\geq \lceil\frac{n}{2}\rceil$.
Furthermore, in case $s(M)<2$, such a matching $X$ can be chosen so that $p_M(R\cap \partial(X))\geq 1$ holds 
and any $r\in R\setminus \partial(X)$ satisfies $p_M(r)\neq 0$.
\end{lemma}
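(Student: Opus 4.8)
The plan is to reduce both assertions to a single threshold inequality and then read off the refinement in the regime $s(M)<2$. Throughout I identify each copy $r'_j\in R'$ with the resident $r_j\in R$, and for $\nu>0$ I write $A_\nu\coloneqq\{r\in R:p_M(r)\ge \nu\}$ and $B_\nu\coloneqq\{r\in R:p_N(r)\ge \nu\}$. The neighborhoods in $G$ are nested: a copy $r'_j$ is adjacent to exactly $A_{p_N(r'_j)}$. First I would sort $R$ by $p_M$-value and $R'$ by $p_N$-value, both in non-increasing order, and try the explicit \emph{shift} matching that pairs the $i$-th largest $p_M$-resident with the $(i+\lfloor n/2\rfloor)$-th largest $p_N$-copy, for $i=1,\dots,\lceil n/2\rceil$. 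This matching has size $\lceil n/2\rceil$, and one checks that it is contained in $F$ provided $\delta(\nu)\coloneqq |B_\nu|-|A_\nu|\le \lfloor n/2\rfloor$ holds for every $\nu>0$ (by the deficiency form of Hall's theorem this threshold bound is in fact equivalent to the maximum matching of $G$ having size at least $\lceil n/2\rceil$, since $N_G(T)$ is governed by $\min_{r'\in T}p_N(r')$). Thus the first assertion is equivalent to the threshold bound $\delta(\nu)\le\lfloor n/2\rfloor$.

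To prove the threshold bound I would rewrite $|A_\nu|=\sum_{h:\ell(h)\le 1/\nu}\min\{\ell(h),|M(h)|\}$ and similarly for $|B_\nu|$, so that $\delta(\nu)=\sum_{h\in L_\nu}\bigl(\min\{\ell(h),|N(h)|\}-\min\{\ell(h),|M(h)|\}\bigr)$ with $L_\nu\coloneqq\{h:\ell(h)\le 1/\nu\}$. Hospitals that are not deficient in $M$ contribute non-positively, so only hospitals deficient in $M$ matter; for each such hospital the stability of $M$ forces every resident whom $N$ assigns there to be matched in $M$ to a weakly preferred hospital. I would then run a charging argument on the components of $M\cup N$, charging each unit of surplus counted by $\delta(\nu)$ to a resident and using the stability of both $N$ and $M$, together with the implication $M(r)=N(r)\Rightarrow p_M(r)=p_N(r)$ noted above, to pair charged residents with distinct uncharged ones. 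The main obstacle here is making this pairing precise, i.e. showing that the surplus residents and their witnesses are disjoint and jointly fit inside $R$, so that the surplus is at most $\lfloor n/2\rfloor$; this is exactly where the nested alternating structure of $M\triangle N$ does the work.

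For the refinement in the case $s(M)<2$, I would first record the structural facts from the outline: at most one hospital $h^*$ is sufficient in $M$, so every resident with $p_M(r)=0$ is either unmatched in $M$ or a surplus resident of $h^*$, and every $N$-zero copy is either unmatched in $N$ or a surplus copy of $h^*$ in $N$. The crucial point is that an $M$-zero resident $r$ is adjacent in $G$ only to $N$-zero copies, because $p_M(r)=0\ge p_N(r')$ forces $p_N(r')=0$; hence covering all $M$-zero residents amounts to injecting them into the $N$-zero copies. Lemma~\ref{lem:property2} is what supplies this: it forbids an unmatched-in-$M$ resident coexisting with a surplus resident of $h^*$ that is matched in $N$, which would form a length-$3$ $M$-augmenting path through $h^*$, and this is precisely the bound needed to keep the demand on $N$-zero copies below their supply. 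I would therefore modify the shift matching so as to reserve the $N$-zero copies for the $M$-zero residents and to keep the $\ell(h^*)$ counted residents of $h^*$ (whose $p_M$-mass is $s_M(h^*)=1$) among the covered $R$-vertices; covering all $M$-zero residents yields that every $r\in R\setminus\partial(X)$ has $p_M(r)\ne 0$, covering $h^*$'s counted residents yields $p_M(R\cap\partial(X))\ge 1$, and the size bound $|X|\ge\lceil n/2\rceil$ is inherited from the first part. I expect the delicate bookkeeping — realizing the size bound, the full cover of $M$-zero residents, and the cover of $h^*$'s mass by one and the same matching $X$ — to be the main obstacle of this second part, with Lemma~\ref{lem:property2} as the indispensable ingredient.
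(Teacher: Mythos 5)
Your reformulation of the first assertion is correct and genuinely different from the paper's route: since the neighborhood of a copy $r'$ in $G$ is exactly $A_{p_N(r')}$, the deficiency form of Hall's theorem does reduce the existence of a matching of size $\lceil\frac{n}{2}\rceil$ to the family of inequalities $|B_\nu|-|A_\nu|\le\lfloor\frac{n}{2}\rfloor$ for $\nu>0$, and the shift matching then realizes it. (The paper instead builds two explicit $p$-value-preserving injections $\xi_+\colon R_+\to R'$ and $\xi_-\colon R'_-\to R$, adds the identity on the tied residents, and extracts a matching containing at least half of the $n$ edges of the resulting degree-at-most-two multigraph.) The difficulty is that all the combinatorial content of the lemma now lives in the threshold bound, and your proof of that bound is only a sketch whose listed ingredients --- stability of $M$, stability of $N$, and the implication $M(r)=N(r)\Rightarrow p_M(r)=p_N(r)$ --- are provably insufficient. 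Take $n=2$, hospitals $h$ with quotas $[2,2]$ and $h',h''$ with quotas $[1,1]$, and lists $r_1\colon(\,h~h'\,)$, $r_2\colon(\,h~h''\,)$. Both $M=\{(r_1,h),(r_2,h)\}$ and $N=\{(r_1,h'),(r_2,h'')\}$ are stable, yet $p_M(r_i)=\frac{1}{2}<1=p_N(r_i)$, so $F=\emptyset$ and even the first claim of the lemma fails for this $M$ (here $\delta(1)=2>\lfloor\frac{n}{2}\rfloor=1$). Any correct proof must therefore use that $M$ is an output of \textsc{Triple Proposal} and not merely stable; the paper does so via Lemma~\ref{lem:property}, which is precisely what handles residents with $N(r)=_rM(r)$, $N(r)\neq M(r)$ and $p_N(r)>p_M(r)$ (it forces $|M(N(r))|\ge\ell(N(r))$ and thereby supplies the targets of $\xi_-$ in Claim~\ref{claim:injection}). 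Your proposal never invokes Lemma~\ref{lem:property}, so the missing step is not mere bookkeeping: a necessary hypothesis is absent from the argument.

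The second half is closer to the mark: you correctly isolate the unique sufficient hospital $h^*$, the observation that an $M$-zero resident is adjacent only to $N$-zero copies, and the exact role of Lemma~\ref{lem:property2} in forcing $p_N(r)=0$ for every $r\in M(h^*)\setminus N(h^*)$ --- this is the same use the paper makes of it. But ``modify the shift matching so as to reserve the $N$-zero copies'' leaves unproved that a single matching can simultaneously have size at least $\lceil\frac{n}{2}\rceil$, cover all $M$-zero residents, and cover $\ell(h^*)$ residents carrying $p_M$-mass $1$. The paper resolves this cleanly by first constructing a small matching $Y$ with the two coverage properties (a bijection on $N(h^*)\cup M(h^*)$ together with the identity on $S=\{r\mid M(r)=N(r)=\varnothing\}$, split into the three cases of Claim~\ref{claim:less-than-2}) and then growing $Y$ to a maximum matching along augmenting paths, which never uncover a covered vertex; you would be better served adopting that extension step than retrofitting the shift matching. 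You should also cover the configuration in which no resident of $R\setminus S$ is unmatched in $M$ but some matched resident still has $p_M(r)=0$ because $|M(h^*)|>\ell(h^*)$ (the paper's Case~2), which your outline does not address.
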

We postpone the proof of this lemma and now complete the proof of Theorem~\ref{thm:general-approximable}.
There are two cases (i) $s(M)\geq 2$ and (ii) $s(M)<2$.

We first consider the case (i). Assume $s(M)\geq 2$. 
By Lemma~\ref{lem:more-than-half}, there is a matching $X\subseteq F$ such that $|X|\geq \lceil\frac{n}{2}\rceil$.
The definition of $F$ implies $p_M(R\cap\partial(X))\geq p_N(R'\cap\partial(X))$.
We then have $p_N(R')=p_N(R'\cap \partial(X))+p_N(R'\setminus \partial(X))\leq p_M(R\cap \partial(X))+p_N(R'\setminus \partial(X))
=\{p_M(R)-p_M(R\setminus\partial(X))\}+p_N(R'\setminus \partial(X))$,
which implies the first inequality of the following consecutive inequalities, where others are explained below. 
\begin{eqnarray*}
\frac{s(N)}{s(M)}
&=&\frac{p_N(R')}{p_M(R)}
\\
&\leq&
\frac{p_M(R)-p_M(R\setminus\partial(X))+p_N(R'\setminus \partial(X))}{p_M(R)}\\
&\leq&\frac{p_M(R)+|R'\setminus \partial(X)|}{p_M(R)}\\
&\leq&\frac{2+|R'\setminus \partial(X)|}{2}\\
&\leq&\frac{2+\lfloor\frac{n}{2}\rfloor}{2}\\
&\leq&\phi(n).
\end{eqnarray*}
The second inequality uses the facts that $p_M(r)\geq 0$ for any $r\in R$ and $p_N(r')\leq 1$ for any $r'\in R'$.
The third follows from $p_M(R)=s(M)\geq 2$. The fourth follows from $|X|\geq \lceil\frac{n}{2}\rceil$ as it implies
$|R'\setminus \partial(X)|=|R'|-|X|\leq n-\lceil\frac{n}{2}\rceil=\lfloor\frac{n}{2}\rfloor$.
The last one $\frac{2+\lfloor\frac{n}{2}\rfloor}{2}\leq \phi(n)$ can be checked for $n=1,2$ easily and
for $n\geq 3$ as follows:
\[\phi(n)-\frac{2+\lfloor\frac{n}{2}\rfloor}{2}=\frac{n(1+\lfloor\frac{n}{2}\rfloor)}{n+\lfloor \frac{n}{2}\rfloor}-\frac{2+\lfloor\frac{n}{2}\rfloor}{2}
=\frac{\lfloor\frac{n}{2}\rfloor(n-2-\lfloor\frac{n}{2}\rfloor)}{2(n+\lfloor \frac{n}{2}\rfloor)}
=\frac{\lfloor\frac{n}{2}\rfloor(\lceil\frac{n}{2}\rceil-2)}{2(n+\lfloor \frac{n}{2}\rfloor)}\geq 0.\]
Thus, we obtain $\frac{s(N)}{s(M)}\leq \phi(n)$ as required.

We next consider the case (ii). Assume $s(M)<2$. By Lemma~\ref{lem:more-than-half},
then there is a matching $X\subseteq F$ such that $|X|\geq \lceil\frac{n}{2}\rceil$, $p_M(R\cap \partial(X))\geq 1$, 
and $p_M(r)\neq 0$ for any $r\in R\setminus \partial(X)$.
Again, by the definition of $F$, we have $p_M(R\cap\partial(X))\geq p_N(R'\cap \partial(X))$,
which implies the first inequality of the following consecutive inequalities, where the others are explained below. 
\begin{eqnarray*}
\frac{s(N)}{s(M)}
&=&\frac{p_N(R')}{p_M(R)}\\
&\leq&\frac{p_M(R\cap \partial(X))+p_N(R'\setminus \partial(X))}{p_M(R\cap \partial(X))+p_M(R\setminus \partial(X))}\\
&\leq&\frac{p_M(R\cap \partial(X))+|R'\setminus \partial(X)|}{p_M(R\cap \partial(X))+\frac{1}{n}|R\setminus \partial(X)|}\\
&\leq&\frac{1+|R'\setminus \partial(X)|}{1+\frac{1}{n}|R\setminus \partial(X)|}\\
&\leq&\frac{1+\lfloor\frac{n}{2}\rfloor}{1+\frac{1}{n}\lfloor\frac{n}{2}\rfloor}\ = \ \phi(n).
\end{eqnarray*}
The second inequality follows from the facts that $p_N(r')\leq 1$ for any $r'\in R'$ and $p_M(r)\neq 0$ for any $r\in R\setminus \partial(X)$.
Note that $p_M(r)\neq 0$ implies $p_M(r)=\frac{1}{\ell(h)}\geq \frac{1}{n}$ where $h\coloneqq M(r)$.
The third follows from $p_M(R\cap \partial(X))\geq 1$.
The last one follows from $|R'\setminus \partial(X)|=|R\setminus \partial(X)|=n-|X|\leq \lfloor\frac{n}{2}\rfloor$.
Thus, we obtain $\frac{s(N)}{s(M)}\leq \phi(n)$ also for this case.
\switch{\qed}{}\end{proof}
\begin{proof}[Proof of Lemma~\ref{lem:more-than-half}]
To show the first claim of the lemma, 
we intend to construct a matching in $G$ of size at least $\lceil\frac{n}{2}\rceil$.
We need some preparation for this construction.

Partition the set $R$ of residents into three subsets:
\begin{align*}
&R_{+}\coloneqq \set{r\in R|M(r)\succ_{r} N(r)},\\
&R_{-}\coloneqq \set{r\in R|N(r)\succ_{r} M(r) \text{ or }  [M(r)=_{r}N(r), ~p_N(r)>p_M(r)]}, \mbox{ and}\\
&R_{0}~\!\coloneqq \set{r\in R|M(r)=_{r}N(r), ~p_M(r)\geq p_N(r)}.
\end{align*}
Let $R'_+$, $R'_-$, and $R'_0$ be the corresponding subsets of $R'$. 

\begin{\switch{myclaim}{claim}}\label{claim:injection}
There is an injection $\xi_+\colon R_+\to R'$ such that $p_M(r)=p_N(\xi_+(r))$ for every $r\in R_+$.
There is an injection $\xi_-\colon R'_-\to R$ such that $p_N(r')=p_M(\xi_-(r'))$ for every $r'\in R'_-$.
\end{\switch{myclaim}{claim}}
\begin{proof}
We first construct $\xi_+\colon R_+\to R'$. Set $M(R_+)\coloneqq \set{M(r)|r\in R_+}$. 
For each hospital $h\in M(R_+)$, any $r\in M(h)\cap R_+$ satisfies $h=M(r)\succ_{r}N(r)$.
By the stability of $N$, then $h$ is full in $N$. 
Therefore, in $N(h)$, there are $\ell(h)$ residents with $p_N$ value $\frac{1}{\ell(h)}$ 
and $u(h)-\ell(h)$ residents with $p_N$ value $0$.
Since $|M(h)|\leq u(h)$ and $p_M$ values are $\frac{1}{\ell(h)}$ for $\min\{|M(h)|, \ell(h)\}$ residents, we can define
an injection $\xi_+^h\colon M(h)\cap R_+\to N(h)$ such that $p_M(r)=p_N(\xi_+^h(r))$
for every $r\in M(h)\cap R_+$.
By regarding $N(h)$ as a subset of $R'$ and taking the direct sum of $\xi_+^h$ for all $h\in M(R_+)$,
we obtain an injection $\xi_+\colon R_+\to R'$ such that $p_M(r)=p_N(\xi_+(r))$ for every $r\in R_+$.

We next construct $\xi_-\colon R'_-\to R$.
Define $N(R'_-)\coloneqq \set{N(r')|r'\in R'_-}$.
For each $h'\in N(R'_-)$, any resident $r\in N(h')\cap R'_-$ satisfies either 
$h'=N(r)\succ_{r} M(r)$ or $[h'=N(r)=_{r}M(r), ~p_N(r)>p_M(r)]$.
In case that some resident $r$ satisfies $h'=N(r)\succ_{r} M(r)$,
the stability of $M$ implies that $h'$ is full in $M$. 
Then, we can define an injection $\xi_-^{h'}\colon N(h')\cap R'_-\to M(h')$ in the manner we defined $\xi_+^{h}$ above and
$p_N(r')=p_M(\xi_-^{h'}(r'))$ holds for any $r'\in N(h')\cap R'_-$.
We then assume that all residents $r\in N(h')\cap R'_-$ satisfy $[h'=N(r)=_{r}M(r), ~p_N(r)>p_M(r)]$.
Then, all those residents satisfy $0\neq p_N(r)=\frac{1}{\ell(h')}$,
and hence $|N(h')\cap R'_-|\leq \ell(h')$. Note that $N(r)=_{r}M(r)$ implies $M(r)\neq \varnothing$ and let $h\coloneqq M(r)$. 
Then by $p_N(r)>p_M(r)$, we have either $\ell(h')<\ell(h)$ or $p_M(r)=0$, where the latter implies $|M(h)|>\ell(h)$.
As we have $h=_{r}h'$ and $M(r)=h$, by Lemma~\ref{lem:property}, 
each of $\ell(h')<\ell(h)$ and $|M(h)|>\ell(h)$ implies $|M(h')|\geq \ell(h')$,
and hence there are $\ell(h')$ residents whose $p_M$ values are $\frac{1}{\ell(h')}$.  
Since $p_N(r)=\frac{1}{\ell(h')}$ for all residents $r\in N(h')\cap R'_-$, 
we can define an injection $\xi_-^{h'}\colon N(h')\cap R'_-\to M(h')$ such that $p_N(r')=p_M(\xi_-^{h'}(r'))=\frac{1}{\ell(h')}$ for every $r'\in N(h')\cap R'_-$.
By taking the direct sum of $\xi_-^{h'}$ for all $h'\in M(R'_-)$,
we obtain an injection $\xi_-\colon R'_-\to R$ such that $p_N(r')=p_M(\xi_-(r'))$ for every $r'\in R'_-$.
\switch{\qed}{}\end{proof}

We now define a bipartite (multi-)graph.
Let $G^*=(R,R';F^*)$, where $F^*$ is the disjoint union of $F_+$, $F_-$, and $F_0$, which are defined as
\begin{align*}
F_+&\coloneqq \set{(r,\xi_+(r))|r\in R_+},\\
F_-&\coloneqq \set{(\xi_-(r'),r')|r\in R'_-},  \mbox{ and}\\
F_0~\!&\coloneqq \set{(r,r')|r\in R_0 \text{ and $r'$ is the copy of $r$}}.
\end{align*}
See Figure~\ref{fig:graph1-second} for an example (which is taken from \cite{goko_et_al:LIPIcs.STACS.2022.31}).
Note that $F^*$ can have multiple edges between $r$ and $r'$ if $(r,r')=(r,\xi_+(r))=(\xi_-(r'),r')$. 
By the definitions of $\xi_+$, $\xi_-$, and $R_0$, any edge $(r,r')$ in $F^*$ satisfies $p_M(r)\geq p_N(r')$.
Since $F=\set{(r,r')|p_M(r)\geq p_N(r')}$, any matching in $G^*$ is also a matching in $G$.

\begin{figure}[t]
	\begin{center}
		\includegraphics[width=0.43\hsize]{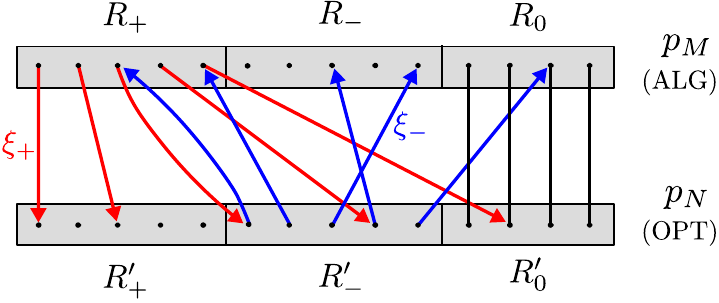}
	\end{center}
	\caption{\small A graph $G^*=(R,R';F^*)$. The upper and lower rectangles represent $R$ and $R'$, respectively. 
	The edge sets $F_+$, $F_-$, and $F_0$
	are respectively represented by downward directed edges, upward directed edges, 
	and undirected edges.}
	\label{fig:graph1-second}
\end{figure}

Then, the following claim completes the first statement of the lemma.
\begin{\switch{myclaim}{claim}}
$G^*$ admits a matching whose size is at least $\lceil\frac{n}{2}\rceil$, and so does $G$.
\end{\switch{myclaim}{claim}}
\begin{proof}
Since $\xi_+\colon R_+\to R'$ and $\xi_-\colon R'_-\to R$ are injections, 
every vertex in $G^*$ is incident to at most two edges in $F^*$ as follows:
Each vertex in $R_+$ (resp., $R'_-$) is incident to exactly one edge in $F_+$ (resp., $F_-$) and at most one edge in $F_-$ (resp., $F_+$).
Each vertex in $R_-$ (resp., $R'_+$) is incident to at most one edge in $F_-$ (resp., $F_+$).
Each vertex in $R_0$ (resp., $R'_0$) is incident to exactly one edge in $F_=$ and at most one edge in $F_-$ (resp., $F_+$).

Since $F^*$ is the disjoint union of $F_+$, $F_-$, and $F_0$, we have $F^*=|F_+|+|F_-|+|F_=|=|R_+|+|R_-|+|R_0|=n$.
As every vertex is incident to at most two edges in $F^*$,
each connected component $K$ of $G^*$ forms a path or a cycle.
In $K$, we can take a matching that contains at least a half of the edges in $K$. 	
(Take edges alternately along a path or a cycle. 
For a path with odd edges, let the end edges be contained.)
The union of such matchings in all components forms a matching
in $G^*$ whose size is at least $\lceil\frac{n}{2}\rceil$.
\switch{\qed}{}\end{proof}
In the rest, we show the second claim of Lemma~\ref{lem:more-than-half}. 
For this purpose, we make the following observation.
Suppose that there is a matching $Y$ in $G$. 
Then, there is a maximum matching $X$ in $G$ such that $\partial(Y)\subseteq \partial(X)$.
This follows from the behavior of the augmenting path algorithm to compute a maximum matching in a bipartite graph (see e.g., \cite{schrijver2003combinatorial}).
In this algorithm, a matching, say $X$, is repeatedly updated to reach the maximum size. 
Throughout the algorithm, $\partial(X)$ is monotone increasing. Therefore, if 
we initialize $X$ by $Y$, it finds a maximum matching with $\partial(Y)\subseteq \partial(X)$.
Additionally, note that $\partial(Y)\subseteq \partial(X)$ implies $p_M(R\cap \partial(Y))\leq p_M(R\cap \partial(X))$
as $p_M$ is an nonnegative vector.
Therefore, the following claim completes the proof of the second claim of Lemma~\ref{lem:more-than-half}.

\begin{\switch{myclaim}{claim}}\label{claim:less-than-2}
If $s(M)<2$, then there is a matching $Y$ in $G$ such that $p_M(R\cap \partial(Y))\geq 1$ holds 
and any $r\in R\setminus \partial(Y)$ satisfies $p_M(r)\neq 0$. 
\end{\switch{myclaim}{claim}}
\begin{proof}
Let $S\coloneqq\set{r\in R|M(r)=N(r)=\varnothing}$ and let $S'$ be the subset of $R'$ corresponding to $S$. Let $Y_1\coloneqq \set{(r,r')\in S\times S'|\text{$r'$ is the copy of $r$}}$.
Because $p_M(r)=p_N(r)=0$ for any $r\in S$, we have $Y_1\subseteq F$ and it is a matching in $G$.

We now show that there is a matching $Y_2\subseteq F$ that is vertex-disjoint from $Y_1$ and satisfies $p_M(R\cap \partial(Y_2))\geq 1$
and $\set{r\in R\setminus S|p_M(r)=0} \subseteq R\cap \partial(Y_2)$.
For such a matching $Y_2$, we see that $Y\coloneqq Y_1\cup Y_2$ forms a matching satisfying the conditions required in the claim.
Therefore, constructing such a matching $Y_2$ completes the proof.
We consider the following three cases.
\begin{description}
\item[Case 1.] Some resident in $R\setminus S$ is unmatched in $M$.
\item[Case 2.] All residents in $R\setminus S$ are matched in $M$ but some $r\in R\setminus S$ satisfies $p_M(r)=0$.
\item[Case 3.] All residents in $R\setminus S$ satisfy $p_M(r)>0$.
\end{description}

{\bf Case 1.} 
Take a resident $r^*\in R\setminus S$ satisfying $M(r^*)=\varnothing$.
As $r^*\not\in S$, resident $r$ is matched in $N$. Denote $h^*=N(r^*)$.
Since $(r^*, h^*)$ belongs to $E$ and $r^*$ is unmatched in $M$, the stability of $M$ implies that the hospital $h^*$ is full in $M$. By the same argument, for any resident $r\in R\setminus S$ with $M(r)=\varnothing$, the hospital $N(r)$ is full,
which implies $N(r)=h^*$ because there is at most one sufficient hospital in $M$ by $s(M)<2$.
Therefore, any resident $r\in R\setminus S$ with $M(r)=\varnothing$ satisfies $r\in N(h^*)$.
Since $h^*$ is the unique sufficient hospital, any resident $r$ with $M(r)\neq \varnothing$ and $p_M(r)=0$ satisfies $r\in M(h^*)$.
Thus, every resident $r\in R\setminus S$ with $p_M(r)=0$ belongs to $N(h^*)\cup M(h^*)$.

Since all hospitals other than $h^*$ are deficient in $M$, 
any $r\in M(h^*)\setminus N(h^*)$ satisfies $N(r)=\varnothing$, 
because otherwise $(r^*,h^*,r, N(r))$ forms a length-3 $M$-augmenting path, which contradicts Lemma~\ref{lem:property2}.
Then,  every $r\in M(h^*)\setminus N(h^*)$ satisfies $p_N(r)=0$.
Hence, in $N(h^*)\cup M(h^*)$,  at most $\ell(h^*)$ residents satisfy $p_N(r)=\frac{1}{\ell(h^*)}$ and all other residents satisfy $p_N(r)=0$.
At the same time, because $h^*$ is full in $M$, at least $\ell(h^*)$ residents satisfy $p_M(r)=\frac{1}{\ell(h^*)}$.
Therefore, there exists a bijection $\xi$ from $N(h^*)\cup M(h^*)$ to its copy in $R'$ such that $p_M(r)\geq p_N(\xi(r))$ for every $r\in N(h^*)\cup M(h^*)$.
By setting $Y_2\coloneqq \set{(r,\xi(r))|r\in N(h^*)\cup M(h^*)}$, we can obtain a required matching.
\medskip

{\bf Case 2.} 
Let $r^*\in R\setminus S$ be a resident with $p_M(r^*)=0$ while $M(r^*)\neq \varnothing$.
Then, $h\coloneqq M(r^*)$ satisfies $|M(h)|>\ell(h)$, and hence $p_M(M(h))=1$.
Since $s(M)<2$, any hospital other than $h$ should be deficient.
Therefore, the value of $p_M$ can be $0$ only for residents in $M(h)$.
\begin{itemize}
\setlength{\itemsep}{2mm}
\medskip
\item If $M(h)\cap R_+\neq \emptyset$, then as shown in the proof of Claim~\ref{claim:injection}, 
$h$ is full in $N$. Then, there is an injection $\xi:M(h)\to N(h)$ such that $p_M(r)=p_N(\xi(r))$ for any $r\in M(h)$.
Hence, $Y_2\coloneqq \set{(r,\xi(r))|r\in M(h)}\subseteq F$ is a matching in $G$ satisfying 
the required conditions.
\item If $M(h)\cap R_-\neq\emptyset$, take any $r\in M(h)\cap R_-$ and set $h'\coloneqq N(r)$.
As shown in the proof of Claim~\ref{claim:injection}, then $|M(h')|\geq \ell(h')$,
i.e., $h'$ is sufficient. 
Note that we have $p_M(r)\neq p_N(r)$ only if $M(r)\neq N(r)$ by the condition \eqref{eq:implication}.
Since $r\in R_-$ implies either $h'\succ_r h$ or $p_N(r)>p_M(r)$, we have $h'\neq h$,
which contradicts the fact that $h$ is the unique sufficient hospital.
Thus, $M(h)\cap R_-\neq\emptyset$ cannot happen.
\item If $M(h)\cap R_+=\emptyset$ and $M(h)\cap R_-=\emptyset$, we have $M(h)\subseteq R_0$.
Then, by setting $Y_2\coloneqq \set{(r,r')|r\in M(h), ~\text{$r'$ is the copy of $r$}}$, we can obtain a required matching.
\end{itemize}

{\bf Case 3.} 
We next consider the case where $p_M(r)\neq 0$ for any $r\in R\setminus S$.
Then, our task is to find a matching $Y_2\subseteq F$ that is vertex-disjoint from $Y_1$ and satisfies $p_M(R\cap \partial(Y_2))\geq 1$.
In this case, for any resident $r\in R$ with $h=M(r)$, we always have $p_M(r)=\frac{1}{\ell(h)}$. 
\begin{itemize}
\setlength{\itemsep}{2mm}
\medskip
\item 
If $R_+\neq \emptyset$, then $M(R_+)\coloneqq \set{M(r)|r\in R_+}\neq \emptyset$.
Since $\sum_{h\in M(R_+)}|M(h)\cap R_+|=|R_+|$ and $\sum_{h\in M(R_+)}|N(h)\cap R_+|\leq |R_+|$, 
there is at least one hospital $h\in M(R_+)$ such that $|M(h)\cap R_+|\geq |N(h)\cap R_+|$.
Let $h$ be such a hospital. Since $h\in M(R_+)$, as shown in the proof of Claim~\ref{claim:injection},
$h$ is full in $N$ and there are $\ell(h)$ residents $r$ with $p_N(r)=\frac{1}{\ell(h)}$.
We intend to show that there are at least $\ell(h)$ residents $r$ with $p_M(r)\geq \frac{1}{\ell(h)}$,
which implies the existence of a required $Y$.
Regard $N(h)$ as a subset of $R'$. 
If there is some $r'\in N(h)\cap R'_-$, as seen in the proof of Claim~\ref{claim:injection},
there are $\ell(h)$ residents $r$ with $p_M(r)=\frac{1}{\ell(h)}$, and we are done.
So, assume $N(h)\cap R'_-=\emptyset$, which implies $N(h)\subseteq R'_+\cup R'_0$.
Since $|M(h)\cap R_+|\geq |N(h)\cap R_+|$, at least $|N(h)\cap R_+|$ residents in $R_+$ belongs to $M(h)$.
As $p_M$ is positive, then at least $|N(h)\cap R_+|$ residents $r\in M(h)$ satisfy $p_M(r)=\frac{1}{\ell(h)}$.
Additionally, by the definition of $F_0$, each $r\in N(h)\cap R_0$ satisfies $p_M(r)\geq p_N(r)$,
where $p_N(r)=\frac{1}{\ell(h)}$ for at least $\ell(h)-|N(h)\cap R_+|$ residents in $N(h)\cap R_0$.
Thus, at least $\ell(h)$ residents $r\in R$ satisfy $p_M(r)\geq \frac{1}{\ell(h)}$.

\item If $R_-\neq \emptyset$, then $N(R_-)\coloneqq \set{N(r)|r\in R_-}\neq \emptyset$.
Similarly to the argument above, 
there is at least one hospital $h\in N(R_-)$ such that $|N(h)\cap R_-|\geq |M(h)\cap R_-|$.
Let $h$ be such a hospital. Since $h\in N(R_-)$, as shown in the proof of Claim~\ref{claim:injection},
$h$ is sufficient in $M$ and there are $\ell(h)$ residents $r$ with $p_M(r)=\frac{1}{\ell(h)}$.
We intend to show that there are at least $\ell(h)$ residents $r$ with $p_N(r)\leq \frac{1}{\ell(h)}$.
If there is some $r\in M(h)\cap R_+$, we are done as in the previous case.
So, assume $M(h)\cap R_+=\emptyset$, which implies $M(h)\subseteq R_-\cup R_0$.
Since $|N(h)\cap R_-|\geq |M(h)\cap R_-|$, at least $|M(h)\cap R_-|$ residents $r$ in $R_-$ belongs to $N(h)$,
and satisfies $p_N(r)\in \{\frac{1}{\ell(h)},0\}$.
Additionally, by the definition of $F_0$, each $r\in M(h)\cap R_0$ satisfies $p_N(r)\leq p_M(r)=\frac{1}{\ell(h)}$.
Thus, at least $\ell(h)$ residents $r\in R$ satisfy $p_N(r)\leq \frac{1}{\ell(h)}$.

\item If $R_+=\emptyset$ and $R_-=\emptyset$, then $R_0=R$ and $F_0$ forms a matching and we have $\partial(F)\cap R=R$.
Since $p_M(r)=\frac{1}{\ell(h)}\geq \frac{1}{n}$ for any $h\in H$ and $r\in M(h)$, we have $p_M(\partial(F)\cap R)\geq 1$.
\end{itemize}
Thus, in any case, we can find a matching with required conditions.
\switch{\qed}{}\end{proof}
Thus we completed the proof of the second claim of the lemma.
\switch{\qed}{}\end{proof}

\begin{proposition}\label{prop:ML-incomplete-tight}
For any natural number $n$, 
there is an instance $I \in \I_{\rm R\mathchar`-ML}$ with $n$ residents such that $\frac{\opt(I)}{\alg(I)}\geq \phi(n)$.
\end{proposition}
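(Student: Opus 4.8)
The plan is to exhibit, for every $n$, an explicit instance $I\in\I_{\rm R\mathchar`-ML}$ together with a stable matching $N$ and an execution of {\sc Triple Proposal} whose output $M$ satisfies $\frac{s(N)}{s(M)}=\phi(n)$; since then $\opt(I)\ge s(N)$ and $\alg(I)=s(M)$, this gives $\frac{\opt(I)}{\alg(I)}\ge\phi(n)$. The cases $n=1$ (the stable matching is unique, so the ratio is $1=\phi(1)$) and $n=2$ are handled separately: for $n=2$ the instance used in the proof of Theorem~\ref{thm:approx_mar} already lies in $\I_{\rm R\mathchar`-ML}$ (take the master list $h_1\succ h_2\succ h_3$) and attains ratio $\tfrac32=\phi(2)$. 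So I would concentrate on $n\ge3$, where $\phi(n)=\frac{n(1+\lfloor n/2\rfloor)}{n+\lfloor n/2\rfloor}$.

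For $n\ge3$ set $k=\lfloor n/2\rfloor$. I would use one \emph{consolidating} hospital $x$ with quotas $[\lceil n/2\rceil,\lceil n/2\rceil]$, one \emph{absorbing} hospital $y$ with quotas $[n,n]$, and $k$ unit hospitals $g_1,\dots,g_k$ with quotas $[1,1]$. There are $k$ residents $a_1,\dots,a_k$ with list ``$x\ g_i$'', $k$ residents $b_1,\dots,b_k$ with list ``$x\ y$'', and, only when $n$ is odd, one extra resident $w$ with list ``$x$''; this totals $n$ residents. Every hospital ranks its applicants as a single tie, and the master list $x\succ y\succ g_1\succ\cdots\succ g_k$ witnesses $I\in\I_{\rm R\mathchar`-ML}$, since each resident's list is exactly its restriction. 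Note that $a_i$ does not list $y$, and that the tie in $x$'s list is precisely what will make $N$ stable.

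The stable matching $N$ assigns each $a_i$ to $g_i$ and all the remaining residents ($b_1,\dots,b_k$, together with $w$ when $n$ is odd) to $x$. Then $x$ and every $g_i$ are full, so $s(N)=1+k$, and $N$ is stable because the only tempting deviation $(a_i,x)$ fails: $x$ is full and ranks $a_i$ in a tie with its current occupants. The output $M$ is the one in which $x$ keeps $\{a_1,\dots,a_k\}$ (and $w$) and rejects every $b_j$; each rejected $b_j$ then proposes to $y$, which accepts it because $|M(y)|<n=\ell(y)$. In $M$ the hospital $x$ is full (score $1$), $y$ holds exactly $k$ residents (score $k/n$), and every $g_i$ is empty, so $s(M)=1+\frac{k}{n}$. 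Hence $\frac{s(N)}{s(M)}=\frac{1+k}{1+k/n}=\frac{n(1+k)}{n+k}=\phi(n)$, as required.

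The step needing the most care—and the main obstacle—is verifying that $M$ is genuinely producible by {\sc Triple Proposal}. Because $\ell(x)=u(x)$ and $x$'s list is a single tie, every proposal reaching a full $x$ forces a rejection, and at Lines~\ref{chosen1} and~\ref{chosen2} the evicted resident may be chosen arbitrarily among equally ranked residents of equal $\state$; as in Proposition~\ref{prop:uniform-approx-tight} and the proof of Theorem~\ref{thm:approx_mar}, I would resolve this arbitrariness so as to always evict a $b_j$ rather than an $a_i$ or $w$. The subtlety is the first-proposal bookkeeping: Line~\ref{reject} compels the algorithm to reject a \emph{not-yet-rejected} resident first, so the $a_i$'s cannot simply be frozen inside $x$, and I must order the (re)proposals and show that once every resident has been rejected once, the second round can be steered to delete exactly $b_1,\dots,b_k$ from $x$. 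Finally, I would observe that all residents are matched in $M$, so Lemma~\ref{lem:property2} imposes no constraint here; this is exactly what makes the consolidation of the $b_j$'s into $y$ compatible with {\sc Triple Proposal}, whereas leaving some $g_i$ deficient together with an unmatched $b_j$ would create a forbidden length-$3$ augmenting path.
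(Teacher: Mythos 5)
Your proposal is correct and takes essentially the same approach as the paper: the paper's instance for $n\ge 3$ likewise uses a consolidating hospital $x$ with quotas $[\lceil n/2\rceil,\lceil n/2\rceil]$, an absorbing hospital $y$ with quotas $[n,n]$, and unit hospitals reached only by half the residents, with ties resolved so that the algorithm keeps the "wrong" residents in $x$ (your extra resident $w$ for odd $n$ plays the same role as the paper's $\lceil n/2\rceil$-th resident of type $r'$). The execution trace you flag as the delicate step does go through exactly as you describe — the forced first-rejection cascade among the $a_i$'s returns them all to $x$, after which the tie-breaking at Line~\ref{chosen2} can evict the $b_j$'s — and the paper handles this by deferring to the analogous argument for {\sc Double Proposal} in the full version of Goko et al.
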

\begin{proof}
The theorem follows from a simple modification of the proof of Proposition~27 in Goko et al.~\cite{Goko}, which shows a lower bound on the approximation factor of {\sc Double Proposal}.
The idea is that if we remove irrelevant hospitals from residents' preference lists, then we can define a master list so that all the resulting residents' preference lists are consistent with it.

The case of $n=1$ is trivial as argued in \cite{Goko}.

For $n=2$, we consider an instance $I$ defined in the last part of the proof of Theorem~\ref{thm:approx_mar}.

For $n\geq 3$, we define $I$ as follows: 
The set of residents is $R=R'\cup R''$, where $R'=\{r'_1,r'_2,\dots,r'_{\lceil\frac{n}{2}\rceil}\}$
and $R''=\{r''_1,r''_2,\dots,r''_{\lfloor\frac{n}{2}\rfloor}\}$, and 
the set of hospital is $H=\{h_1,h_2\dots, h_n\}\cup\{x,y\}$.
The preference lists are given as follows.
Here ``(~~$R$~~)'' represents the tie consisting of all residents
and  ``[~~$R''$~~]'' denotes an arbitrary strict order on $R''$.
\begin{center}
\renewcommand\arraystretch{1.2}
\begin{tabular}{llllllllllllllllll}
$r'_{i}$: & $x$ & $h_i$ & &\hspace{15mm} & $x$ $[\lceil\frac{n}{2}\rceil, \lceil\frac{n}{2}\rceil]$: &  (~~~~~$R$~~~~~) \hspace{15mm}\\
$r''_{i}$: & $x$ & $y$ & &\hspace{15mm} & $y$ $[n, n]$: &  [~~$R''$~~]\\
 & & &  & & $h_i$ $[1, 1]$: & $r'_{i}$ \\
\end{tabular}
\end{center}
Execute {\sc Triple Proposal} for $I$ by prioritizing residents in $R'$ over those in $R''$ whenever an arbitrariness arises. 
That is, if a resident in $R'$ and that in $R''$ both satisfy the condition to be selected as $r'$ at Line~\ref{chosen2}, we choose a resident in $R''$. In case $n=2$, let $r_1$ be prioritized over $r_2$.
By the same arguments as in the proof of Proposition~27 in \cite{Goko}, we obtain $\frac{\opt(I)}{\alg(I)}\geq \phi(n)$.
\switch{\qed}{}\end{proof}
\end{toappendix}

\section{Inapproximability Results}
\switch{We provide inapproximability results for HRT-MSLQ. The first subsection investigates the uniform model and the second one investigates the other models.}
{This section is divided into three subsections.
In the first one, we show inapproximability results for the uniform model using hardness results for the minimum vertex cover problem. 
For other models, i.e., the general, marriage, and $R$-side master list models, the second subsection provides inapproximability results, which follow from previous works.
The third subsection is devoted to the postponed proof of a theorem on the vertex cover problem used in the first subsection.}
\subsection{Uniform Model}
This section shows the following inapproximability results for the uniform model.

\begin{theorem}\label{thm:inapprox-uniform}
HRT-MSLQ in the uniform model with $\theta=\frac{u}{\ell}$ is 
\begin{itemize}
\switch{\setlength{\leftskip}{3mm}}{}
\item[\rm (i)] inapproximable within a factor $\frac{1}{3}\theta+1- \epsilon\theta^2$ for any $\epsilon>0$ under UGC, and
\item[\rm (ii)] inapproximable within a factor $\frac{2-\sqrt{2}}{3}\theta+1-\epsilon\theta^2$ for any $\epsilon>0$ unless P=NP.
\end{itemize}
\end{theorem}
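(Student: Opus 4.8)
The plan is to give a single gap-preserving reduction from minimum vertex cover and then feed it two different cover-hardness inputs, one for part~(i) and one for part~(ii). Given a graph $G=(V,E)$ I would construct a uniform instance $I_G$ with quotas $[\ell,u]$ that generalizes the tight instance of Proposition~\ref{prop:uniform-approx-tight}: each vertex $v$ carries a private hospital together with a bundle of residents whose top choices form a short tie, and each edge $\{u,v\}$ contributes a shared hospital (equivalently a shared top tie) that creates a conflict between its two endpoints. The design goal is that an independent set $S\subseteq V$ corresponds to a stable matching in which the private hospitals of the vertices in $S$ are filled at the \emph{efficient} rate $\ell$ (each contributing a full score of $1$), while every vertex outside $S$ is forced by stability to pile its residents onto shared hospitals that only reach the rate $u$. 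Since satisfying a hospital with $\ell$ residents is $\theta=u/\ell$ times as score-efficient as being forced up to $u$, the per-vertex score ratio between the ``independent'' and ``covered'' configurations is $\Theta(\theta)$, and this is exactly where the linear dependence on $\theta$ in the inapproximability bound comes from.

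Next I would translate the cover/independent-set gap into a score gap. Let $p$ denote the maximum independent-set fraction in the completeness (YES) case and let the soundness (NO) case admit only independent sets of vanishing fraction. Then $\opt(I_G)=B_0+p\,|V|\cdot\Theta(\theta)$ in the YES case and $\opt(I_G)=B_0+o(|V|)\cdot\Theta(\theta)$ in the NO case, where $B_0$ collects the score of the always-satisfiable baseline hospitals. Taking the ratio and normalizing gives an inapproximability factor of the shape $1+\tfrac{2}{3}\,p\,\theta$, the coefficient $\tfrac23$ being the overhead of the (three-part) vertex gadget; the small but strictly positive soundness fraction is what, upon expanding the ratio, degrades the bound by the lower-order term $-\epsilon\theta^2$. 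It then remains to insert the two hardness sources. For part~(i) I invoke the optimal UGC-hardness of vertex cover of Khot and Regev, whose YES case has independent-set fraction $p=\tfrac12$, yielding $1+\tfrac23\cdot\tfrac12\cdot\theta=\tfrac13\theta+1-\epsilon\theta^2$. For part~(ii) I invoke the quantitatively sharp NP-hardness gap established in the third subsection, whose completeness independent-set fraction works out to $p=1-\tfrac{1}{\sqrt2}$, yielding $1+\tfrac23\bigl(1-\tfrac{1}{\sqrt2}\bigr)\theta=\tfrac{2-\sqrt2}{3}\theta+1-\epsilon\theta^2$.

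Two steps demand care, and one of them is the real obstacle. The first is the soundness direction: I must show that when $G$ has no large independent set, \emph{every} stable matching of $I_G$ has small score, i.e. that stability genuinely forces the residents of almost all vertices onto the overfilled shared hospitals. This is a blocking-pair analysis through the edge gadgets, in the same spirit as the uniform-model arguments already in the paper, and should go through with care. The genuinely delicate point is the constant $\tfrac{2-\sqrt2}{3}$ under P$\neq$NP: the reduction needs a vertex-cover gap sharp enough to force the completeness independent-set fraction up to $1-\tfrac{1}{\sqrt2}$ while keeping the soundness fraction $o(1)$, a regime not supplied by an off-the-shelf bound, so the value $\sqrt2$ must emerge from optimizing the parameters of a dedicated cover-hardness argument of Dinur--Safra type. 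That argument is what the third subsection is reserved for, and making its parameters interlock with the gadget so that only an $O(\epsilon\theta^2)$ slack is lost is the heart of the proof.
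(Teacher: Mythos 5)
Your plan follows essentially the same route as the paper: a gap-preserving reduction from vertex-cover hardness in which an uncovered (independent-set) vertex lets its residents be absorbed at the score-efficient rate, gaining an extra $\theta$ per vertex over a baseline of $1.5$ per vertex, and then plugging in the Khot--Regev UGC gap (completeness $\tfrac12$) for part~(i) and a $\sqrt2$-type NP-hard gap for part~(ii); your constants $1+\tfrac23 p\theta$ with $p=\tfrac12$ and $p=1-\tfrac1{\sqrt2}$ reproduce the paper's bounds exactly, including the $-\epsilon\theta^2$ loss coming from the nonzero soundness slack. Three points where your sketch diverges from, or falls short of, what the paper actually does. First, the paper reduces from VC restricted to graphs with a \emph{perfect matching} (VC-PM): the gadget is built per edge of a fixed perfect matching $F^*$ (with $5u$ residents and $2u{+}3$ hospitals per matched edge), not per vertex, and both hardness sources are imported in their VC-PM form; your vertex-based gadget would need its own completeness/soundness proof. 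Second, the soundness direction --- that \emph{every} stable matching induces a vertex cover $C$ with $s(M)\le(1.5+\theta)|V|-\theta|C|$ --- is the technical heart of the paper's argument (Claims~\ref{claim-m1}--\ref{claim-m6}): it requires lexicographic tie-breaking of the inter-gadget preferences and a degree-balancing argument on an auxiliary directed graph over the gadgets to show $\alpha(f)=\beta(f)$; ``a blocking-pair analysis that should go through with care'' understates this, and without it the reduction is not established. Third, the constant $\tfrac{\sqrt2}{2}$ in part~(ii) does not come from re-optimizing a Dinur--Safra-style construction: the paper imports the Khot--Minzer--Safra $(\tfrac{\sqrt2}{2}+\delta)$-vs-$(1-\delta)$ NP-hardness for VC off the shelf and transfers it to VC-PM via the Nemhauser--Trotter decomposition and the Chleb{\'\i}k--Chleb{\'\i}kov{\'a} doubling construction (Theorem~\ref{thm:VC-PM-gap}); that transfer, not a new PCP argument, is what the extra subsection supplies.
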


We prove Theorem~\ref{thm:inapprox-uniform} by extending the idea of Yanagisawa~\cite{Yanagisawa}, who provided inapproximability results for MAX-SMTI by a reduction from the Vertex Cover problem (VC for short). 
VC is a well-studied NP-hard problem that asks to find a minimum size vertex cover in a given graph. 
We denote by VC-PM the restricted version of VC where an input graph is imposed to have a perfect matching. 

Yanagisawa \cite{Yanagisawa} used inapproximability results for VC-PM under UGC \cite{KR08,CC07} and under P$\neq$NP~\cite{DS05} to obtain inapproximability results for MAX-SMTI.
To prove Theorem~\ref{thm:inapprox-uniform}(i), we use the following result as in \cite{Yanagisawa}.

\begin{theorem}[Chleb{\'\i}k and Chleb{\'\i}kov{\'a} \cite{CC07}, Khot and Regev \cite{KR08}]
\label{theorem-vcpm-ugc}
Under UGC, VC-PM is inapproximabile within a factor $2-\epsilon$ for any $\epsilon>0$.
\end{theorem}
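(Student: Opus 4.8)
The plan is to obtain Theorem~\ref{theorem-vcpm-ugc} not in isolation but as a strengthening of the unrestricted UGC-based hardness of Vertex Cover due to Khot and Regev~\cite{KR08}: under the Unique Games Conjecture, for every $\delta>0$ it is NP-hard to distinguish graphs $G$ on $N$ vertices whose minimum vertex cover has size at most $(\tfrac12+\delta)N$ from those whose minimum vertex cover has size at least $(1-\delta)N$, which already yields an inapproximability factor tending to $2$. The task is therefore to verify that this gap survives when the instances are restricted to the class VC-PM of graphs admitting a perfect matching, which is the contribution isolated from Chleb{\'\i}k and Chleb{\'\i}kov{\'a}~\cite{CC07}. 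Letting $\delta\to0$ at the end then delivers the claimed $2-\epsilon$ bound.

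I would first recall the architecture of the Khot--Regev reduction. Starting from a hard instance of Unique Label Cover (whose hardness is exactly the content of UGC), one replaces each variable by a long code, i.e.\ a copy of the Boolean hypercube $\{0,1\}^L$, and joins two codewords by an edge whenever they are ``spread apart'' according to a noisy dictatorship test. In the completeness case a satisfying labelling selects, in each block, a dictator coordinate, and the codewords agreeing with it form an independent set of relative size $\approx\tfrac12$, giving a vertex cover of size $\le(\tfrac12+\delta)N$. In the soundness case one argues, via the influence-decoding (majority-is-stablest) analysis, that any independent set of non-negligible relative size would let one decode a good labelling of the Unique Label Cover instance, contradicting its unsatisfiability; hence the minimum vertex cover has relative size $\ge1-\delta$. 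These two bounds are what produce the factor $2-\epsilon$.

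The remaining, and genuinely delicate, step is to guarantee a perfect matching without eroding the gap. The key point is that the completeness value $\tau\approx N/2$ is exactly the size forced by the existence of a perfect matching, since any perfect matching caps the independence number at $N/2$; so the two requirements are compatible rather than in tension. The cleanest route I would take is to make the long-code gadget vertex-transitive (exploiting the action of coordinate permutations and bit-flips on the hypercube) and of even order, and then invoke the classical fact that a connected vertex-transitive graph of even order possesses a perfect matching; the whole class of hard instances then lies inside VC-PM and the factor-$2$ gap transfers verbatim. I expect the main obstacle to be precisely this structural step: one must arrange the gadget and its weighting so that it is simultaneously (a) symmetric enough to force a perfect matching and (b) still faithful to the influence analysis underlying soundness, since careless symmetrization can either collapse the gap or introduce edges that let a small cover absorb the matching cheaply. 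Once the perfect matching is secured while both the $\le(\tfrac12+\delta)N$ completeness bound and the $\ge(1-\delta)N$ soundness bound are retained up to $O(\epsilon)$, the conclusion follows.
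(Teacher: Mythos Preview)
The paper does not prove Theorem~\ref{theorem-vcpm-ugc} itself; it is quoted as a known result from~\cite{CC07,KR08} and used as a black box. That said, the paper \emph{does} spell out the Chleb{\'\i}k--Chleb{\'\i}kov{\'a} mechanism elsewhere (in the proof of Theorem~\ref{thm:VC-PM-gap}), so one can compare your plan to the actual route taken in~\cite{CC07}.

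Your proposal differs substantially from that route. You propose to open up the Khot--Regev long-code reduction and engineer the gadget to be vertex-transitive of even order, then appeal to the theorem that connected vertex-transitive graphs of even order have perfect matchings. The approach of~\cite{CC07}, by contrast, is a \emph{black-box post-processing} of an arbitrary hard VC instance: apply the Nemhauser--Trotter half-integral LP decomposition $V=V_0\cup V_1\cup V_{1/2}$, discard $V_0\cup V_1$, and replace $G[V_{1/2}]$ by the ``doubled'' graph $\hat G$ on two copies of $V_{1/2}$ with the four-fold edge pattern described in the paper. One then has $\optvc(\hat G)=2\,\optvc(G[V_{1/2}])$, $|\hat V|=2|V_{1/2}|$, and $\hat G$ always admits a perfect matching (namely $\{(v,v')\}$). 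A short calculation (exactly parallel to the paper's proof of Theorem~\ref{thm:VC-PM-gap}) shows that the Khot--Regev gap $(\tfrac12+\delta)$ versus $(1-\delta)$ on $G$ transfers to the same gap on $\hat G$, because in the hard regime $|V_0|$ and $|V_1|$ are both $O(\delta)|V|$.

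Your approach, while conceptually plausible, carries a real technical risk you yourself flag: the Khot--Regev construction uses \emph{biased} product measures on the hypercube, so the resulting graph is weighted and not literally vertex-transitive; converting it to an unweighted vertex-transitive instance while preserving both completeness and the influence-decoding soundness is a nontrivial project, not a routine check. The Nemhauser--Trotter/doubling route sidesteps all of this, is completely generic in the source of VC-hardness, and is the argument the cited references actually contain.
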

To prove Theorem~\ref{thm:inapprox-uniform}(ii), we show the following hardness result for \mbox{VC-PM}, which can be obtained by a recent result for VC \cite{Khot18}, together with the reduction from VC to VC-PM in \cite{CC07}. 
For a graph $G$, we denote by $\optvc(G)$ the size of a minimum vertex cover of $G$.

\begin{theorem}\label{thm:VC-PM-gap}
For any $\epsilon>0$, it is NP-hard for graphs $G=(V,E)$ with perfect matchings to distinguish between the following two cases: 
\begin{itemize}
\switch{\setlength{\leftskip}{3mm}}{}
\item[\rm (1)] $\optvc(G) \leq (\frac{\sqrt{2}}{2}+\epsilon)|V|$.
\item[\rm (2)]$\optvc(G) \geq (1-\epsilon)|V|$. 
\end{itemize}
\end{theorem}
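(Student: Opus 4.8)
The plan is to transfer the Vertex Cover inapproximability gap of Khot, Minzer and Safra~\cite{Khot18} to graphs admitting a perfect matching, using the gap-preserving reduction from VC to VC-PM of Chleb\'{i}k and Chleb\'{i}kov\'{a}~\cite{CC07}. Recall that~\cite{Khot18} yields, for every $\epsilon>0$, the NP-hardness of distinguishing, for a general graph $G=(V,E)$, between the case $\optvc(G)\le(\frac{\sqrt{2}}{2}+\epsilon)|V|$ and the case $\optvc(G)\ge(1-\epsilon)|V|$ (equivalently, that Vertex Cover is NP-hard to approximate within $\sqrt{2}-\epsilon$). This is exactly the gap asserted in Theorem~\ref{thm:VC-PM-gap}, except that the hard instances are arbitrary graphs rather than graphs with a perfect matching. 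So the task is purely to impose the perfect-matching property while keeping the two fractions $\frac{\sqrt{2}}{2}$ and $1$ intact.

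First I would record the reduction of~\cite{CC07} as a black box: in polynomial time it maps a graph $G$ to a graph $\tilde{G}$ that has a perfect matching, and it is gap preserving in the strong sense that the normalized optimum is preserved up to an arbitrarily small additive slack, i.e.\ for any target $\delta>0$ one can guarantee $\frac{\optvc(\tilde{G})}{|V(\tilde{G})|}=\frac{\optvc(G)}{|V(G)|}\pm\delta$. This is precisely the property that, combined with the Khot--Regev gap $(\frac{1}{2}+\epsilon,\,1-\epsilon)$, produces the factor $2-\epsilon$ under UGC quoted in Theorem~\ref{theorem-vcpm-ugc}; the plan is simply to feed the same reduction the Khot--Minzer--Safra instances instead. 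Applying it to an instance $G$ coming from~\cite{Khot18} then produces $\tilde{G}$ with a perfect matching satisfying $\optvc(\tilde{G})\le(\frac{\sqrt{2}}{2}+\epsilon')|V(\tilde{G})|$ in the completeness case and $\optvc(\tilde{G})\ge(1-\epsilon')|V(\tilde{G})|$ in the soundness case, where $\epsilon'\to 0$ as the internal slack $\delta$ and the source parameter $\epsilon$ tend to $0$; renaming $\epsilon'$ as $\epsilon$ gives the statement.

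The only real content is the fraction preservation, and the completeness side is the delicate one. The soundness fraction $1$ is robust: whenever $G$ has a tiny independent set, so does any graph obtained by adding edges to it or by coupling it with small gadgets, so $\optvc(\tilde{G})\ge(1-\epsilon')|V(\tilde{G})|$ survives easily. By contrast, the completeness fraction $\frac{\sqrt{2}}{2}$ is where naive constructions fail: taking two disjoint copies of $G$ joined by the identity perfect matching forces one extra cover vertex for each matching edge whose $G$-endpoints both lie in the independent set, yielding completeness only $\tfrac{1}{2}(1+\tfrac{\sqrt{2}}{2})\approx 0.85$ and collapsing the ratio from $\sqrt{2}$ to about $1.17$. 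The value of the~\cite{CC07} construction is that its matching gadget is coupled to the instance so that, in the completeness case, the added matching can be absorbed into an optimal cover and contributes nothing beyond the preserved fraction. The main obstacle is therefore to track the error parameters through this coupling and to confirm that the construction really outputs a graph with a genuine perfect matching (in particular handling the parity of $|V(\tilde{G})|$), rather than to invent any new idea; the hardness itself is inherited wholesale from~\cite{Khot18}.
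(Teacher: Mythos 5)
Your overall route is the same as the paper's: start from the Khot--Minzer--Safra gap for general Vertex Cover and push it through the Chleb{\'\i}k--Chleb{\'\i}kov{\'a} construction to land on graphs with perfect matchings. The gap is in the black-box property you attribute to that construction. No polynomial-time reduction to graphs with perfect matchings can ``preserve the normalized optimum up to an arbitrarily small additive slack'' for every input graph: a graph with a perfect matching on $2k$ vertices satisfies $\optvc \ge k = \frac{1}{2}|V|$, so any input whose normalized optimum is below $\frac{1}{2}-\delta$ (an edgeless graph, or a small clique padded with isolated vertices) already refutes the property you posit in your first step. What \cite{CC07} actually provide is weaker and conditional, and the paper has to prove the conditional version: one first computes the Nemhauser--Trotter decomposition $V=V_0\cup V_1\cup V_{1/2}$, and the doubled graph $\hat{G}$ is built only from $G[V_{1/2}]$, so its normalized optimum equals $\optvc(G[V_{1/2}])/|V_{1/2}|$, which can be far from $\optvc(G)/|V|$ whenever $V_0$ or $V_1$ is large.

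The missing idea is the case analysis that makes this harmless for the promise problem. Writing $|V_0|=\alpha|V|$ and $|V_1|=\beta|V|$: if $\alpha>\delta$, then $\optvc(G)\le(1-\alpha)|V|<(1-\delta)|V|$ already certifies case (1) of Theorem~\ref{th-hardness-cv-1}, and if $\beta>\frac{\sqrt{2}}{2}+\delta$, then $\optvc(G)\ge|V_1|>(\frac{\sqrt{2}}{2}+\delta)|V|$ already certifies case (2); in either event one decides the source instance directly and never needs $\hat{G}$. Only in the remaining regime $\alpha\le\delta$ and $\beta\le\frac{\sqrt{2}}{2}+\delta$ does one form $\hat{G}$, and there the identity $\optvc(G)=\bigl(\beta+c(1-\alpha-\beta)\bigr)|V|$ with $c=\optvc(\hat{G})/|\hat{V}|$ forces $c\le\frac{\sqrt{2}}{2}+\epsilon$ in the completeness case and $c\ge1-\epsilon$ in the soundness case, after choosing $\delta$ as an explicit function of $\epsilon$. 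Your ``track the error parameters'' step is exactly this computation, but it only goes through because of the preliminary branching, which your write-up does not identify; as stated, your step one asserts a reduction that does not exist. (Your side worries are non-issues: $|\hat{V}|=2|V_{1/2}|$ is automatically even, and the existence of a perfect matching in $\hat{G}$ follows from $\optvc^*(G[V_{1/2}])=\frac{1}{2}|V_{1/2}|$ and is established in \cite{CC07}.)
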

The proof of Theorem~\ref{thm:VC-PM-gap} is postponed to Appendix~\ref{sec:postponed}.

To transform the above two hardness results for VC-PM into those for HRT-MSLQ, 
we provide a method to transform VC-PM instances to 
HRT-MSLQ instances satisfying the following properties. 
\begin{proposition}
\label{prop-hard--1x}
Let $\ell$ and $u$ be positive integers with $\ell \leq u$ and set $\theta=\frac{u}{\ell}$.
There is a polynomial-time reduction from VC-PM instances $G=(V,F)$ to HRT-MSLQ instances $I$ in the uniform model with quotas $[\ell,u]$ such that
\begin{itemize}
\switch{\setlength{\leftskip}{3mm}}{}
\item[\rm (a)] for any stable matching $M$ in $I$, there exists a vertex cover $C$ of $G$ satisfying \switch{}{the condition}
$s(M)\leq (1.5+\theta)|V|-\theta|C|$, and
\item[\rm (b)]  $\opt(I)=(1.5+\theta)|V|-\theta\optvc(G)$.
\end{itemize}
\end{proposition}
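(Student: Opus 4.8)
The plan is to build a gadget-based reduction that generalizes Yanagisawa's VC-PM-to-MAX-SMTI reduction from the case $\ell=u=1$ to arbitrary uniform quotas $[\ell,u]$, reusing the ``spread versus concentrate'' mechanism underlying the tightness instance in Proposition~\ref{prop:uniform-approx-tight}. For each vertex $v\in V$ I would introduce a \emph{vertex gadget} built from $[\ell,u]$-hospitals together with a pool of $u$ residents that can either be \emph{dispersed} over $\theta$ private hospitals of $v$ (each receiving enough residents so that these hospitals collectively earn score $\theta=u/\ell$) or be \emph{concentrated}, freeing these residents to feed hospitals shared with neighbours. Adjacent vertices would share a hospital (one per edge, or per pair of matched vertices) so that stability of a shared hospital forces at least one of its two incident vertices into the concentrated state. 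The perfect matching $F_M$ of $G$ is used to pair vertices and to set up the part of each gadget that is satisfied in \emph{every} stable matching, contributing the fixed term $\tfrac32|V|=3|F_M|$; the variable part is the per-vertex bonus $\theta$, earned exactly when $v$ is dispersed. All hospitals carry quota $[\ell,u]$, so the instance is uniform, and its size is polynomial in $|V|+|F|$.

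Then I would prove the two parts. For part (b) I first establish the easy direction $\opt(I)\ge (1.5+\theta)|V|-\theta\optvc(G)$: starting from a minimum vertex cover $C^{\ast}$, put the gadget of each $v\in C^{\ast}$ in the concentrated state and each $v\notin C^{\ast}$ in the dispersed state, and route the freed residents into the shared hospitals. I would verify stability — the only candidate blocking pairs involve a shared hospital, and since $C^{\ast}$ covers every edge each shared hospital has a concentrated endpoint feeding it, so no blocking pair survives — and then compute the score as $\tfrac32|V|+\theta\,|V\setminus C^{\ast}|=(1.5+\theta)|V|-\theta|C^{\ast}|$. The reverse inequality $\opt(I)\le(1.5+\theta)|V|-\theta\optvc(G)$ is immediate from part (a), since the cover $C$ produced there satisfies $|C|\ge\optvc(G)$.

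For part (a), given an arbitrary stable matching $M$ I would define $C\subseteq V$ to be the set of vertices whose gadget is not fully dispersed in $M$ (equivalently, whose private hospitals do not collectively earn the full bonus $\theta$). The first task is to show $C$ is a vertex cover: for an edge $e=\{v,w\}$ with both $v,w\notin C$, all residents of $v$ and $w$ sit at private hospitals, so the hospital shared across $e$ is deficient; using only the definition of stability (the proposition concerns \emph{all} stable matchings, so I must avoid the algorithm-specific Lemmas~\ref{lem:property}--\ref{lem:property2}), I exhibit a resident who prefers, or is indifferent to and can move into, that shared hospital, producing a blocking pair and a contradiction. The second task is the score bound: I would decompose $s(M)$ into the fixed base, the per-vertex private-hospital bonuses, and the shared-hospital scores, and argue by a local accounting that each dispersed vertex contributes a bonus of at most $\theta$ while every concentrated vertex's freed residents produce only enough shared-hospital score to be offset, so the total bonus is at most $\theta\,|V\setminus C|$; this yields $s(M)\le \tfrac32|V|+\theta\,|V\setminus C|=(1.5+\theta)|V|-\theta|C|$.

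The main obstacle will be the simultaneous tightness required by (a) and (b) under the uniform-quota restriction: unlike the marriage model the objective $\sum_{h}\min\{1,|M(h)|/\ell\}$ is non-linear, so I must engineer the preferences (in particular the ties on shared hospitals and the dispersion pattern of each resident pool) so that the bonus is cleanly all-or-nothing per vertex in an optimal matching, and so that in an \emph{arbitrary} stable matching the freed residents of a concentrated vertex never create score beyond the claimed base — i.e. partial-dispersion configurations can never beat the $(1.5+\theta)|V|-\theta|C|$ bound. Equally delicate will be the stability-implies-cover step: the shared-hospital preferences must be arranged so that a deficient shared hospital is always blockable by a concrete resident, for every stable matching and not merely for the algorithm's output.
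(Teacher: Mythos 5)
Your architecture is the right one and matches the paper's at a high level: gadgets keyed to a perfect matching $F^*$ of $G$, a ``dispersed versus concentrated'' dichotomy per vertex, a fixed base score of $1.5|V|$ from hospitals that are full in every stable matching, a bonus of $\theta$ per dispersed vertex, and the cover defined as the set of non-dispersed vertices. But the two steps you explicitly defer as ``the main obstacle'' and ``equally delicate'' are precisely the mathematical content of part (a), and the plan contains no mechanism to resolve them. The danger in the stability-implies-cover step is concrete: for an edge $(v_i,v_j)\in F\setminus F^*$ with both endpoints dispersed, the resident you want to use as a blocker may be refused because the hospital it would block with is already full of residents arriving from \emph{third} gadgets that the hospital prefers. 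The paper kills this with a specific device you do not have: the inter-gadget acceptable pairs are ordered \emph{lexicographically} on both sides ($\Gamma_{\rm out}$ in Figure~\ref{fig:gadget}), and a conservation claim (Claim~\ref{claim-m5}) shows that in every stable matching the number $\alpha_i$ of residents parked at $v_i$'s private hospitals equals the number $\beta_i$ of dummy residents $a^i_k$ absorbed at $y^i$; its proof follows a directed path in the graph of inter-gadget matched pairs and uses the lexicographic ordering to propagate a preference comparison along the path until a blocking pair appears. This conservation law is also what guarantees that a dispersed vertex ($\alpha_i>0$) leaves a strictly worse resident $a^j_h$ sitting in the neighbouring $y^j$, which is what makes the blocking pair exist. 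Without an analogous global argument, part (a) does not close for arbitrary stable matchings.

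Two further points would sink the construction as literally described. First, $\theta=u/\ell$ need not be an integer, so ``dispersed over $\theta$ private hospitals each receiving enough residents'' does not parse; the paper instead uses $u$ private hospitals $x^i_1,\dots,x^i_u$, each acceptable to a single resident, each contributing $1/\ell$ when occupied, so the bonus is $\alpha_i/\ell\le\theta$ with equality only at full dispersion. Second, giving \emph{every} edge of $F$ its own shared $[\ell,u]$-hospital breaks the score accounting, because any such hospital that receives even one resident adds at least $1/\ell$ to $s(M)$ and the target formula $(1.5+\theta)|V|-\theta|C|$ has no room for these contributions; the paper gives hospitals ($y^i,y^j,z^{ij}$) only to the perfect-matching edges and realizes the remaining edges of $F$ purely as additional acceptable pairs between existing residents and existing hospitals, so that no new score-bearing objects are created.
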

The proof of Proposition~\ref{prop-hard--1x} is presented in Appendix~\ref{sec:postponed2}. 

We now complete the proof of Theorem~\ref{thm:inapprox-uniform} relying on this proposition.
\begin{proof}[\bf {\em Proof of Theorem~\ref{thm:inapprox-uniform}}]
We show the statements (i) and (ii) in Theorem~\ref{thm:inapprox-uniform}, which are inapproximability of HRT-MSLQ under UGC and under P$\neq$NP, respectively.
Let $G$ be a VC-PM instance and $I$ be an HRT-MSLQ instance satisfying the conditions (a) and (b) in Proposition~\ref{prop-hard--1x}.

We first show (i). For any $\epsilon>0$, let $r=\frac{\theta}{3}+1- \epsilon\theta^2$ and 
let $M$ be a stable matching in $I$ with $s(M) \geq \frac{\opt(I)}{r}$. 
By Proposition \ref{prop-hard--1x}(a), 
there exists a vertex cover $C$ of $G$ such that 
$s(M)\leq (1.5+\theta)|V|-\theta|C|$. 
Thus we have 
\begin{eqnarray*}
|C|&\leq &\frac{1.5+\theta}{\theta}|V|-\frac{\opt(I)}{r\theta}\ \leq \ \biggl(1-\frac{1}{r}\biggr)\frac{1.5+\theta}{\theta}|V|+\frac{\optvc(G)}{r}, 
\end{eqnarray*}
where the second inequality follows from $\opt(I)=(1.5+\theta)|V|-\theta\optvc(G)$ by Proposition~\ref{prop-hard--1x}(b). 
This implies that 
\begin{eqnarray*}
\frac{|C|}{\optvc(G)}&\leq &
\frac{1}{r}+
 \biggl(1-\frac{1}{r}\biggr)\frac{1.5+\theta}{\theta}\frac{|V|}{\optvc(G)}\\
 &\leq &   \frac{1}{r}+2
 \biggl(1-\frac{1}{r}\biggr)\frac{1.5+\theta}{\theta} \\ 
 &= &   2+\frac{3}{\theta} -\frac{\theta+3}{r\theta} \\ 
 &\leq & 2-\frac{9\epsilon \theta}{\theta+3} \hspace*{3.5cm}  \textstyle{(\mbox{by }\frac{1}{r}\geq \frac{3(\theta+3+3\epsilon \theta^2)}{(\theta+3)^2})}\\
 &\leq& 2-\epsilon  \hspace*{4.25cm} (\mbox{by }\theta\geq 1), 
\end{eqnarray*}
where the second inequality follows from $\optvc(G)\geq 0.5|V|$, since $G$ has a perfect matching. 
Therefore, by Theorem \ref{theorem-vcpm-ugc}, 
the uniform model of HRT-MSLQ is 
inapproximable within a factor $\frac{\theta}{3}+1- \epsilon\theta^2$ for any positive $\epsilon$ under UGC. 

\medskip

We next show (ii).
 By Proposition \ref{prop-hard--1x}(b), we have 
 $\opt(I)=(1.5+\theta)|V|-\theta\optvc(G)$.  If $\optvc(G)\leq (\frac{\sqrt{2}}{2}+\delta)|V|$ for some positive $\delta$, 
 then  $\opt(I)\geq \frac{1}{2} ((2-\sqrt{2}-2\delta)\theta+ 3)|V|$ holds.  
 On the other hand, if 
 $\optvc(G)\geq (1-\delta)|V|$ for some positive $\delta$, 
 we have   $\opt(I)\leq \frac{1}{2}(3+2\delta\theta)|V|$.  
 
 By Theorem \ref{thm:VC-PM-gap}, 
 it is NP-hard for HRT-MSLQ in the uniform model to approximate within a factor
 \begin{eqnarray*}
 \frac{(2-\sqrt{2}-2\delta)\theta+ 3}{3+2\delta\theta}& > & \frac{2-\sqrt{2}}{3}\theta +1 -2\delta \theta^2, 
 \end{eqnarray*}
where the inequality follows from that $\theta\geq 1$. 
Thus by setting  $\epsilon =2\delta$, 
we obtain the desired inapproximation factor of the problem. 
\switch{\qed}{}\end{proof}

\begin{remark}
By modifying the gadget construction in the proof of \mbox{Proposition~\ref{prop-hard--1x}}, we can improve a lower bound on the approximation factor of the complete list setting of HRT-MSLQ in the uniform model from constant \cite{Goko} to linear in $\theta$.
\end{remark}

\subsection{Other Models}
Here we present some inapproximability results on models other than the uniform model.
The inapproximability results in Table~\ref{table1} for the general and resident-side master list models are consequences of a result in \cite{goko_et_al:LIPIcs.STACS.2022.31} as can be seen from Remark~\ref{rem:ML-inapprox}.

\begin{remark}\label{rem:ML-inapprox}
It is shown in Goko et al.~\cite{goko_et_al:LIPIcs.STACS.2022.31} that, even in the complete list setting, HRT-MSLQ is inapproximable within a factor $n^{\frac{1}{4}-\epsilon}$ for any $\epsilon > 0$ unless P=NP. 
This immediately gives the result in the bottom-left corner of Table~\ref{table1}.
We can modify their proof to obtain the same inapproximability result in the resident-side master list model with incomplete lists.
In the proof of Theorem 10 in \cite{goko_et_al:LIPIcs.STACS.2022.31}, agents denoted by ``$\cdots$'' are added to preference lists in Figure 3 in \cite{goko_et_al:LIPIcs.STACS.2022.31}.
They are actually added only to make the preference lists complete and do not play any role in the reduction.
Actually, removing them still retains the correctness of the reduction, and as a result, we can define a master list from which any resultant resident's preference list can be derived.
The result in the bottom-right corner of Table \ref{table1} is obtained in this manner.
\end{remark}

We then move to the marriage model.
Since the marriage model is a special case of the uniform model where $u=1$ and $\ell$ is either 0 or 1, we can obtain inapproximability results for the marriage model by letting $u = \ell = 1$ (i.e., $\theta=1$) in Theorem~\ref{thm:inapprox-uniform}.

\begin{proposition}\label{prop:marriage-inapprox}
HRT-MSLQ in the marriage model is 
\begin{itemize}
\switch{\setlength{\leftskip}{3mm}}{}
\item[\rm (i)] inapproximable within a factor $\frac{4}{3}- \epsilon$ for any $\epsilon>0$ under UGC, and
\item[\rm (ii)] inapproximable within a factor $\frac{5-\sqrt{2}}{3}-\epsilon$ for any $\epsilon>0$ unless P=NP.
\end{itemize}
\end{proposition}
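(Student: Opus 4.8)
The plan is to obtain Proposition~\ref{prop:marriage-inapprox} as the specialization $\theta=1$ of Theorem~\ref{thm:inapprox-uniform}, so essentially all of the work is already carried out; what remains is to verify that the reduction stays inside the marriage model and then to perform the arithmetic substitution.

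First I would observe that a uniform instance in which every hospital has quota pair $[\ell,u]=[1,1]$ is in particular a marriage-model instance: by definition $\I_{\rm Marriage}$ consists of instances in which each hospital has upper quota $1$ (and hence lower quota in $\{0,1\}$), and $[1,1]$ is an admissible pair. Consequently I would check that the reduction of Proposition~\ref{prop-hard--1x}, applied with $\ell=u=1$, outputs only hospitals with quotas $[1,1]$. The resulting HRT-MSLQ instances therefore lie in $\I_{\rm Marriage}\cap\I_{\rm Uniform}$, so the hardness established for the uniform model at $\theta=1$ holds verbatim for the marriage model; properties~(a) and~(b) of Proposition~\ref{prop-hard--1x} continue to hold with $\theta=1$.

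It then suffices to set $\theta=1$ in the two inapproximability factors of Theorem~\ref{thm:inapprox-uniform}. For~(i), the factor $\frac{1}{3}\theta+1-\epsilon\theta^2$ becomes $\frac{1}{3}+1-\epsilon=\frac{4}{3}-\epsilon$; for~(ii), the factor $\frac{2-\sqrt{2}}{3}\theta+1-\epsilon\theta^2$ becomes $\frac{2-\sqrt{2}}{3}+1-\epsilon=\frac{5-\sqrt{2}}{3}-\epsilon$. Since the error term $\epsilon\theta^2$ collapses to $\epsilon$ at $\theta=1$, the quantifier ``for any $\epsilon>0$'' transfers without modification, yielding exactly the two claimed bounds under UGC and under P$\neq$NP, respectively.

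The only point demanding care---rather than a genuine obstacle---is confirming that no step in the proof of Theorem~\ref{thm:inapprox-uniform} (nor in Proposition~\ref{prop-hard--1x}) silently requires $\theta>1$ or $u>1$. Inspecting the statement of Proposition~\ref{prop-hard--1x}, it is formulated for all positive integers $\ell\le u$, so the degenerate case $\ell=u=1$ is permitted and every inequality in the derivation of Theorem~\ref{thm:inapprox-uniform} remains valid at $\theta=1$. Hence the two bounds follow directly, completing the proof.
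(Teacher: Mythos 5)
Your proposal is correct and is exactly the paper's argument: the paper derives Proposition~\ref{prop:marriage-inapprox} by observing that the marriage model contains the uniform instances with $u=\ell=1$ and substituting $\theta=1$ into the two factors of Theorem~\ref{thm:inapprox-uniform}. Your extra check that the reduction of Proposition~\ref{prop-hard--1x} with $\ell=u=1$ produces only $[1,1]$-hospitals is a sensible (and valid) verification that the paper leaves implicit.
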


\begin{remark}\label{rem:SMTI}
Observe that the marriage model when $\ell=u=1$ is equivalent to MAX-SMTI.
Therefore, Proposition \ref{prop:marriage-inapprox}(ii) implies a lower bound $\frac{5-\sqrt{2}}{3}~(\approx 1.1952)$ on the approximation factor of MAX-SMTI, which improves the best known lower bound 
$\frac{33}{29}~(\approx 1.1379)$\cite{Yanagisawa}. 
We note that the same lower bound can also be obtained by directly applying Theorem \ref{thm:VC-PM-gap} to Yanagisawa's reduction \cite{Yanagisawa}.
We also note that, by the same observation, a lower bound on the approximation factor of MAX-SMTI with one-sided-ties can be improved from $\frac{21}{19}~(\approx 1.1052)$ to $\frac{6-\sqrt{2}}{4}~(\approx 1.1464)$.
\end{remark}

\sloppy
\begin{remark}
As mentioned above, MAX-SMTI is a special case of HRT-MSLQ in the marriage model.
In fact, we can show that the two problems are polynomially equivalent in approximability by the following reduction:
Let $I$ be an instance of HRT-MSLQ in the marriage model, where $R$ and $H$ are respectively the sets of residents and hospitals.
For each $r_{i} \in R$, construct a man $m_{i}$ and for each $h_{j} \in H$, construct a woman $w_{j}$.
A man $m_{i}$'s preference list is that of $r_{i}$ in which a hospital $h_{j}$ is replaced by a woman $w_{j}$.
Similarly, a woman $m_{j}$'s preference list is that of $h_{j}$ in which a resident $r_{i}$ is replaced by a man $m_{i}$.
Furthermore, for each $[0,1]$-hospital $h_{j}$, we construct a man $a_{j}$ who includes only $w_{j}$ in the preference list, and we append $a_{j}$ to $w_{j}$'s preference list as the last choice.
This completes the reduction and let $I'$ be the reduced instance.

For a stable matching $M$ of $I$, define a matching $M'$ of $I'$ as $M'=\set{(m_{i}, w_{j}) \mid (r_{i}, h_{j}) \in M} \cup \set{(a_{j}, w_{j}) \mid h_{j} \mbox{ is a [0,1]-hospital unmatched in } M}$.
It is easy to see that $M'$ is stable.
We can also see that any stable matching of $I'$ can be obtained in this manner because a woman $w_{j}$ corresponding to a $[0,1]$-hospital $h_{j}$ is always matched in a stable matching.
Hence this is a one-to-one correspondence between the stable matchings of $I$ and those in $I'$.
Since the score of $M$ is the same as the size of $M'$, the correctness of the reduction follows.
\end{remark}

\begin{toappendix}
\subsection{Proof of Theorem~\ref{thm:VC-PM-gap}}\label{sec:postponed}
We obtain the NP-hardness for VC-PM claimed in the statement by transforming the following recent result for VC.
\begin{theorem}[Khot, Minzer, and Safra \cite{Khot18}]
\label{th-hardness-cv-1}
For any positive $\delta$, it is NP-hard for graphs $G=(V,E)$ to distinguish between the following two cases: 
\begin{itemize}
\switch{\setlength{\leftskip}{3mm}}{}
\item[\rm (1)] $\optvc(G) \leq (\frac{\sqrt{2}}{2}+\delta)|V|$.
\item[\rm (2)]$\optvc(G) \geq (1-\delta)|V|$. 
\end{itemize}
\end{theorem}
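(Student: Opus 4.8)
The plan is to obtain this two-sided hardness as a consequence of the \emph{$2$-to-$2$ Games Theorem} through a biased long-code gadget, the constant $\frac{\sqrt 2}{2}$ arising from an optimal choice of bias. Reformulating in terms of the complementary (independent-set) parameter, case~(1) asserts an independent set of relative size at least $1-\frac{\sqrt 2}{2}=\frac{2-\sqrt 2}{2}$, while case~(2) asserts that every independent set has relative size at most $\delta$. So it suffices to build, from a single hard CSP, a graph whose complement is \emph{complete} (large independent set) on satisfiable instances and \emph{sound} (tiny independent set) on unsatisfiable ones, with the completeness measure tuned to $\frac{2-\sqrt 2}{2}$.

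First I would take as the engine the $2$-to-$2$ Games Theorem with imperfect completeness: it is NP-hard to distinguish $2$-to-$2$ Label-Cover instances on bipartite variable sets $U\cup W$ over an alphabet $[R]$ that are almost satisfiable from those in which every labeling satisfies only a vanishing fraction of the projection constraints. This is the deep ingredient; its proof proceeds through the analysis of small-expansion sets in the Grassmann graph (the Grassmann agreement/expansion theorem) together with the inverse-shortcode and linearity-testing machinery, and I would treat it as a black box, concentrating instead on the gadget.

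Second, I would build the independent-set instance on the $p$-biased long code with $p=\frac{1}{\sqrt 2}$. Vertices are pairs $(w,x)$ with $w\in W$ and $x\in\{0,1\}^R$, carried by the product measure $\mu_p(x)=p^{|x|}(1-p)^{R-|x|}$; edges encode both the folding dictated by the $2$-to-$2$ projections and the non-intersection relation that renders a single ``dictator'' coordinate independent. The completeness direction is then direct: an honest labeling $\sigma$ produces the dictator set $\{(w,x): x_{\sigma(w)}=0\}$ of measure $1-p=1-\frac{\sqrt 2}{2}$, giving a vertex cover of measure $p+\delta=\frac{\sqrt 2}{2}+\delta$ and establishing case~(1).

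Third---the crux---comes the soundness analysis: I must show that on unsatisfiable instances every independent set has measure at most $\delta$. The argument is a decoding one. Given an independent set $I$ of measure bounded away from $0$, the non-intersection structure forces, in each block indexed by $w$, the restricted function to carry an influential coordinate; decoding $\sigma(w)$ to such a coordinate at random and using the $2$-to-$1$/$2$-to-$2$ structure of the projections produces a labeling satisfying a non-negligible fraction of constraints, contradicting soundness of the $2$-to-$2$ games instance. The heart of this step, and the main obstacle, is ruling out \emph{pseudorandom} large independent sets carrying no influential coordinate; this is exactly where the expansion properties of the Grassmann graph and an invariance-style analysis of the $p$-biased cube are needed, and it is what separates the $\sqrt 2$ threshold here from the weaker unconditional bound of Dinur--Safra~\cite{DS05} and the UGC-conditional bound of Khot--Regev~\cite{KR08}. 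The remaining bookkeeping---verifying that the completeness and soundness measures translate into the stated $\frac{\sqrt 2}{2}+\delta$ versus $1-\delta$ thresholds after absorbing all error terms into $\delta$---is routine.
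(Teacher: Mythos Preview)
The paper does not prove this statement at all: it is quoted verbatim as a result of Khot, Minzer, and Safra \cite{Khot18} and used as a black box in the proof of Theorem~\ref{thm:VC-PM-gap}. So there is no ``paper's own proof'' to compare your proposal against.

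Your sketch is a reasonable high-level outline of how the result is actually obtained in the Khot--Minzer--Safra line of work (the $2$-to-$2$ Games Theorem via Grassmann expansion, combined with a $p$-biased long-code gadget at $p=1/\sqrt{2}$), but for the purposes of this paper none of that is needed: the authors simply invoke the theorem as an external input. If you were writing this paper, the correct ``proof'' here is a one-line citation.
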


Let $G=(V,E)$ be a graph.
It is shown by Nemhauser and Trotter \cite{NT75} that a vertex set $V$
can be partitioned in polynomial time into three subsets $V_0$, $V_1$, and $V_{1/2}$ satisfying the following conditions. Here, $G[W]$ denotes the subgraph of $G$ induced by a subset $W\subseteq V$ and $\optvc^*(G')$ denotes the optimal value of the fractional vertex cover problem for a graph $G'$.
\begin{itemize}
\item $V_0$ is an independent set of $G$ and $V_1$ is the set of all neighbors of $V_0$, 
\item $\optvc^*(G[V_{1/2}]) = \frac{1}{2} |V_{1/2}|$,  and
\item there exists a minimum vertex cover $C$ of $G$ such that $V_1 \subseteq C \subseteq V_1 \cup  V_{1/2}$ and $C \cap V_{1/2}$ is a minimum vertex cover for $G[V_{1/2}]$.

\end{itemize}
By definition, we have $\optvc^*(G[V_{1/2}]) \leq \optvc(G[V_{1/2}])$. 
By making use of such a partition, Chleb{\'\i}k and Chleb{\'\i}kov{\'a}~\cite{CC07} constructed, given a graph $G$, the graph $\hat{G}=(\hat{V}, \hat{E})$ defined by
\begin{eqnarray*}
\hat{V}&=& \set{ v, v' | v \in V_{1/2}} \\
\hat{E}&=&\set{(v_i,v_j), (v'_i,v'_j), (v'_i,v_j), (v_i,v'_j)| (v_i,v_j) \in E_{1/2}}, 
 \end{eqnarray*}
where $v'$ is a copy of $v \in V_{1/2}$ and $E_{1/2}$ denotes the set of edges in $G[V_{1/2}]$. 
Note that $\hat{G}$ can be constructed from $G$ in polynomial time and $|\hat{V}|=2|V_{1/2}|$.
It is shown in \cite{CC07} that $\optvc(\hat{G})=2\optvc(G[V_{1/2}])$ holds
and  $\hat{G}$ has a perfect matching. 

To show the statement of Theorem~\ref{thm:VC-PM-gap}, take an arbitrary but fixed positive value $\epsilon$. Set a positive constant $\delta$ such that 
\[
\delta\leq \min \biggl\{\epsilon, \frac{(2-\sqrt{2})\epsilon}{2 + \sqrt{2}+4\epsilon }\biggr\}.
\]
\begin{\switch{myclaim}{claim}}
\label{claim-1a-}
The problem for graphs $G=(V,E)$ to distinguish between (1) and (2) in Theorem~\ref{th-hardness-cv-1}
can be reduced into the problem for graphs $\hat{G}=(\hat{V},\hat{E})$ (with perfect matchings)  to distinguish between the following two cases:
\begin{itemize}
\switch{\setlength{\leftskip}{3mm}}{}
\item[\rm (1$'$)] $\optvc(\hat{G}) \leq (\frac{\sqrt{2}}{2}+\epsilon)|\hat{V}|$. 
\item[\rm (2$'$)]$\optvc(\hat{G}) \geq (1-\epsilon)|\hat{V}|$. 
\end{itemize}
\end{\switch{myclaim}{claim}}

Recall that $V$ can be decomposed, in polynomial time, into subsets $V_0$, $V_1$, and $V_{1/2}$ satisfying the three conditions mentioned above. Let $\alpha$ and $\beta$ be numbers such that
$|V_0|=\alpha|V|$ and $|V_1|=\beta|V|$. 
By the third condition above, we have $\beta|V|\leq \tau(G)\leq (1-\alpha)|V|$.
If $\alpha>\delta$, then we can conclude that (1) in Theorem \ref{th-hardness-cv-1} holds. 
On the other hand, if $\beta > \frac{\sqrt{2}}{2}+\delta$, then we can conclude that (2) in Theorem \ref{th-hardness-cv-1} holds. 
Hence, in the rest, we assume that $\alpha \leq  \delta$ and  $\beta \leq  \frac{\sqrt{2}}{2}+\delta$. 
Let $c$ be a number with  $\optvc(G[V_{1/2}])=c|V_{1/2}|=c(1-\alpha-\beta)|V|$. Then we have $\optvc(G)=|V_1|+\optvc(G[V_{1/2}])=(\beta+c(1-\alpha-\beta))|V|$.

In case (1) of Theorem \ref{th-hardness-cv-1} holds,
we have $\optvc(G)=(\beta+c(1-\alpha-\beta))|V| \leq (\frac{\sqrt{2}}{2}+\delta) |V|$. This implies 
\begin{eqnarray*}
c&\leq&\frac{\sqrt{2}}{2}+\frac{\frac{\sqrt{2}}{2}\alpha-(1-\frac{\sqrt{2}}{2})\beta+\delta}{1-\alpha-\beta}\\
&\leq&\frac{\sqrt{2}}{2}+\frac{(\sqrt{2}+2)\delta}{
2-\sqrt{2}-4\delta} \ \leq \ \frac{\sqrt{2}}{2}+ \epsilon, 
\end{eqnarray*}
where the second inequality follows from $\alpha \leq  \delta$
and  $0\leq \beta \leq  \frac{\sqrt{2}}{2}+\delta$, while the third one follows from the definition of $\delta$.  
Since we have $\optvc(\hat{G})=2\optvc(\hat{G[V_{1/2}]})=2c|V_{1/2}|=c|\hat{V}|$,
this inequality implies $\optvc(\hat{G}) \leq (\frac{\sqrt{2}}{2}+ \epsilon)|\hat{V}|$, i.e., (1$'$) holds.
\smallskip

In case (2) of Theorem \ref{th-hardness-cv-1} holds,
we have $\optvc(G)=(\beta+c(1-\alpha-\beta))|V| \geq (1-\delta) |V|$. This implies 
\begin{eqnarray*}
c&\geq&\frac{1-\beta-\delta}{1-\alpha-\beta}\ 
= \ 1+\frac{\alpha-\delta}{1-\alpha-\beta}\\
& \geq & 1-\frac{\delta}{1-\beta}\  \geq \ 1-\delta \  \geq \ 1-\epsilon,  
\end{eqnarray*}
where the second, third, and fourth inequalities respectively follow from $\alpha\geq 0$, $\beta\geq 0$, and the definition of $\delta$. 
Therefore, we have 
$\optvc(\hat{G}) =c|\hat{V}|\geq (1- \epsilon)|\hat{V}|$, i.e., (2$'$) holds.

We complete the proof of Claim \ref{claim-1a-}, which together with Theorem~\ref{th-hardness-cv-1} implies Theorem~\ref{thm:VC-PM-gap}. 
\end{toappendix}

\begin{toappendix}
\subsection{Proof of Proposition~\ref{prop-hard--1x}} \label{sec:postponed2}
Let us now prove Proposition~\ref{prop-hard--1x}, which connects VC-PM and HRT-MSLQ.
We explain the construction of $I$ for a given graph $G=(V,F)$ with $V=\{v_1,v_2,\dots, v_n\}$. 
Let $F^*\subseteq F$ be a perfect matching of $G$.
For each edge  $(v_i, v_j)$ in $F^*$, we construct a gadget that consists of $5 u$ residents $\set{a^j_k, b^i_k,c^{ij}_k,b^j_k, a^i_k|k=1,2,\dots, u}$ and $3+2u$ hospitals $\{y^j,z^{ij}, y^i\}\cup \set{x^i_k, x^j_k|k=1,2,\dots,u}$.
Since $I$ have no other resident and hospital and $F^*$ is a perfect matching of $G$, 
$I$ consists of $5u|F^*|=2.5 un$ residents and $(3+2u)\cdot |F^*|=1.5n+un$ hospitals.
We denote by $R$ and $H$ the sets of residents and hospitals of $I$, respectively. 
Define the set $E$ of acceptable resident-hospital pairs in $I$  by $E=\bigcup_{(v_i,v_j) \in F^*}E_{ij} \cup E_{\rm out}$, where 
\begin{eqnarray*}
E_{ij}&=&\set{
(b^i_k, y^j),  (b^j_k, y^i), (c^{ij}_k, z^{ij})
\mid k=1,2,\dots, u}\\
&& \hspace*{.3cm} \cup \ 
\set{(a^p_k, y^p), 
(b^p_k, z^{ij}),  (b^p_k, x^p_k), 
(c^{ij}_k,y^p)
\mid  p \in \{i,j\},  ~k=1,2,\dots, u}\\E_{\rm out}&=&\set{(b^{p}_k, y^{q}), (b^{q}_k, y^{p})| (v_{p}, v_{q})\in F\setminus F^*, ~k=1,2,\dots, u}.
\end{eqnarray*}
We also define $E_{\rm in}=\bigcup_{(v_i,v_j) \in F^*}E_{ij}$. 
By definition,  $E_{\rm in}$ consists of pairs in the same  gadgets, while $E_{\rm out}$ consists of pairs connecting different gadgets. 
We also see that 
$|E_{\rm in}|=11u|F^*|=5.5un$ and 
$|E_{\rm out}|=2u(|F|-|F^*|)=2u|F|-un$. 
The set $E$, together with preference lists is given in Figure \ref{fig:gadget}, 
where 
the quotas are $[\ell, u]$ for all hospitals and $k$ takes values $1,2,\dots, u$.
In the figure, for $i=1, 2,\dots, n$ and $k=1,2,\dots, u$,  ``[$\Gamma_{\rm out}(b^i_k)$]'' denotes a strict preference that arranges the hospitals in
$\set{y_j|(b^i_k, y^j)\in E_{\rm out}}$ in increasing order of indices $j$, and 
for $i=1,2,\dots, n$, ``[$\Gamma_{\rm out}(y^i)$]'' denotes a strict preference that arranges the residents in
$\set{b^j_k|(b^i_k, y^j)\in E_{\rm out}}$ defined by indices $(j,k)$
in lexicographic order of indices $(j,k)$.

\begin{figure}[h]
\begin{center}
\renewcommand\arraystretch{1.4}
\begin{tabular}{llllllllllllllllll}
&$a^j_k$: & $y^j$   &\hspace{10mm} & $x^i_k$ : $b^i_k$&  \\
&$b^i_k$: & $(y^j~z^{ij})~[\Gamma_{\rm out}(b^i_k)]~x^i_k$   &\hspace{10mm} & $y^j$ : $c^{ij}_1\cdots c^{ij}_{u}~~ b^i_1\cdots b^i_u~[\Gamma_{\rm out}(y^j)] ~a^j_1\cdots a^j_u$&  \\
&$c^{ij}_k$: & $z^{ij}~(y^i~y^j)$  &\hspace{10mm} & $z^{ij}$ : $(b^i_1\cdots b^i_u~b^j_1\cdots b^j_u)~~c^{ij}_1\cdots c^{ij}_u$&   \\
&$b^j_k$: & $(y^i~z^{ij})~[\Gamma_{\rm out}(b^j_k)]~x^j_k$  &\hspace{10mm} & $y^i$ : $c^{ij}_1\cdots c^{ij}_{u}~~ b^j_1\cdots b^j_u~[\Gamma_{\rm out}(y^i)] ~a^i_1\cdots a^i_u$&  \\
&$a^i_k$: & $y^i$   &\hspace{10mm} & $x^j_k$ : $b^j_k$&  
\end{tabular}
    \caption{Acceptable resident-hospital pairs and preference lists of $I$}
    \label{fig:gadget}
\end{center}
\end{figure}

We show that this instance $I$ of HRT-MSLQ
satisfies the two conditions (a) and (b) in the theorem.  
We first show that, for any vertex cover $C\subseteq V$ of $G$, we can construct a stable matching $M$ of $I$ with $s(M)=(1.5+\theta)|V|-\theta|C|$. 
This implies $\opt(I)\geq (1.5+\theta)|V|-\theta\optvc(G)$.

Let $C\subseteq V$ be a vertex cover in $G$.
For each edge $(v_i, v_j)\in F^*$, let us define a matching $M_{ij} \subseteq E_{ij}$ in $I$  by 
\begin{equation*}
M_{ij}=\begin{cases}
 \{(b_k^i, y^j), (c^{ij}_k, z^{ij}), (b_k^j, y^i) \mid k=1, \dots , u \} & \mbox{if } C\cap \{v_i,v_j\}=\{v_i,v_j\}\\
 \{(b_k^i, x_k^i), (c^{ij}_k, y^{j}), (b_k^j, z^{ij}), (a_k^i,y^i) \mid k=1, \dots , u \} &\mbox{if }C\cap \{v_i,v_j\}=\{v_j\} \\
 \{(a_k^j,y^j), (b_k^i, z^{ij}), (c^{ij}_k, y^{i}), (b_k^j, x^{j}_k)  \mid k=1, \dots , u \} &\mbox{if }C\cap \{v_i,v_j\}=\{v_i\}, 
\end{cases}
\end{equation*}
and let $M=\bigcup_{(v_i, v_j)\in F^*}M_{ij}$.
Note that $C\cap \{v_i,v_j\}\not=\emptyset$ holds, because $C$ is a vertex cover in $G$. 
It is not difficult to see that a matching $M$ is stable. In fact, no pair in $E_{\rm in}$ blocks $M$. 
To see that no pair in $E_{\rm out}$ blocks $M$, suppose contrary that $(b^i_k, y^j)\in E_{\rm out}$ is a blocking pair of $M$. 
Then $M$ contains pairs $(b^i_k, x_k^i)$ and $(a_k^j,y^j)$.
However, this implies that $C$ contains neither $v_i$ nor $v_j$, which contradicts  that $C$ is a vertex cover in $G$.
Denote by $s(M_{ij})$ the total score of the hospitals in the gadget of $(v_i, v_j)\in F^*$.
Then, $s(M_{ij})=3$ if $C\cap \{v_i,v_j\}=\{v_i,v_j\}$ and $3 +\theta$ otherwise.
Thus, we have $s(M)=3(|C|-0.5|V|)+
(3 +\theta) (|V|-|C|)=(1.5+\theta)|V|-\theta|C|$. 

Let us next show that, for any stable matching $M$ of $I$, we can construct a vertex cover $C$ in $G$ with
$s(M)\leq (1.5+\theta)|V|-\theta|C|$. 
This shows (a) and $\opt(I)\leq (1.5+\theta)|V|-\theta\optvc(G)$, 
which together with the other inequality above implies (b).
In order to prove it, 
we provide several structural properties on a stable matching in $I$.

\begin{\switch{myclaim}{claim}}
\label{claim-m1}
For any stable matching $M$ in $I$ and any $(v_i,v_j)\in F^*$, the following statements~hold. 
\begin{itemize}
\item[\rm (i)]Each of residents $b_k^i$,  $b_k^j$, and $c_k^{ij}$ is assigned to some hospital in $M$ for any $k$. 
\item[\rm (ii)] Each of hospitals $y^i$, $y^j$, and $z^{ij}$ is assigned $u$ residents in $M$. 
\end{itemize}
\end{\switch{myclaim}{claim}}
\begin{proof}
A resident $b^i_k$ is matched in $M$ since otherwise $(b^i_k, x^i_k)$ blocks $M$ while $M$ is stable.
Similarly, $b^i_k$ is matched in $M$. Suppose, to the contrary, that $c^{ij}_k$ is unmatched.
By the stability of $M$, each of $z^{ij}$, $y^i$, $y^j$ is assigned $u$ residents which are no worse  than $c^{ij}$.
Therefore, residents in $\set{c^{ij}_h, b^i_h, b^j_h|h=1,2,\dots,u}\setminus \{c^{ij}_k\}$ should fill $3u$ seats of $z^{ij}$, $y^i$, $y^j$, a contradiction. Thus, the statement (i) is proved. The statement (ii) is shown by a similar argument.
\switch{\qed}{}\end{proof}

\begin{\switch{myclaim}{claim}}
\label{claim-m2}
Let $M$ be a stable matching  in $I$, and let  
$(v_i,v_j)$ be an edge in $F^*$. 
If some resident $b_k^i$ is assigned to $x_k^i$ or $y^p$ with $p \not=j$ in $M$, 
then the following statements hold.
\begin{itemize}
\item[\rm (i)] Residents assigned to $y^j$ in $M$ are contained in $\set{b_h^i, c^{ij}_h | h=1, \dots , u}$. 
\item[\rm (ii)] 
Resident $b_h^j$ is assigned to $y^i$ or $z^{ij}$ in $M$ for any $h$. 
\end{itemize}
\end{\switch{myclaim}{claim}}
\begin{proof}
The hospital $y^j$ is assigned $u$ residents from $c^{ij}_1\cdots c^{ij}_{u}~~ b^i_1\cdots b^i_u$ 
since otherwise $(b^i_k, y^j)$ blocks $M$. Thus, (i) holds.
To show (ii), suppose, to the contrary, that some $b_h^j$ is assigned to neither $y_i$ and $z^{ij}$.
Then, $y^i$ is assigned $u$ residents from $c^{ij}_1\cdots c^{ij}_{u}~~ b^j_1\cdots b^j_u$, since otherwise $(b^j_h, y^i)$ blocks $M$. In addition, $z^{ij}$ is assigned $u$ residents by Claim~\ref{claim-m1}.
Therefore, residents in $\set{c^{ij}_q, b^i_q, b^j_q|q=1,2,\dots,u}\setminus \{b^i_k\}$ should fill $3u$ seats of $z^{ij}$, $y^i$, $y^j$, a contradiction.
\switch{\qed}{}\end{proof}

\begin{\switch{myclaim}{claim}}
\label{claim-m3}
Let $M$ be a stable matching  in $I$, and let  
$(v_i,v_j)$ be an edge in $F^*$. If the hospital $y^i$ is assigned some $a_k^i$ or $b_k^p$ with $p \not=j$ in $M$, 
then the following statements hold.
\begin{itemize}
\item[\rm (i)] Resident $b_h^j$ is assigned to $y^i$ or $z^{ij}$ in $M$ for any $h$. 
\item[\rm (ii)] 
Residents assigned to $y^j$ in $M$ are contained in $\{b_h^i, c^{ij}_h \mid h=1, \dots , u\}$. h
\end{itemize}
\end{\switch{myclaim}{claim}}
\begin{proof}
Note that, $b^p_k$ with $p\neq j$ (if assigned to $y^i$) appears in $[\Gamma_{\rm out}(y^i)]$ in the list of $y^i$,
and hence worse than $b^j_h$ for any $h$.
Then, for any $h$, resident $b^j_h$ is assigned to $y_i$ or $z^{ij}$ since otherwise $(b^j_h, y^i)$ blocks $M$. Thus, (i) holds.
To see (ii), suppose contrary that $y^j$ is assigned a resident not in $\{c^{ij}_1\cdots c^{ij}_{u}~~ b^i_1\cdots b^i_u\}$ .
Then, for any $h$, resident $b^i_h$ is assigned to $y^i$ or $z^{ij}$ since otherwise $(b^i_h, y^j)$ blocks $M$.
By Claim~\ref{claim-m1}, for any $h$, resident $c^{ij}_h$ is assigned to $y^i$, $y^j$, or $z^{ij}$.
Therefore, $3u$ residents in $R\coloneqq\set{c^{ij}_q, b^i_q, b^j_q|q=1,2,\dots,u}$ are all assigned to $y^i$, $y^j$, and $z^{ij}$, while $y^i$ is assigned a resident outside $R$, contradicting that upper quotas of hospitals are all $u$.
\switch{\qed}{}\end{proof}
For a stable matching $M$ in $I$ and an edge $f=(v_i,v_j) \in F^*$, let $\alpha(f)=\alpha_i+\alpha_j$ and $\beta(f)=\beta_i+\beta_j$, where $\alpha_p=|\{(b_k^p,x_k^p) \in M \mid k=1, \dots , u\}|$
and $\beta_p=|\{(a_k^p,y^p) \in M\mid k=1, \dots , u\}|$ for $p=i,j$. 
Then the above claims imply the following property. 

\begin{\switch{myclaim}{claim}}
\label{claim-m4}
Let $M$ be a stable matching in $I$, and let  
$(v_i,v_j)$ be an edge in $F^*$. 
Then, the following statements hold.
\begin{itemize}
\item[\rm (i)]If $\alpha_i$ or $\beta_i$ is positive, then $\alpha_j=\beta_j=0$. 

\item[\rm (ii)] 
 If $\alpha_j$ or $\beta_j$ is positive, then $\alpha_i=\beta_i=0$. 
\end{itemize}
\end{\switch{myclaim}{claim}}
\begin{proof}
Each statement follows from Claims~\ref{claim-m2} and \ref{claim-m3}.
\switch{\qed}{}\end{proof}

In the following,  we strengthen Claim \ref{claim-m4} to 
\[
\mbox{$\alpha_i=\beta_i$, $\alpha_j=\beta_j$, and at most one of them is positive}, 
\]
which is (under Claim \ref{claim-m4}) equivalent to the fact that $\alpha(f)=\beta(f)$ for any edge $f \in F^*$.

For a stable matching $M$ in $I$, let $M_{\rm in}= M \cap E_{\rm in}$ and $M_{\rm out}= M \cap E_{\rm out}$. Let  
$(v_i,v_j)$ be an edge in $F^*$. 
By Claim \ref{claim-m2},   
if $b_k^i$ (for some $k$) is incident to a pair in $M_{\rm out}$, then none of  $b_h^j$ ($h=1, \dots , u$) and $y^j$ is incident to a pair in $M_{\rm out}$. 
Similarly, 
by Claim \ref{claim-m3}, if $y^i$  is incident to a pair in $M_{\rm out}$, then none of  $b_h^j$ ($h=1, \dots , u$) and  $y^j$ is incident to a pair  in $M_{\rm out}$. 
Let us  consider a directed graph $D=(F^*,M^*_{\rm out})$ which is obtained from the graph $(R, H; M_{\rm out})$  
by orienting  all the pairs in $M_{\rm out}$ from  residents to hospitals and  identifying each gadget into a single vertex.
There are parallel arcs between two vertices in $D$ if $M_{\rm out}$ contains multiple pairs connecting the corresponding two gadgets. By analyzing this directed graph $D$, we show the following claim.

\begin{\switch{myclaim}{claim}}
\label{claim-m5}
For a stable matching $M$ in $I$, 
we have $\alpha(f)=\beta(f)$ for any edge $f \in F^*$. 
\end{\switch{myclaim}{claim}}
\begin{proof}
Let $f=(v_i,v_j)$ be an edge in $F^*$. 
By Claim \ref{claim-m4} we assume  without loss of that $\alpha_j=\beta_j=0$. 
Then Claims \ref{claim-m1}, \ref{claim-m2}, and \ref{claim-m3} imply
\[
\alpha_i+|\{(b_k^i,y^p) \in M \mid p \not=j, k=1, \dots , u\}|=
\beta_i+|\{(b_k^p,y^i) \in M \mid p\not=j,  k=1, \dots , u\}|. 
\]
Note that  $|\{(b_k^i,y^p) \in M \mid p \not=j, k=1, \dots , u\}|$ and $|\{(b_k^p,y^i) \in M \mid p\not=j,  k=1, \dots , u\}|$ are  out- and in-degree of vertex $f$ in $D$, respectively.
Thus the statement in the claim holds if and only if any node in the directed graph $D=(F^*,M^*_{\rm out})$ has  the same in- and out-degree.  

Suppose, to the contrary, that $D$ has a vertex whose in-degree and out-degree are different. 
Then there exists a simple directed path $P$ from $f$ to $g$ in $D$ such that $\alpha(f) < \beta(f)$ and $\alpha(g) > \beta(g)$.  
Let $f=(v_{i_0},v_{j_0})$ and $g=(v_{i_d}, v_{j_d})$, where $d$ denotes the length of $P$, and let $(b_{k_0}^{i_0},y^{i_1})$,$(b_{k_1}^{i_1}, y^{i_2}), \dots, (b_{k_{d-1}}^{i_{d-1}}, y^{i_{d}})$ be a path in $(R, H; M_{\rm out})$ corresponding to $P$. 
By  $\alpha(f) < \beta(f)$ and $\alpha(g) > \beta(g)$, 
$(R, H; M_{\rm out})$ also contains 
two pairs $(a_{r}^{i_0},y^{i_0})$ and $(b_{t}^{i_d},x_t^{i_d})$ for some $r$ and $t$. 
Note that hospital $y^{i_0}$ prefers $b_{k_1}^{i_1}$ to $a_{r}^{i_0}$. 
Since $(b_{k_1}^{i_1}, y^{i_0})$ is not  a blocking pair of  $M$, $b_{k_1}^{i_1}$ prefers $y^{i_2}$ to $y^{i_0}$. 
This means that $y^{i_1}$ prefers $b_{k_2}^{i_2}$ to $b_{k_0}^{i_0}$, since preference orders on $E_{\rm out}$ are defined lexicographically.  By repeatedly applying this argument, we can conclude that $y^{i_{d-1}}$ prefers $b_{k_d}^{i_d}$ to $b_{k_{d-2}}^{i_{d-2}}$. 
Note that $(y^{i_{d-1}},  b_{k_{d-2}}^{i_{d-2}})$ and $(b_{t}^{i_d},x_t^{i_d})$ are contained in $M$.  
Since $b_{t}^{i_d}$ prefers $y^{i_{d-1}}$ to $x_t^{i_d}$, 
$M$ contains a blocking pair $(b_{t}^{i_d}, y^{i_{d-1}})$, which contradicts the stablity of $M$. 
This completes the proof. 
\switch{\qed}{}\end{proof}

It follows from Claim \ref{claim-m5} that stable matchings in $I$ provide vertex covers in $G$ as follows.

\begin{\switch{myclaim}{claim}}
\label{claim-m6}
For a stable matching $M$ in $I$, 
$C_M=\{ v_i \in V \mid \alpha_i =0\}$ is a vertex cover in $G$. 
\end{\switch{myclaim}{claim}}
\begin{proof}
For any edge $(i,j) \in F$,  Claim \ref{claim-m4} implies that at least of $i$ and $j$ is contained in $C_M$. 
Let us then consider an edge $(i,j) \in F \setminus F^*$. 
Assume that  $\alpha_i$ and $\alpha_j$ are both positive.  
Then by definition, $M$ contains a pair   $(b^i_k, x_k^i)$ for some $k$.  
Since $\beta_j>0$ follows from Claim \ref{claim-m5},  $M$ also contains  $(a^j_h, y^j)$ for some $h$. 
However, these imply that $(b^i_k, y^j)$ is a blocking pair of $M$, a contradiction.
Hence at least one of $\alpha_i$ and $\alpha_j$ is zero, which completes the lemma. 
\switch{\qed}{}\end{proof}

By Claim \ref{claim-m6}, a stable matching $M$ in $I$ has score 
\begin{eqnarray*}
s(M) &=& \sum_{f \in F^*} \left(3+ \frac{\alpha(f)}{\ell}\right)\\
&\leq & 3|F^*|+ \theta(|V|-|C_M|)\ \leq \  (1.5+\theta)|V|-\theta|C_M|, 
\end{eqnarray*}
where the second inequality follows from the fact that $\alpha(f)\leq u$.
This immediately implies that 
$\opt(I) \leq (1.5+\theta)|V|-\theta\optvc(G)$, which completes the proof of Proposition \ref{prop-hard--1x}. 
\end{toappendix}

\section{Conclusion}
There remain several open questions and future research directions for HRT-MSLQ.
Clearly, it is a major open problem to close a gap between the upper and lower bounds on the approximation factor for each scenario.
Next, as mentioned in Remark \ref{rem:3.1}, {\sc Triple Proposal} is not strategy-proof in general.
It would be an important task to clarify approximability of HRT-MSLQ when we restrict ourselves to strategy-proof (instead of polynomial-time) algorithms.
Finally, in our setting, the score of each hospital is piecewise linear in the number of assignees.
A possible extension is to consider more general functions, such as convex ones.

\subsubsection*{Acknowledgements}
The authors would like to thank the anonymous reviewers for reading the submitted version carefully and giving useful comments, which improved the quality of the paper.
This work was partially supported by the joint project of Kyoto University and Toyota Motor Corporation, titled ``Advanced Mathematical Science for Mobility Society''.
The first author is supported by JSPS KAKENHI Grant Numbers JP20H05967, JP19K22841, and JP20H00609.
The second author is supported by JSPS KAKENHI Grant Number JP20K11677.
The last author is supported by JSPS KAKENHI Grant Number JP18K18004 and JST, PRESTO Grant Number JPMJPR212B.
%
%
%
\bibliographystyle{plain}
\bibliography{main}

\end{document}